%% file: main.tex
\input{preamble}
\input{macros}

\icmltitlerunning{\method: Sketch-Kernel Co-Design for Fast Sparse Sketching on GPUs}

\begin{document}

\twocolumn[
  \icmltitle{\method: Sketch-Kernel Co-Design for Fast Sparse Sketching on GPUs}

  \begin{icmlauthorlist}
    \icmlauthor{Rajat Vadiraj Dwaraknath}{icme}
    \icmlauthor{Sungyoon Kim}{ee}
    \icmlauthor{Mert Pilanci}{ee}
  \end{icmlauthorlist}

  \icmlaffiliation{icme}{Institute for Computational and Mathematical Engineering (ICME), Stanford University}
  \icmlaffiliation{ee}{Department of Electrical Engineering, Stanford University}
  \icmlcorrespondingauthor{Rajat Vadiraj Dwaraknath}{rajatvd@stanford.edu}

  \icmlkeywords{randomized numerical linear algebra, sketching, Johnson--Lindenstrauss, GPU kernels, CUDA}

  \vskip 0.3in
]

\printAffiliationsAndNotice{}

\input{sections/00_abstract}
\input{sections/01_intro}
\input{sections/02_related}
\input{sections/03_background}
\input{sections/04_method}
\input{sections/05_kernel}
\input{sections/06_theory}

\input{sections/07_experiments}
\input{sections/08_limitations}
\input{references}

\clearpage
\appendix
\input{sections/90_appendix_theory}
\input{sections/91_appendix_kernel}
\input{sections/92_appendix_blockrowsketch}
\input{sections/93_appendix_perm_wiring}
\input{sections/94_appendix_experiments_grass}
\input{sections/95_appendix_experiments}

\end{document}

%% file: preamble.tex
\documentclass{article}

\usepackage{microtype}
\usepackage{graphicx}
\usepackage{subcaption}
\usepackage{booktabs}
\usepackage{multirow}
\usepackage{enumitem}
\usepackage{amsmath}
\usepackage{amssymb}
\usepackage{mathtools}
\usepackage{amsthm}
\usepackage{algorithm}
\usepackage{algorithmic}
\usepackage{float}
\usepackage{fancyvrb}
\usepackage{xcolor}
\usepackage{xspace}
\usepackage{hyperref}
\usepackage[capitalize,noabbrev]{cleveref}

\usepackage[preprint]{icml2026}

\makeatletter
\@ifundefined{theHalgorithm}{%
}{%
}
\makeatother

\hypersetup{
  colorlinks=true,
  linkcolor=blue,
  citecolor=blue,
  urlcolor=blue,
}

\theoremstyle{plain}
\newtheorem{theorem}{Theorem}[section]
\newtheorem{lemma}[theorem]{Lemma}
\newtheorem{proposition}[theorem]{Proposition}

\theoremstyle{definition}
\newtheorem{definition}[theorem]{Definition}
\newtheorem{remark}[theorem]{Remark}

\setlist{nosep}

%% file: macros.tex
\DefineVerbatimEnvironment{Code}{Verbatim}{fontsize=\small}

\newcommand{\method}{\textsc{FlashSketch}\xspace}
\newcommand{\sketchfamily}{\textsc{BlockPerm-SJLT}\xspace}

\newcommand{\blockrowsketch}{\textsc{FlashBlockRow}\xspace}

\newcommand{\R}{\mathbb{R}}
\newcommand{\E}{\mathbb{E}}
\newcommand{\Prb}{\mathbb{P}}

\newcommand{\bigO}{\mathcal{O}}

\newcommand{\defeq}{\vcentcolon=}

\newcommand{\din}{d}
\newcommand{\dout}{k}
\newcommand{\npts}{n}

\newcommand{\Br}{\ensuremath{B_r}}
\newcommand{\Bc}{\ensuremath{B_c}}
\newcommand{\Tn}{\ensuremath{T_n}}
\newcommand{\Tk}{\ensuremath{T_k}}
\newcommand{\spar}{\ensuremath{s}}

\newcommand{\geomeanSpeedup}{1.7\xspace}

%% file: sections/00_abstract.tex
\begin{abstract}
  Sparse sketches such as the sparse Johnson--Lindenstrauss transform are a core primitive in randomized numerical linear algebra because they leverage random sparsity to reduce the arithmetic cost of sketching, while still offering strong approximation guarantees.
  Their random sparsity, however, is at odds with efficient implementations on modern GPUs, since it leads to irregular memory access patterns that degrade memory bandwidth utilization.
  Motivated by this tension, we pursue a sketch--kernel co-design approach: we design a new family of sparse sketches, \sketchfamily, whose sparsity structure is chosen to enable \method, a corresponding optimized CUDA kernel that implements these sketches efficiently.
  The design of \sketchfamily introduces a tunable parameter that explicitly trades off the tension between GPU-efficiency and sketching robustness.
  We provide theoretical guarantees for \sketchfamily under the oblivious subspace embedding (OSE) framework, and also analyze the effect of the tunable parameter on sketching quality.
  We empirically evaluate \method on standard RandNLA benchmarks, as well as an end-to-end ML data attribution pipeline called GraSS.
  \method pushes the Pareto frontier of sketching quality versus speed, across a range of regimes and tasks, and achieves a global geomean speedup of roughly $\geomeanSpeedup\times$ over the prior state-of-the-art GPU sketches.
\end{abstract}
\vspace{-2.5em}

%% file: sections/01_intro.tex
\section{Introduction}
\label{sec:intro}

Randomized sketching is a standard mechanism to reduce the dimension of a large collection of vectors while approximately preserving their geometry.
This dimensionality reduction enables running downstream computational algorithms faster, using less memory, and with lower communication costs~\cite{woodruff2014sketching,clarkson2017low,martinsson2020randomized}.
The canonical sketch is the Johnson--Lindenstrauss (JL) transform, which projects $n$ points in $d$ dimensions down to $k=O(\varepsilon^{-2}\log n)$ dimensions while preserving pairwise distances up to a $(1\pm\varepsilon)$ factor~\cite{johnson1984extensions,freksen2021introduction}.
It can be implemented by multiplying the input matrix $A \in \R^{d \times n}$ of $n$ vectors in $\R^d$ by a random matrix $S\in\R^{k\times d}$, where $S_{ij}\sim\mathcal{N}(0,1/k)$.
$S$ is the sketching matrix.
This operation is at the core of classical randomized numerical linear algebra (RandNLA) algorithms for least squares, low-rank approximation, regression, etc. More recently, it has found use in end-to-end ML pipelines such as data attribution~\cite{hu2025grass}.
Sparse JL transforms (SJLT) and related constructions such as OSNAP~\cite{dasgupta2010sparse,nelson2013osnap,kane2014sparser} reduce the arithmetic cost of applying $S$ by imposing sparsity on $S$, while retaining the geometric guarantees that make JL useful.
Fundamentally, the power of sketches originates from their randomness, which leads to strong concentration phenomena, enabling \emph{subspace-oblivious} embedding guarantees.

Modern computing hardware is increasingly heterogeneous, with GPUs playing a central role in large-scale numerical linear algebra and machine learning workloads.
Performing sketching efficiently on GPUs is therefore essential to practically realizing the end-to-end speedups promised by RandNLA algorithms.
Optimizing workloads on GPUs requires careful attention to the memory hierarchy and parallel execution model~\cite{nvidia2011nvidia}, and does not always align with CPU-centric cost models that usually focus on reducing arithmetic complexity.
In particular, sparse matrix multiplication (SpMM) is a well-studied primitive on CPUs, but achieving high performance for SpMM on GPUs is challenging due to irregular memory access patterns and atomic contention \cite{yang2018design}.
A fundamental tension therefore arises:
\begin{center}
  \emph{The very randomness that underpins the theoretical strength of sparse sketches also disrupts the structure required for their efficient implementations on modern hardware.}
\end{center}
This motivates our \textbf{co-design thesis}:
\begin{center}
  \emph{Balancing this tension requires co-designing sparse sketches and their corresponding GPU kernels.}
\end{center}

In this work, we pursue this co-design thesis by designing a sparse sketch with \emph{hardware-informed structured randomness}, allowing us to implement a specialized CUDA kernel, \method, that can better balance the tension between sketch quality and realized speed, pushing the Pareto frontier for sketching on GPUs (see \Cref{fig:pareto_money} for a preview).

\begin{figure}[h]
  \centering
  \includegraphics[width=\linewidth]{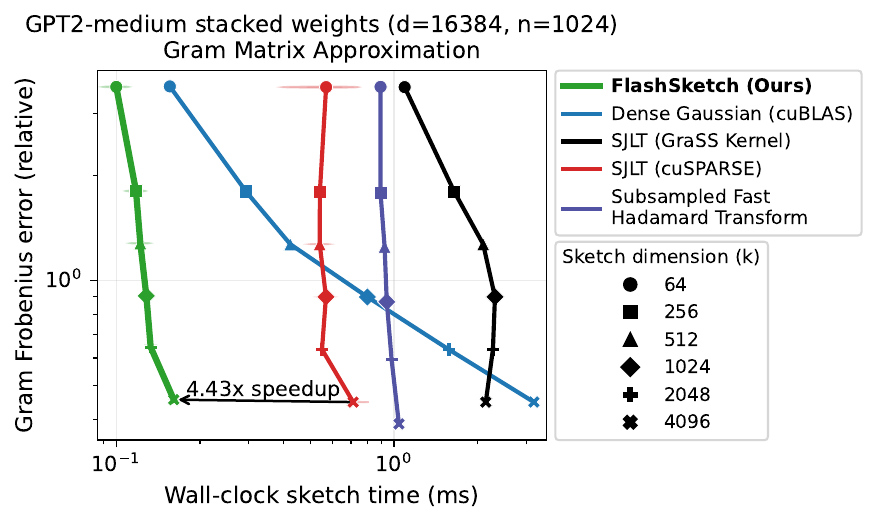}
  \caption{\textbf{Quality--speed Pareto frontier (preview).}
    We compare sketching methods by runtime on an NVIDIA RTX 4090 GPU (x-axis) and quality measured using error in Gram matrix approximation (y-axis).
    \textbf{Lower is better for both axes.}
    \method pushes the Pareto frontier.}
  \label{fig:pareto_money}
  \vspace{-1.0em}
\end{figure}

\paragraph{Key idea: Structured Sparsity for Block Locality and Mixing.}
Our key idea is to restrict the sparsity pattern of SJLT to simultaneously enable block locality and sufficient mixing.
We achieve this by designing a block-structured SJLT in which the block-level sparsity pattern is a union of edge-disjoint permutations.
We call the resulting sketch family \sketchfamily.
Within each selected block, we maintain the standard SJLT sparsity to provide fine-grained mixing.

We implement this sketch with \method, a CUDA kernel that streams tiles of the input $A$ through shared memory, accumulates an output tile privately in shared memory (using shared-memory atomics), and writes each output tile to global memory once.
This kernel avoids global atomic operations during the accumulation phase, which are a major bottleneck in prior GPU SJLT implementations~\cite{hu2025grass,higgins2025high}.
This is enabled by the bi-regular structure of the union-of-permutations block-sparsity graph, which implies that each output block can be assigned to a \emph{single CUDA thread-block} without overlap.
This key kernel optimization enabled by the sketch structure emphasizes our co-design thesis.



\paragraph{Contributions.}
\begin{itemize}
  \item \textbf{A hardware-informed sparse sketch family.}
        In \Cref{sec:sketch}, we introduce a new sparse sketch, \sketchfamily, a restricted SJLT family that preserves block locality while improving mixing via a union-of-permutations block wiring.
  \item \textbf{A specialized CUDA kernel.}
        In \Cref{sec:kernel} we present \method, a thread-block-tiled kernel that applies \sketchfamily with thread-block-local accumulation using shared-memory atomics and a single global write per output tile, as well as techniques for on-the-fly randomness generation.
  \item \textbf{Theory via permutation-coupled localization.}
        In \Cref{sec:theory}, building on localized sketching~\cite{srinivasa2020localized}, we prove an oblivious subspace embedding style guarantee controlled by a \emph{neighborhood-coherence} quantity induced by the permutation wiring.
        Further, under a simplified model of independent random permutations, we show that the number of permutations $\kappa$ controls the neighborhood-coherence, providing a tunable tradeoff between sketch quality and speed.
  \item \textbf{Benchmarks and an ML pipeline.}
        In \Cref{sec:experiments}, we benchmark \method on standard RandNLA tasks and an end-to-end ML pipeline for data attribution called GraSS~\cite{hu2025grass}.
        We compare against strong dense and sparse baselines, and our experiments demonstrate that \method pushes the quality--speed Pareto frontier by achieving a global geomean speedup of roughly $\geomeanSpeedup\times$ while preserving quality over a range of regimes and tasks.
\end{itemize}

We release all our code in the following GitHub repository: \url{https://github.com/rajatvd/flash-sketch-arxiv}.


%% file: sections/02_related.tex
\section{Related work}
\label{sec:related}

\paragraph{Sketching in RandNLA.}
Random projections for dimensionality reduction trace back to the seminal Johnson--Lindenstrauss lemma and are now foundational in RandNLA~\cite{johnson1984extensions,drineas2016randnla,martinsson2020randomized}.
Theoretical guarantees for random projections are often framed in the oblivious subspace embedding (OSE) model developed in~\cite{clarkson2017low}.
A thorough survey of oblivious subspace embeddings and sketching in general can be found in~\cite{woodruff2014sketching}.
Sparse embeddings such as CountSketch, SJLT, and OSNAP reduce application cost by constructing sparse sketching matrices that still possess guarantees under the OSE framework with sparsity requirements~\cite{dasgupta2010sparse,nelson2013osnap,kane2014sparser,charikar2004finding}.

\paragraph{Structured and localized sketches.}
A long line of work designs structured embeddings that trade randomness for fast application, such as the subsampled Randomized Hadamard Transform (SRHT) and related Hadamard-based transforms~\cite{ailon2009fast}.
Another approach to constructing sparse sketches is through the lens of expander graphs, and there is a rich literature on using expander-based constructions for JL and OSE guarantees~\cite{hoory2006expander,jafarpour2009efficient,berinde2008combining}.
In particular,~\cite{puder2015expansion} explores constructions of expanders using union-of-permutations approaches.
Localized sketching \cite{srinivasa2020localized} makes the locality--distortion tradeoff explicit: block-diagonal (or block-local) sketches can be analyzed in terms of how the target subspace distributes across a chosen partition.
They show OSE guarantees for such block-diagonal SJLTs that depend on a natural \emph{block-coherence} parameter of the input subspace.
Our sketch, \sketchfamily, is a generalization of this block-local SJLT construction using a union-of-permutations structure at the block level to improve mixing while retaining block locality.

\paragraph{GPU implementations of sketches.}
For dense sketches, GPUs benefit from mature GEMM kernel libraries like cuBLAS.
Recent work has targeted tensor cores explicitly for random projection and mixed-precision RandNLA~\cite{ootomo2023mixed,carson2025mixed}.
On the sparse side, kernels are less mature than their dense counterparts.
\cite{hu2025grass} present an open-source CUDA implementation of SJLT.
\cite{higgins2025high} develop a high-performance CountSketch kernel for streaming data applications in HPC, though do not release code.
Both implementations follow the \emph{scatter-add} pattern and rely on global atomics to handle concurrent writes to the same output row.
For Hadamard-based sketches, the state of GPU kernels is more mature, with \cite{agarwal2024hadacore} providing an open source tensor-core-native implementation of the Fast Hadamard Transform for 16-bit precision, and \cite{daoailab_fast_hadamard_transform} providing a general FHT kernel for FP32.
A distributed block SRHT implementation appears in \cite{balabanov2022block}.



%% file: sections/03_background.tex
\section{Background}
\label{sec:background}

\subsection{Notation}
We write $A\in\R^{\din\times\npts}$ for the input matrix and $S\in\R^{\dout\times\din}$ for a random sketch, with input dimension $\din$, sketch dimension $\dout$, and number of vectors $\npts$.
The sketched output is $Y \defeq SA \in \R^{\dout\times\npts}$.
We use $\|\cdot\|_2$ for the spectral norm and $\|\cdot\|_F$ for the Frobenius norm.

\subsection{Johnson--Lindenstrauss and Oblivious Subspace Embeddings (OSEs)}
The Johnson--Lindenstrauss (JL) lemma states that for \emph{any} set of $n$ points $\{x_i\}_{i=1}^n\subset\R^{\din}$ and distortion $\varepsilon\in(0,1)$, a random projection into dimension $\dout=\bigO(\varepsilon^{-2}\log n)$ preserves all pairwise distances up to $(1\pm\varepsilon)$ multiplicative factors with high probability~\cite{johnson1984extensions,freksen2021introduction}.
For this work, we focus on the \emph{oblivious subspace embedding} (OSE) model, which demands guarantees for all vectors in an arbitrary $\npts$-dimensional subspace simultaneously~\cite{clarkson2017low,woodruff2014sketching}.

\begin{definition}[OSE]
  A distribution over $S\in\R^{\dout\times\din}$ is an $(\varepsilon,\delta,\npts)$ OSE if for all orthonormal $U\in\R^{\din\times \npts}$,
  \begin{equation}
    \Prb\Big[\;\|U^\top S^\top S U - I_{\npts}\|_2 \le \varepsilon\;\Big] \ge 1-\delta.
    \label{eq:ose_def}
  \end{equation}
\end{definition}

OSEs are a useful model for sketching because they yield strong theoretical guarantees for downstream tasks like least squares and regression~\cite{clarkson2017low,martinsson2020randomized}.
The canonical example is a dense Gaussian sketch with $S_{ij}\sim\mathcal{N}(0,1/\dout)$, which is an $(\varepsilon,\delta,\npts)$ OSE with sketch dimension $\dout=\bigO(\varepsilon^{-2}(\npts+\log(1/\delta)))$.
Similarly, dense Rademacher sketches with $S_{ij}\sim\mathrm{Unif}\{\pm 1/\sqrt{\dout}\}$ are also OSEs with similar parameters~\cite{woodruff2014sketching}.

\subsection{Sparse JL Transforms and OSNAP}
Sparse sketches aim to reduce the cost of applying $S$ to $A$ by imposing sparsity structure on $S$.
The canonical example is the sparse Johnson--Lindenstrauss transform (SJLT), where each column of $S$ contains a small, fixed number of nonzeros, $s$ (usually chosen to be Rademacher variables with scale $1/\sqrt{s}$)~\cite{dasgupta2010sparse,kane2014sparser}.
OSNAP further refines the construction and analysis to obtain sparse OSEs that enable input-sparsity-time algorithms~\cite{nelson2013osnap,clarkson2017low}.
SJLTs and OSNAP are also OSEs for sufficiently large sparsity $s=\bigO(\varepsilon^{-1}\log(\npts/\delta))$ and sketch dimension $\dout=\bigO(\varepsilon^{-2}\npts\log(\npts/\delta))$.

\subsection{Localized Sketching}
Localized sketching \cite{srinivasa2020localized} further imposes structure on sketches by studying block-diagonal JLTs.
This is an alternative route to sparsity for reducing application cost.
Localized sketches also have OSE guarantees, but the required sketch dimension now depends on a \emph{block-coherence} parameter of the input subspace.
\begin{definition}[Block coherence~\cite{srinivasa2020localized}]
  \label{def:block_coherence}
  Let $U\in\R^{d\times r}$ have orthonormal columns.
  Partition the rows into $M$ contiguous blocks of size $\Bc=d/M$, and write
  $U=[U^{(1)};\ldots;U^{(M)}]$ with $U^{(h)}\in\R^{\Bc\times r}$.
  The block coherence of $U$ is
  \begin{equation}
    \label{eq:mu_blk_def}
    \mu_{\mathrm{blk}}(U)
    \;:=\;
    M\max_{h\in[M]}\,\|U^{(h)}\|_2^2.
  \end{equation}
\end{definition}

\subsection{GPUs and the Memory Hierarchy}
We provide a brief overview of the architectural details of GPUs that are relevant to this paper; further details can be found in the CUDA programming guide \citep{nvidia2011nvidia}.
A GPU consists of thousands of CUDA cores that can independently execute threads of computation in parallel.
Along with a large number of compute units, the GPU also has a hierarchy of memory that the cores can access.
The hierarchy stems from a fundamental trade-off between memory size, latency, and bandwidth.
The fastest memory is the register memory, which is local to each thread and is used to store intermediate results.
The next level of memory is the shared memory, which is shared between a local group of threads.
The global memory is the largest and slowest memory, but it is accessible by all threads.
However, if multiple threads attempt to read or write to the same location in global memory simultaneously, it can lead to contention and we need to use \emph{atomic} operations to maintain correctness.
These operations lead to serialization of memory accesses, which can significantly degrade performance.

\subsection{GPU bottlenecks for Sparse Sketching}
Sparse sketching on GPUs is bottlenecked by memory bandwidth, since compute is relatively cheap compared to moving data through the memory hierarchy.
Further, existing sparse sketching kernels such as those of \cite{hu2025grass, higgins2025high} rely on global atomic operations to handle collisions in the sketching matrix.
Specifically, they implement a \textbf{scatter-add} approach, where each thread processes an input row (or a tile of rows) and writes $s$ signed contributions into output rows.
Correctness requires that these writes be atomic because hashed updates can collide.
Since the sparsity patterns of SJLT/CountSketch are random and irregular, it is difficult to leverage the memory hierarchy effectively.
Specifically, shared memory reuse cannot be done reliably.

An alternative approach is to explicitly form the sparse sketching matrix $S$ in memory and perform a sparse-dense matrix multiplication (SpMM) to compute $Y=SA$.
This avoids custom kernels and leverages existing highly-optimized SpMM libraries like cuSPARSE \cite{naumov2010cusparse}.
However, this approach requires materializing $S$ in a sparse format, which incurs overheads in both memory and computation.
Additionally, it cannot exploit the specific structure of the entries of $S$, like the Rademacher signs in SJLT/CountSketch, which can be generated on-the-fly.

The common architectural point is that global memory traffic is expensive and synchronization across thread blocks is slow.
By contrast, shared memory supports fast, block-local communication, and shared-memory atomics are often far cheaper than global atomics.



%% file: sections/04_method.tex
\section{\sketchfamily: A Block-Permuted Sparse JL Transform}
\label{sec:sketch}

This section describes the sketch distribution used throughout the paper.
The GPU implementation
in \Cref{sec:kernel} exploits the particular block structure of this distribution.

\paragraph{Block model.}
Let $M$ be a positive integer.
Partition the $\din$ input coordinates into $M$ contiguous blocks of size $\Bc\defeq \din/M$, and partition the $\dout$ output coordinates into $M$ blocks of size $\Br\defeq \dout/M$.
For simplicity, we assume that $\din$ and $\dout$ are divisible by $M$.
We deal with general cases in practice by padding.

We index output blocks by $g\in[M]$ and input blocks by $h\in[M]$.

The sketch matrix $S\in\R^{\dout\times \din}$ is therefore composed of $M\times M$ blocks, each of size $\Br\times \Bc$.
Note that under this partition, the block sketch matrix $S$ is always square at the block level.

\paragraph{Block-level wiring as a union of permutations.}
This key structural element enables describing \sketchfamily as a union of $\kappa$ permutations at the block level.

Specifically, sample $\kappa$ \textbf{edge-disjoint} permutations $\{\pi_\ell\}_{\ell=1}^{\kappa}$ of $[M]$.
Note, importantly, that the permutations are edge-disjoint, meaning that no two permutations map the same output block to the same input block.
Precisely, for all $g\in[M]$ and $\ell\ne\ell'$, we have $\pi_\ell(g)\ne\pi_{\ell'}(g)$.
Equivalently, the permutations are pairwise derangements.
This is necessary to ensure that each output block mixes information from $\kappa$ distinct input blocks.
For each output block $g$, define the neighborhood of input blocks
\[
  \mathcal{N}(g) \;\defeq\; \{\pi_\ell(g)\}_{\ell=1}^{\kappa}\subseteq [M].
\]
We then define the block sparsity pattern of $S$ by connecting output block $g$ to input blocks in $\mathcal{N}(g)$ only.
The edge-disjoint restriction implies that the induced block bipartite graph is $\kappa$-regular on both sides: each output block touches exactly $\kappa$ input blocks, and each input block participates in exactly $\kappa$ output blocks.

\paragraph{Intra-block SJLT mixing.
}
For every nonzero block $(g,h)$ with $h\in\mathcal{N}(g)$, we draw an independent sparse JL matrix
\(\Phi_{g,h}\in\R^{\Br\times\Bc}\)
with exactly $\spar$ nonzeros per column, with entries $\pm 1/\sqrt{\spar}$ at uniformly random row positions.
Then, we define the full sketch matrix $S$ using these blocks as
\[
  S_{g,h} \;=\;
  \begin{cases}
    \frac{1}{\sqrt{\kappa}}\,\Phi_{g,h} & \text{if }h\in\mathcal{N}(g), \\[2pt]
    0                                   & \text{otherwise.}
  \end{cases}
\]
Thus each input coordinate has exactly $\kappa\spar$ nonzeros in its column of $S$, each with magnitude $1/\sqrt{\kappa\spar}$.
Note that this sketch is a structured subclass of SJLT and a generalization of localized JL with sparse blocks~\cite{srinivasa2020localized}.
Specifically, when $\kappa=1$, this reduces to a localized, block-diagonal SJLT.

We use the bi-regularity of the block sparsity both in the kernel design and the theoretical analysis.
Increasing $\kappa$ expands each output neighborhood while reducing block locality, providing a tunable and quantitative tradeoff between mixing and memory efficiency.
We will concretely see this tradeoff from the theory side in \Cref{sec:theory} and the systems side in \Cref{sec:experiments}.
An example of \sketchfamily is illustrated in \Cref{fig:fig_example_blockperm_sjlt}.

\begin{figure}[t]
  \centering
  \includegraphics[width=\linewidth]{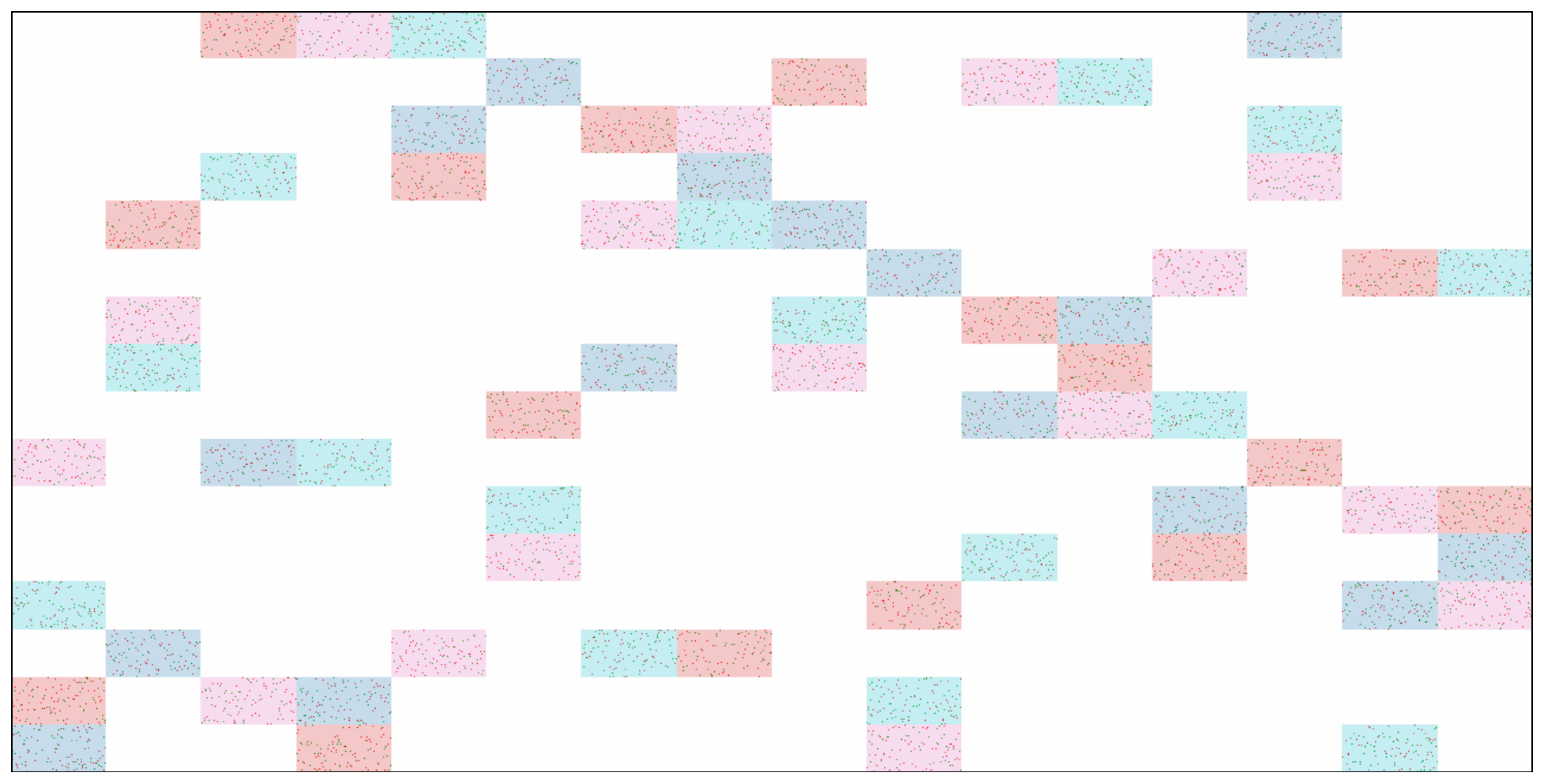}
  \caption{\textbf{Example $S \sim$ \sketchfamily.}
    Sparsity is a union of edge-disjoint permutations (degree-$\kappa$ regular at the block level).
    Each block is a sparse JL matrix with $\spar$ nonzeros per column.
    Here, $M=16$, with block size $\Br=64$, $\Bc=128$.
    The block degree is $\kappa=4$ and the intra-block sparsity is $\spar=2$.
    The resulting sketch therefore has $\kappa \spar = 8$ nonzeros per column, input dimension $\din=M \Bc = 2048$ and output dimension $\dout=M \Br = 1024$.
    Each of the $4$ permutations is colored differently for visualization.
  }
  \label{fig:fig_example_blockperm_sjlt}

\end{figure}




%% file: sections/05_kernel.tex
\section{\method: CUDA kernel}
\label{sec:kernel}

This section describes \method, the CUDA implementation of \sketchfamily.
\method achieves high performance through two main optimizations:
\begin{enumerate}
  \item It leverages the block sparsity structure of \sketchfamily to completely eliminate global atomic operations, while still maintaining a scatter-add structure local to each thread-block to maximize occupancy and memory throughput.
  \item It generates all sketch randomness on the fly within each thread-block, reducing bandwidth demand as well as global memory footprint.
\end{enumerate}

We elaborate on these two optimizations below.

\subsection{Eliminating Global Atomics}
Let $Y=SA\in\R^{\dout\times\npts}$ with $\dout=M\Br$, and $S \sim$ \sketchfamily.
To compute $Y$, we launch a 2D grid of thread blocks, indexed by $(g,j)$ where $g\in[M]$ selects an output block-row and $j$ selects a tile of $\Tn$ columns.
Thread block $(g,j)$ computes the tile
$
  Y[{g \Br: (g+1)\Br,\;j\Tn:(j+1)\Tn}]\in\R^{\Br\times\Tn}
$
(we follow \emph{NumPy} indexing notation).

For a fixed output block $g$, the kernel generates its random block neighborhood $\mathcal{N}(g)=\{\pi_\ell(g)\}_{\ell=1}^\kappa$ on the fly.
For each input block $h\in\mathcal{N}(g)$, we stream the corresponding rows of $A$ through shared memory in tiles of height $\Tk$ and width $\Tn$.
We allocate two shared arrays:
\(
sA\in\R^{\Tk\times\Tn}\) for the input tile and
\(sY\in\R^{\Br\times\Tn}\) for the output tile.
Each loaded element of $sA$ participates in $\spar$ signed updates into $sY$, where the update destinations and signs are generated from a per-element hash.
Since $sY$ is private to the thread-block, these updates use shared-memory atomics rather than global atomics, which are much faster and have lower contention costs.
After processing all $\kappa$ neighbor blocks and all row tiles within each block, the thread-block cooperatively writes $sY$ to global memory once.

\subsection{On-The-Fly Randomness Generation}
The kernel never materializes $S$.
Instead, it generates (i) the block wiring $\pi_\ell(g)$ and (ii) the intra-block SJLT hashes on the fly.
For the wiring, we use a structured permutation family so that $\pi_\ell(g)$ can be computed using simple integer arithmetic.
Specific details are in \Cref{sec:app_perm_wiring}.
For intra-block hashing, each input row index within the current input block is combined with $(g,h)$ and a seed to produce $\spar$ target rows in $[\Br]$ and $\spar$ independent signs.
This design eliminates the bandwidth and cache pressure associated with reading a sparse index structure, and it keeps the inner loop branch-free.
For reference, we include pseudocode for one thread-block of the \method kernel in \Cref{alg:flashsketch}.

\begin{algorithm}[h]
  \caption{\method (one thread-block)}
  \label{alg:flashsketch}
  \begin{algorithmic}[1]
    \STATE \textbf{Input:} $A\in\R^{\din\times\npts}$, block id $g$, column tile $j$, params $(M,\Br,\Bc,\kappa,\spar,\Tk,\Tn)$.
    \STATE Compute random neighborhood $\mathcal{N}(g)=\{\pi_\ell(g)\}_{\ell=1}^\kappa$.
    \STATE Initialize shared output tile $sY\leftarrow 0$.
    \FOR{each $h\in\mathcal{N}(g)$}
    \FOR{$u_0=0,\Tk,2\Tk,\dots,\Bc-\Tk$}
    \STATE Load $A[h\Bc+u_0:h\Bc+u_0+\Tk,\;j\Tn:(j+1)\Tn]$ into shared tile $sA$.
    \FOR{each element $(u,t)$ of $sA$ in parallel}
    \FOR{$i=1$ to $\spar$}
    \STATE Hash $(g,h,u_0+u,i)$ to unique destination $r_i\in[\Br]$ and sign $\sigma_i$.
    \STATE $\texttt{atomicAdd}(sY[r_i,t],\;\sigma_i\cdot sA[u,t])$.
    \ENDFOR
    \ENDFOR
    \ENDFOR
    \ENDFOR
    \STATE Scale $sY$ by $\frac{1}{\sqrt{\kappa \spar}}$.
    \STATE Write $sY$ to $Y[g\Br:(g+1)\Br,\;j\Tn:(j+1)\Tn]$.
  \end{algorithmic}
\end{algorithm}

\subsection{Comparison to prior GPU SJLT kernels}
The specialized GPU SJLT GraSS sketch kernel~\cite{hu2025grass} and the GPU CountSketch implementations in \cite{higgins2025high} follow a \emph{global} scatter-add pattern: each input element contributes to $\spar$ output rows and resolves collisions with \emph{global} atomic adds.
For dense $A$, this induces $\Theta(\spar\,\din\,\npts)$ global atomic updates and forces contention whenever many coordinates hash to the same output rows.
This can severely degrade memory performance, especially in the $d \gg k$ regime.

By construction, all $\spar$ updates per input element in \method are performed in shared memory private to each thread-block (using shared memory atomics), eliminating global atomic contention.
As a result, \method performs \emph{no global atomics} and only $\Theta(\dout\,\npts)$ global writes, which are naturally coalesced.
Further, the kernel in \cite{hu2025grass} requires forming the sparse sketch as an input to the kernel, which incurs additional memory overhead and bandwidth.

We provide a detailed discussion of tuning \method and handling low-occupancy cases in \Cref{sec:app_kernel}.
Additionally, we provide a fast but fragile block-row sampling alternative in \Cref{sec:blockrowsketch}.

%% file: sections/06_theory.tex
\section{Theory: Permutation-Coupled Localization}
\label{sec:theory}

We build our theoretical understanding of \sketchfamily on the framework of localized sketching developed in~\cite{srinivasa2020localized}.
\emph{Localized sketching} studies block-diagonal (or block-local) sketches whose guarantees depend on how the target subspace distributes across the chosen blocks.
Our sketch \sketchfamily follows a very similar locality principle with the key difference that a single output block mixes information from multiple input blocks via the union-of-permutations wiring.
We formalize this via a new \emph{neighborhood coherence} quantity that is a natural generalization of the block coherence in~\cite{srinivasa2020localized}.

\subsection{Coherence for Permutation-Coupled Localization}
\label{sec:theory_coherence}
We begin by defining our generalization of block-coherence~\Cref{def:block_coherence}, the \emph{neighborhood coherence}.

\begin{definition}[Neighborhood coherence for permutation wiring]
  \label{def:neighborhood_coherence}
  Fix $\kappa$ \textbf{edge-disjoint} permutations $\pi_1,\ldots,\pi_\kappa$ on $[M]$ and define the neighborhood of output block $g\in[M]$ by
  \begin{equation}
    \label{eq:nbr_def}
    \mathcal{N}(g) \;:=\; \{\pi_\ell(g)\}_{\ell=1}^\kappa.
  \end{equation}
  Let $U_{\mathcal{N}(g)}\in\R^{(\kappa\Bc)\times r}$ denote the matrix obtained by stacking the blocks
  $U^{(h)}$ for $h\in\mathcal{N}(g)$.
  The neighborhood coherence of $U$ under $\pi$ is
  \begin{equation}
    \label{eq:mu_nbr_def}
    \mu_{\mathrm{nbr}}(U;\pi)
    \;:=\;
    \frac{M}{\kappa}\max_{g\in[M]}\,\|U_{\mathcal{N}(g)}\|_2^2.
  \end{equation}
\end{definition}
Recall that the edge-disjoint condition ensures that each output block $g$ connects to $\kappa$ distinct input blocks.

\subsection{Oblivious Subspace Embedding Guarantee}
\label{sec:theory_ose}

We now state our main theoretical result \Cref{thm:ose_informal}, which is an OSE guarantee for \sketchfamily controlled by neighborhood coherence.
The flavor is very similar to the localized SJLT result of~\cite{srinivasa2020localized}, but with block coherence replaced by neighborhood coherence.

\begin{theorem}[OSE for \sketchfamily]
  \label{thm:ose_informal}
  Fix $U\in\R^{d\times r}$ with orthonormal columns.
  Let $S\sim\sketchfamily$ with parameters $(M,\Br,\kappa,s)$ and embedding dimension $k=M\Br$.
  Let $t := r+\log\frac{1}{\delta}$.
  There exist absolute constants $C,c>0$ such that, if we have
  \begin{equation}
    \label{eq:ose_informal_cond}
    k \;\ge\; C\,\frac{\mu_{\mathrm{nbr}}(U;\pi)}{\varepsilon^2}\,t
    \qquad\text{and}\qquad
    \kappa s \;\ge\; C\,\frac{1}{\varepsilon}\,t,
  \end{equation}
  then with probability at least $1-\delta$ over $S$,
  \begin{equation}
    \label{eq:ose_informal_concl}
    \bigl\|U^\top S^\top S U - I_r\bigr\|_2 \le \varepsilon.
  \end{equation}
\end{theorem}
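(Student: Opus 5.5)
The plan is to reduce \Cref{thm:ose_informal} to a sum of independent per-output-block contributions and then apply a sparse-JL moment bound block by block, in the style of the localized SJLT analysis of~\cite{srinivasa2020localized}. Write $S^\top S=\sum_{g\in[M]} S_{g,:}^\top S_{g,:}$, where $S_{g,:}\in\R^{\Br\times\din}$ is the $g$-th block row. This block row is supported only on the columns in $\mathcal{N}(g)$. Hence
\[
U^\top S^\top S U \;=\; \sum_{g\in[M]} W_g^\top \Psi_g^\top \Psi_g W_g,
\qquad
W_g \defeq \sqrt{\tfrac{M}{\kappa}}\,U_{\mathcal{N}(g)},
\qquad
\Psi_g \defeq \sqrt{\tfrac{\kappa}{M}}\cdot\tfrac{1}{\sqrt{\kappa}}\,[\Phi_{g,\pi_1(g)},\ldots,\Phi_{g,\pi_\kappa(g)}].
\]
Since the $\Phi_{g,h}$ are drawn independently, the matrices $\Psi_g$ are independent across $g$.

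Each $\sqrt{M}\,\Psi_g$, viewed as a $\Br\times\kappa\Bc$ matrix, is itself an SJLT-like matrix. Each column has $s$ nonzeros of magnitude $1/\sqrt{\kappa s}$ after the appropriate rescaling, so $\E[\Psi_g^\top\Psi_g]=\frac{1}{M}\cdot\frac{\kappa}{\kappa}\cdot I$ up to the chosen normalization. The $\kappa$ column groups use independent hashes into the same $\Br$ rows. Using $\E[\Phi^\top\Phi]=I$ for each block, the expectation becomes $\E[U^\top S^\top SU]=\frac{1}{\kappa}\sum_g U_{\mathcal{N}(g)}^\top U_{\mathcal{N}(g)}$. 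By bi-regularity, every input block $h$ appears in exactly $\kappa$ neighborhoods, so this equals $\frac{1}{\kappa}\cdot\kappa\, U^\top U=I_r$. This is the one place where the edge-disjoint (pairwise derangement) assumption enters: it guarantees that the neighborhoods are genuine $\kappa$-sets and that the column multiplicities are exactly $\kappa$.

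For concentration, I would treat the whole sketch as a single sparse matrix whose columns have exactly $\kappa s$ nonzeros of magnitude $1/\sqrt{\kappa s}$, with hash locations restricted to the rows of output blocks $g$ such that $h\in\mathcal{N}(g)$. I would then run the moment or decoupling argument of the SJLT/OSNAP analysis (Nelson--Nguyen, Cohen), bounding $\E\|U^\top S^\top SU-I\|^{p}$ with $p\asymp t$. The error splits into diagonal collision terms within a column, which are controlled by $\kappa s\gtrsim t/\varepsilon$ exactly as for an SJLT with sparsity $\kappa s$, and off-diagonal cross terms between distinct coordinates landing in the same output row. A cross term can only arise between coordinates $i,j$ whose blocks share some output block $g$, that is, both lie in $\mathcal{N}(g)$. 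Its variance proxy is then governed by $\max_g\|U_{\mathcal{N}(g)}\|_2^2/(\kappa\Br)$ instead of $\max_i\|U_i\|^2/k$. Rewriting with $k=M\Br$, this quantity is $\mu_{\mathrm{nbr}}/k$, which yields the condition $k\gtrsim \mu_{\mathrm{nbr}} t/\varepsilon^2$.

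An alternative and more modular route is a matrix Bernstein/Rosenthal bound on the independent sum over $g$. The summands $X_g=W_g^\top(\Psi_g^\top\Psi_g-\E)W_g$ satisfy $\|W_g\|_2^2\le\mu_{\mathrm{nbr}}$, and each inner factor $\Psi_g^\top\Psi_g$ is a rescaled SJLT acting on the $r$-dimensional subspace spanned by $W_g$. This gives per-block control via the known SJLT moment bound with effective dimension $\Br$. The main obstacle lies here: the $W_g$ are not orthonormal and $\sum_g W_g^\top W_g=M\,I$ rather than $I$, so the per-block bounds must be combined to obtain the $\sqrt{\mu_{\mathrm{nbr}} t/k}$ rate without picking up an extra $\log r$ or a factor of $M$. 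I would try to mirror exactly the proof of the localized SJLT theorem in~\cite{srinivasa2020localized}, replacing $U^{(h)}$ by $U_{\mathcal{N}(g)}$ and block coherence by neighborhood coherence. If the matrix-Bernstein route loses logarithms, I would fall back on the global moment method described above, which handles the variance term with $t=r+\log(1/\delta)$ via a net argument over the unit sphere of $\R^r$.
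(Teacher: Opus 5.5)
\textbf{There is a genuine gap in the concentration step.} Your setup agrees with the paper's. You decompose $U^\top S^\top SU=\tfrac{1}{\kappa}\sum_g U_{\mathcal{N}(g)}^\top\Phi_g^\top\Phi_g U_{\mathcal{N}(g)}$ into independent summands, and you get unbiasedness from the energy identity $\sum_g U_{\mathcal{N}(g)}^\top U_{\mathcal{N}(g)}=\kappa U^\top U$. (Minor slip: $\sqrt{M}\Psi_g=\Phi_g$ has entries $\pm 1/\sqrt{s}$, not $\pm 1/\sqrt{\kappa s}$.) The concentration step is the whole content of the theorem, and neither of your routes establishes it.

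\emph{The matrix route cannot reach the stated rate.}
\begin{itemize}
\item Per-block control of $\Psi_g^\top\Psi_g$ on the $r$-dimensional subspace spanned by $W_g$ needs $B_r\gtrsim r$. That is not assumed: only $k=MB_r$ is constrained, and for $B_r<r$ the restricted block sketch is rank-deficient.
\item If you use the matrix variance instead, set $P=U_{\mathcal{N}(g)}U_{\mathcal{N}(g)}^\top$. Then $\mathbb{E}[(\Phi_g^\top\Phi_g-I)P(\Phi_g^\top\Phi_g-I)]\approx B_r^{-1}(\operatorname{tr}(P)\,I+P)$, which brings in $\|U_{\mathcal{N}(g)}\|_F^2$. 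The variance proxy is then at least of order $r/k$.
\item Combined with the dimensional factor in matrix Bernstein or Rosenthal, this leads to a requirement like $k\gtrsim r\log(r/\delta)/\varepsilon^{2}$ (times a coherence factor), not $\mu_{\mathrm{nbr}}(r+\log\frac{1}{\delta})/\varepsilon^{2}$.
\end{itemize}

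\emph{The global moment route is only gestured at.}
\begin{itemize}
\item The Nelson--Nguyen and Cohen analyses assume hash locations that are identically distributed across columns and spread over all $k$ rows. Here every coordinate of an input block is confined to the same $\kappa B_r$ rows.
\item The variance proxy $\max_g\|U_{\mathcal{N}(g)}\|_2^2/(\kappa B_r)$ you assign to the cross terms is asserted, not derived.
\item Those analyses target polylogarithmic sparsity and pay $\log r$ factors in $k$, which is a different tradeoff from the one stated.
\item A trace-moment bound needs no net, so your last sentence conflates two different arguments.
\end{itemize}

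\textbf{The missing idea is to argue for a fixed vector.} The hypotheses signal this: they are linear in $t=r+\log\frac{1}{\delta}$ with no $\log r$, and $\kappa s$ must scale with $r$.
\begin{itemize}
\item For $x=Uw$, write $\|Sx\|_2^2-\|x\|_2^2=\sum_g Z_g$ with independent scalars $Z_g=\kappa^{-1}(\|\Phi_g x_{\mathcal{N}(g)}\|_2^2-q_g)$, where $q_g=\|x_{\mathcal{N}(g)}\|_2^2$.
\item Each $\Phi_g$ is itself a $B_r\times\kappa B_c$ SJLT with $s$ nonzeros per column. The standard fixed-vector SJLT tail (Kane and Nelson) therefore makes $Z_g$ sub-exponential with $\nu_g^2\lesssim q_g^2/(\kappa^2 B_r)$ and $b_g\lesssim q_g/(\kappa s)$.
\item The non-orthonormality obstacle you raise disappears at the scalar level. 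Use $\sum_g q_g^2\le(\max_g q_g)\sum_g q_g$, with $\sum_g q_g=\kappa\|x\|_2^2$ and $\max_g q_g=\kappa\|x\|_2^2\,\mu_{\mathrm{nbr}}(x;\pi)/M$. This gives $\sum_g\nu_g^2\lesssim\mu_{\mathrm{nbr}}(x;\pi)\|x\|_2^4/k$.
\item Edge-disjointness gives $q_g\le\|x\|_2^2$, hence $\max_g b_g\lesssim\|x\|_2^2/(\kappa s)$. This bound, not the energy identity, is where the derangement condition is really used; the identity holds for any $\kappa$ bijections counted with multiplicity.
\item Scalar Bernstein then gives the tail $2\exp(-c\min\{u^2k/\mu_{\mathrm{nbr}}(x;\pi),\,u\kappa s\})$.
\item Since $\mu_{\mathrm{nbr}}(Uw;\pi)\le\mu_{\mathrm{nbr}}(U;\pi)$ for unit $w$, finish with a $\tfrac{1}{4}$-net of at most $9^r$ points, a union bound, and the standard net-to-sphere step for the symmetric matrix $U^\top S^\top SU-I_r$. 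This yields exactly the two stated conditions.
\end{itemize}
The $r$ in $t$ is just the log-cardinality of the net.
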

The proof follows the localized sketching blueprint closely.
For a fixed vector $w$, we decompose $\|SUw\|_2^2$ into a sum of independent block contributions and combine a fixed-vector tail bound with a net argument.
The union-of-permutations wiring enters through a simple energy identity that ensures the local neighborhoods collectively cover the input without bias.
A detailed proof appears in \Cref{sec:app_theory}.
Note that for $\kappa=1$, \Cref{thm:ose_informal} recovers the localized SJLT regime controlled by $\mu_{\mathrm{blk}}(U)$ from~\cite{srinivasa2020localized}.

\subsection{Controlling Neighborhood Coherence with Randomized Permutations}
We now discuss how neighborhood coherence interpolates between block coherence and fully mixed coherence, and how random permutations can improve it in practice.

\paragraph{Worst-case comparison.}
First, a simple worst case comparison between neighborhood coherence and block coherence shows that permutations can reduce coherence by up to a factor of $\kappa$, but in general we cannot do better than block coherence.
For any fixed set of edge-disjoint permutations $\pi$,
\begin{equation}
  \label{eq:mu_nbr_bounds}
  \frac{1}{\kappa}\,\mu_{\mathrm{blk}}(U)
  \;\le\;
  \mu_{\mathrm{nbr}}(U;\pi)
  \;\le\;
  \mu_{\mathrm{blk}}(U).
\end{equation}
The upper bound is a triangle inequality and the lower bound follows because every block appears in at least one neighborhood.
A detailed derivation appears in \Cref{sec:app_theory}.

\paragraph{Randomized Permutations.}
Now, if we model the permutations $\pi_1,\ldots,\pi_\kappa$ as independent and uniformly random, then we can show a stronger upper bound that improves with $\kappa$ and pushes $\mu_{\mathrm{nbr}}(U;\pi)$ toward one.
A precise statement appears as \Cref{prop:perm_smoothing} in \Cref{sec:app_theory}.
Informally, we have the following smoothing bound.
Let $L := \log\!
  \bigl(\tfrac{Mr}{\delta}\bigr)$. With probability at least $1-\delta$ over the permutations, we have
\begin{equation}
  \label{eq:perm_smoothing_main}
  \mu_{\mathrm{nbr}}(U;\pi)
  \;\le\;
  1 + C\left(
  \sqrt{\frac{\mu_{\mathrm{blk}}(U)\,L}{\kappa}}
  + \frac{\mu_{\mathrm{blk}}(U)\,L}{\kappa}
  \right)
\end{equation}
for an absolute constant $C>0$.

This rigorously justifies the benefit of using multiple permutations to improve mixing.
When $\kappa$ is large enough, the neighborhood coherence approaches one, which is the optimal coherence for any $U$, improving the OSE guarantee in \Cref{thm:ose_informal}.

\begin{remark}[Independent permutations in practice]
  Our analysis models $\pi_1,\ldots,\pi_\kappa$ as independent uniformly random permutations.
  In \method, we generate $\kappa$ distinct permutations using a lightweight family for efficiency, see \Cref{sec:app_perm_wiring}.
  When $\kappa\ll M$, sampling distinct permutations is close to sampling independently, and collisions are rare.
  This justifies our theoretical model.
\end{remark}

%% file: sections/07_experiments.tex
\section{Experiments}
\label{sec:experiments}

We empirically evaluate \method at two levels.
First, we treat sketching as a primitive and measure the runtime and distortion of computing $SA$ for dense $A$ across a broad range of shapes.
Second, we integrate \method into end-to-end workloads where sketching sits on the critical path.
As part of this, we consider a suite of standard RandNLA tasks as well as GraSS, an end-to-end ML application for data attribution~\cite{hu2025grass} in which sketching is a core component.
We time using CUDA events and report the mean over 10 iterations (after warm-up iterations).
We restrict our evaluation to FP32 arithmetic for both sketching and downstream tasks for consistency, since all our baselines support FP32.
Our main kernel idea extends naturally to other precisions as well, but we leave concrete exploration of precision-specific optimizations to future work.

\subsection{Baselines}
We compare against dense and sparse GPU baselines.
\begin{enumerate}
  \item A dense JL transform with a standard Gaussian projection using cuBLAS.
  \item A sparse JL transform using cuSPARSE SpMM.
  \item A sparse JL transform using the GraSS SJLT kernel~\cite{hu2025grass}.
  \item A subsampled randomized Hadamard transform (SRHT) using the Fast Hadamard Transform (FHT) kernel from Dao-AILab~\cite{daoailab_fast_hadamard_transform}.
\end{enumerate}

\subsection{Sketching}
For the primitive sketching evaluation, we use the relative error in the computed Gram matrix $\|A^\top A-(SA)^\top(SA)\|_F/\|A^\top A\|_F$ as our quality metric.
The preview in \Cref{fig:pareto_money} illustrates the quality--speed tradeoff on a representative input from GPT2-medium weights.
More extensive ablations and Pareto frontiers across different shapes and inputs appear in \Cref{sec:app_additional_experiments}.

\subsection{RandNLA tasks}
We benchmark \method on three standard end-to-end RandNLA tasks:
\begin{enumerate}
  \item Sketch-and-solve least squares regression.
  \item Sketch-and-ridge regression.
  \item Oblivious subspace embedding (OSE).
\end{enumerate}
We run each task on a variety of input shapes and types.
We use the following datasets for inputs:
\begin{enumerate}
  \item A synthetic Gaussian matrix.
  \item A synthetic low-rank + noise matrix.
  \item A sparse submatrix from the SuiteSparse collection (\Verb|spal_004|, with nonzero density $\approx 1.4\%$)~\cite{kolodziej2019suitesparse}.
  \item A concatenated subset of weights from large language models (LLMs) (\Verb|GPT2-medium|~\cite{radford2019language} and \Verb|Qwen2-1.5B|~\cite{yang2024qwen2}).
\end{enumerate}
\Cref{fig:sketch_and_ridge_pareto} shows a representative quality--speed Pareto frontier for sketch-and-ridge regression on Qwen2-1.5B stacked weights.
\begin{figure}[h]
  \centering
  \includegraphics[width=\linewidth]{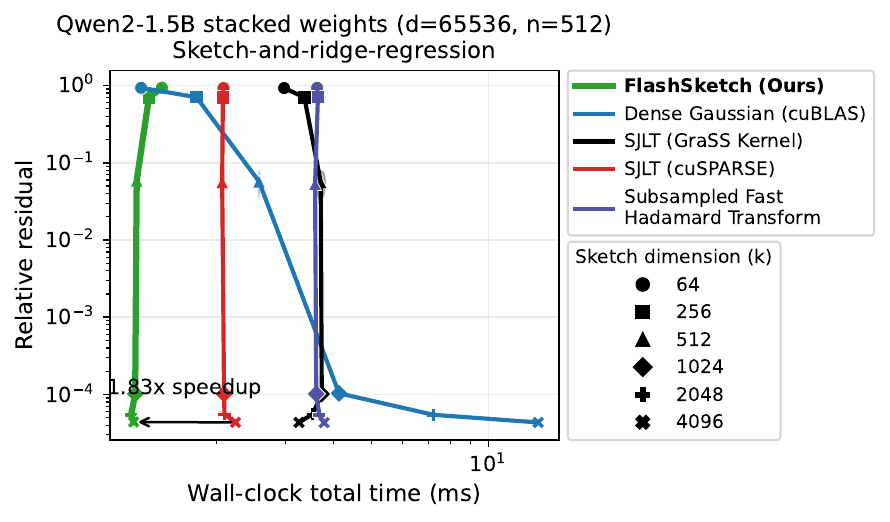}
  \caption{\textbf{Sketch-and-ridge regression on Qwen2-1.5B stacked weights.}
    We compare end-to-end runtime on an NVIDIA RTX 4090 GPU (x-axis) and quality measured using sketch-and-ridge regression residual (y-axis).
    \textbf{Lower is better for both axes.}}
  \label{fig:sketch_and_ridge_pareto}

\end{figure}
We present a table of aggregated speedups over baselines across all RandNLA tasks, input shapes, and datasets in \Cref{tab:randnla_speedup_summary} (run on a RTX 4090).
\input{tables/randnla_speedup_summary.tex}
Corresponding detailed plots and ablations appear in \Cref{sec:app_additional_experiments}.

\subsection{End-to-end ML: GraSS for Data Attribution}
We integrate \method into GraSS~\cite{hu2025grass}, a scalable data attribution pipeline built around compressing and storing per-example training gradients.
At a high level, GraSS constructs a \emph{feature cache} for the training set by compressing per-example gradients using a sparsification followed by a random projection (or sketch).
In the attribution phase, GraSS computes an analogous representation for a test/query example and produces an attribution vector using the cached features and a small downstream solve.
In GraSS, the random projection step is a key bottleneck: it is invoked for every training example during cache construction (and for every query at inference time), so kernel efficiency directly impacts end-to-end wall time.

We replace GraSS's sparse projection (its SJLT CUDA kernel) with \method while keeping all other pipeline components fixed.
We adapt the code from their open-source repository directly~\cite{grass_github}.
We focus our evaluation on the time spent in the projection/sketching step during feature cache construction.

\paragraph{Attribution quality via the linear datamodeling score (LDS).}
We evaluate attribution quality using the \emph{linear datamodeling score} (LDS) introduced in TRAK~\cite{park2023trak}, which GraSS also adopts.
LDS is a counterfactual metric: it measures how well an attribution method predicts how the model's output on a fixed example $z$ would change if we retrained on different subsets of the training set.
\textbf{Note that higher LDS is better.}

\Cref{fig:grass_mnist_pareto} shows the quality--speed Pareto frontier for end-to-end GraSS on an MLP trained on MNIST~\cite{lecun2002gradient}.
We refer the reader to \Cref{sec:app_additional_experiments_grass} and \cite{hu2025grass} for full details on the GraSS pipeline and the LDS metric.

\begin{figure}[t]
  \centering
  \includegraphics[width=\linewidth]{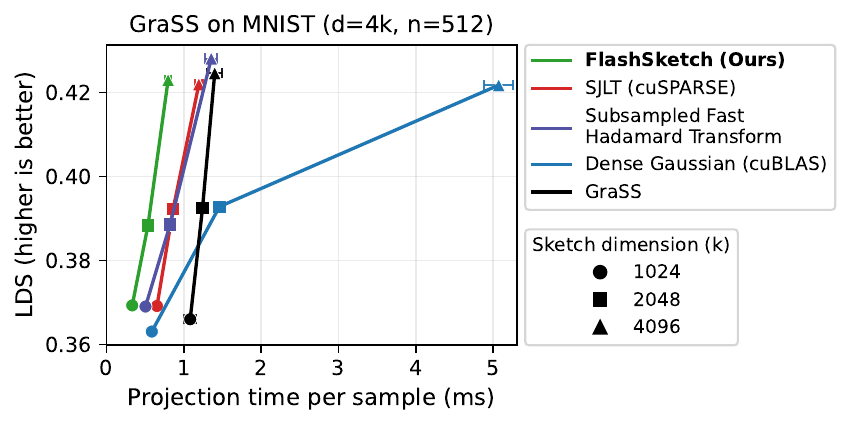}
  \includegraphics[width=\linewidth]{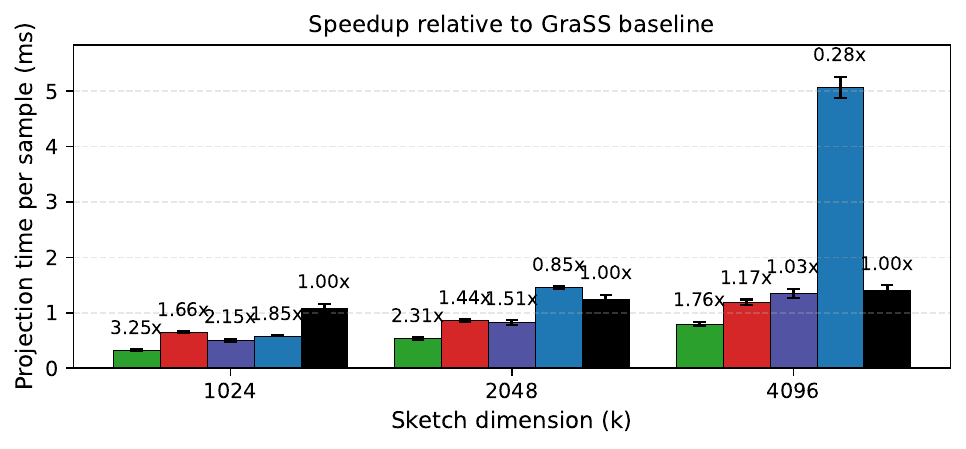}
  \caption{\textbf{End-to-end GraSS evaluation on MNIST.}
    We compare sketching methods by sketch runtime per sample on an NVIDIA RTX A6000 GPU (x-axis), and quality measured using the LDS metric (y-axis).
    \textbf{Lower is better for time (x-axis), higher is better for LDS (y-axis).}
    \method pushes the Pareto frontier and achieves up to $\sim 3.2\times$ speedup over the GraSS baseline while maintaining attribution quality (top).
  }
  \label{fig:grass_mnist_pareto}

\end{figure}

%% file: tables/randnla_speedup_summary.tex
\begin{table}[t]
  \centering
  \caption{Geomean speedups of \method vs baselines aggregated over shapes, datasets and configs. \textbf{Global geomean vs next best baseline: 1.73x.}}
  \label{tab:randnla_speedup_summary}
  \small
  \setlength{\tabcolsep}{4pt}
  \resizebox{\linewidth}{!}{%
  \begin{tabular}{ccccc}
    \toprule
    Task & \shortstack{SJLT\\(cuSPARSE)} & \shortstack{SJLT\\(GraSS Kernel)} & \shortstack{Subsampled\\FHT} & \shortstack{Dense Gaussian\\(cuBLAS)} \\
    \midrule
    \shortstack{Gram matrix\\approximation} & 2.67 & 4.17 & 16.22 & 7.64 \\
    \addlinespace[0.2em]
    \cmidrule(lr){1-5}
    OSE & 2.69 & 4.16 & 16.20 & 7.63 \\
    \addlinespace[0.2em]
    \cmidrule(lr){1-5}
    \shortstack{Sketch-and-\\ridge regression} & 1.42 & 1.54 & 3.94 & 3.10 \\
    \addlinespace[0.2em]
    \cmidrule(lr){1-5}
    \shortstack{Sketch+\\Solve} & 1.37 & 1.41 & 3.33 & 2.34 \\
    \bottomrule
  \end{tabular}}
\end{table}

%% file: sections/08_limitations.tex
\section{Limitations and Future Directions}
\label{sec:discussion}
We discuss limitations and future directions for \method from theoretical and practical perspectives.

\paragraph{What regimes does \method help and fail in?}
\method is most advantageous when (i) $A$ is dense, and (ii) $k\ll d$ but $k$ is large enough to saturate the GPU.
In the low $k$ regime, low occupancy limits speedups.
We resolve this to an extent using a split-$\Bc$ approach discussed in \Cref{sec:splitB_fallback}.
Additionally, while $\kappa$ can be tuned to trade off mixing and memory efficiency, very large $\kappa$ increases input reads, which can limit speedups.

\paragraph{Statistics--hardware tradeoffs.}
While a primary contribution of \method is the tunable parameter $\kappa$ that directly trades off sketch quality and GPU efficiency, the optimal choice of $\kappa$ depends on the input data statistics and the hardware characteristics.
In this work, we empirically choose the optimal $\kappa$ on the Pareto frontier.
An interesting future direction is to more deeply explore this tradeoff and develop a principled, perhaps adaptive approach to select $\kappa$.

\paragraph{Theoretical limitations.}
Our theoretical analysis in \Cref{sec:theory} focuses on OSE guarantees for \sketchfamily using independent uniform permutations.
In practice, we sample edge-disjoint permutations which introduce dependence between neighborhoods.
While this is mild in the $\kappa \ll M$ regime we focus on, it is an interesting open question to analyze \sketchfamily with such dependent permutations.

\paragraph{Beyond OSE.}
We focus on OSE-style guarantees because they capture the geometric fidelity needed in many RandNLA algorithms.
An appealing direction is to derive tighter guarantees tailored to the Gram-matrix metrics we evaluate, and to connect those directly to end-to-end downstream performance in learning pipelines.

%% file: references.tex
\bibliographystyle{icml2026}
\bibliography{references}

%% file: sections/90_appendix_theory.tex
\section{Additional theory and proofs}
\label{sec:app_theory}

This appendix contains the proofs for the theory in \Cref{sec:theory}.
We focus on two results.
First, we prove the OSE guarantee for \sketchfamily stated in \Cref{thm:ose_informal}.
Second, we show that a union of random permutations reduces neighborhood coherence, which is the smoothing phenomenon discussed in \Cref{sec:theory_coherence}.

The overall proof strategy follows the localized sketching framework of~\cite{srinivasa2020localized}.
The main difference is the permutation wiring.
It defines $\kappa$-block neighborhoods that retain locality while improving mixing.
In the OSE analysis we condition on the wiring $\pi$ and analyze the randomness in the SJLT blocks.
The final subsection then studies the randomness in $\pi$ itself.

\subsection{Construction and Notation}

Let $d=M\Bc$ and $k=M\Br$.
For $x\in\R^d$, write $x=[x^{(1)};\ldots;x^{(M)}]$ with blocks $x^{(h)}\in\R^{\Bc}$.
Given $\kappa$ edge-disjoint permutations $\{\pi_\ell\}_{\ell=1}^{\kappa}$ on $[M]$, define for each $g\in[M]$
\[
  \mathcal{N}(g) = \{\pi_\ell(g)\}_{\ell=1}^{\kappa}.
\]
Let $x_{\mathcal{N}(g)}\in\R^{\kappa \Bc}$ denote the concatenation of the blocks $\{x^{(h)}:h\in\mathcal{N}(g)\}$ in the order $(\pi_1(g),\ldots,\pi_\kappa(g))$.

For each $(g,h)$ with $h\in\mathcal{N}(g)$, draw an independent row-partitioned SJLT
$\Phi_{g,h}\in\R^{\Br\times \Bc}$ with exactly $s$ nonzeros per column and entries $\pm 1/\sqrt{s}$.
Define the concatenated local sketch
\begin{equation}
  \label{eq:phi_g_concat}
  \begin{aligned}
    \Phi_g
     & := \bigl[\Phi_{g,\pi_1(g)}\;\cdots\;\Phi_{g,\pi_\kappa(g)}\bigr], \\
     & \qquad \Phi_g \in \R^{\Br \times (\kappa\Bc)}.
  \end{aligned}
\end{equation}
Finally, define the full sketch $S\in\R^{k\times d}$ blockwise by
\[
  S_{g,h}=\begin{cases}
    \frac{1}{\sqrt{\kappa}}\Phi_{g,h} & \text{if }h\in\mathcal{N}(g), \\
    0                                 & \text{otherwise.}
  \end{cases}
\]
Then the $g$-th output block of $Sx$ is
\begin{equation}
  \label{eq:block_output}
  (Sx)^{(g)} = \frac{1}{\sqrt{\kappa}}\Phi_g\,x_{\mathcal{N}(g)}\in\R^{\Br}.
\end{equation}

This is identical to the construction of \sketchfamily in \Cref{sec:sketch}, we repeat it here for clarity.
\subsection{Energy Identity from Permutation Wiring}

The union of edge-disjoint permutations structure implies that each input block appears in exactly $\kappa$ neighborhoods.

\begin{lemma}[Neighborhood energy identity]
  \label{lem:energy_identity}
  For any $x\in\R^d$,
  \begin{equation}
    \label{eq:energy_identity_vec}
    \sum_{g=1}^M \|x_{\mathcal{N}(g)}\|_2^2 = \kappa \|x\|_2^2.
  \end{equation}
  Moreover, for any $U\in\R^{d\times r}$,
  \begin{equation}
    \label{eq:energy_identity_mat}
    \sum_{g=1}^M U_{\mathcal{N}(g)}^\top U_{\mathcal{N}(g)} = \kappa U^\top U.
  \end{equation}
\end{lemma}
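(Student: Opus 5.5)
The plan is to expand the left-hand side blockwise, swap the order of summation, and count how often each input block appears. Since $x_{\mathcal{N}(g)}$ is the concatenation of the blocks $x^{(\pi_\ell(g))}$ for $\ell=1,\ldots,\kappa$, we have
\[
  \|x_{\mathcal{N}(g)}\|_2^2=\sum_{\ell=1}^{\kappa}\|x^{(\pi_\ell(g))}\|_2^2 .
\]
The edge-disjoint assumption ensures that the indices $\pi_1(g),\ldots,\pi_\kappa(g)$ are pairwise distinct. So this concatenation lists each block of $\mathcal{N}(g)$ exactly once, and $\mathcal{N}(g)$ really has $\kappa$ elements; no block is double counted within a neighborhood.

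Next, sum over $g$ and exchange the sums:
\[
  \sum_{g=1}^M\|x_{\mathcal{N}(g)}\|_2^2=\sum_{\ell=1}^{\kappa}\sum_{g=1}^M\|x^{(\pi_\ell(g))}\|_2^2 .
\]
For each fixed $\ell$, the map $g\mapsto\pi_\ell(g)$ is a bijection of $[M]$. The inner sum is therefore just a reindexing of $\sum_{h=1}^M\|x^{(h)}\|_2^2=\|x\|_2^2$. Summing over the $\kappa$ values of $\ell$ gives $\kappa\|x\|_2^2$, which is \eqref{eq:energy_identity_vec}. Equivalently, this is the bi-regularity of the block bipartite graph: each input block $h$ lies in exactly $\kappa$ neighborhoods, namely $\mathcal{N}(\pi_\ell^{-1}(h))$ for $\ell=1,\ldots,\kappa$.

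For the matrix identity \eqref{eq:energy_identity_mat}, the same argument applies with squared norms replaced by Gram matrices. Stacking gives
\[
  U_{\mathcal{N}(g)}^\top U_{\mathcal{N}(g)}=\sum_{\ell}(U^{(\pi_\ell(g))})^\top U^{(\pi_\ell(g))} ,
\]
and $U^\top U=\sum_h (U^{(h)})^\top U^{(h)}$. The same exchange of sums and bijective reindexing then yields the result. Alternatively, one can apply \eqref{eq:energy_identity_vec} to $x=Uw$ for every $w\in\R^r$, obtaining equality of the quadratic forms $w^\top(\cdot)w$, and conclude by polarization since both sides are symmetric.

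There is no real obstacle here. The only point needing care is the distinctness of the $\pi_\ell(g)$, which is exactly where the edge-disjointness is used.
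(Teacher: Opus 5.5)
Your proof is correct and is essentially the paper's argument: you expand $\|x_{\mathcal{N}(g)}\|_2^2=\sum_{\ell}\|x^{(\pi_\ell(g))}\|_2^2$, swap the sums, and use that each $\pi_\ell$ is a bijection, and you handle the matrix case by the same counting applied to the Gram blocks $U^{(h)\top}U^{(h)}$. One small refinement: the appendix defines $x_{\mathcal{N}(g)}$ as the ordered concatenation over $(\pi_1(g),\ldots,\pi_\kappa(g))$, which is the multiset view the paper says its proofs use, and under that definition the identity holds for any $\kappa$ permutations. Edge-disjointness is therefore needed only to make this agree with stacking the blocks of the set $\mathcal{N}(g)$, not for the identity itself.
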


\begin{proof}
  For the vector identity, expand
  \begin{equation}
    \begin{aligned}
      \sum_{g=1}^M \|x_{\mathcal{N}(g)}\|_2^2
       & = \sum_{g=1}^M \sum_{\ell=1}^{\kappa}\bigl\|x^{(\pi_\ell(g))}\bigr\|_2^2  \\
       & = \sum_{\ell=1}^{\kappa}\sum_{g=1}^M \bigl\|x^{(\pi_\ell(g))}\bigr\|_2^2.
    \end{aligned}
  \end{equation}
  Each $\pi_\ell$ is a bijection, so $\sum_{g}\|x^{(\pi_\ell(g))}\|_2^2 = \sum_{h}\|x^{(h)}\|_2^2 = \|x\|_2^2$, and the claim follows.

  The matrix identity follows by applying the same counting argument entrywise:
  \begin{equation}
    \begin{aligned}
      \sum_{g=1}^M U_{\mathcal{N}(g)}^{\!\top}
      U_{\mathcal{N}(g)}
       & =\sum_{g=1}^M\sum_{\ell=1}^{\kappa} U^{(\pi_\ell(g))\top}U^{(\pi_\ell(g))} \\
       & =\sum_{\ell=1}^{\kappa}\sum_{h=1}^M U^{(h)\top}U^{(h)}
      =\kappa U^\top U.
    \end{aligned}
  \end{equation}
\end{proof}

\subsection{A fixed-vector bound for row-partitioned SJLT}

We use the following standard fixed-vector tail bound for SJLT/OSNAP-style hashing matrices.

\begin{lemma}[Fixed-vector norm preservation for row-partitioned SJLT]
  \label{lem:sjlt_fixed_vector}
  Let $\Phi\in\R^{m\times D}$ be a row-partitioned SJLT with exactly $s$ nonzeros per column and entries $\pm 1/\sqrt{s}$ with independent signs and row positions.
  Then for any fixed $v\in\R^{D}$ and any $u\in(0,1)$,
  \begin{equation}
    \begin{aligned}
       & \Pr\Bigl[\bigl|\|\Phi v\|_2^2-\|v\|_2^2\bigr| > u\,\|v\|_2^2\Bigr] \\
       & \qquad\le 4\exp\!\bigl(-c\min\{u^2 m,\;u s\}\bigr).
    \end{aligned}
  \end{equation}
  for an absolute constant $c>0$.
\end{lemma}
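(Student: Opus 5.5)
The plan is to prove \Cref{lem:sjlt_fixed_vector} by writing the squared-norm deviation as a Rademacher chaos in the signs, conditioned on the row positions, and then controlling the chaos with the Hanson--Wright inequality.

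\textbf{Setup.} Write column $j$ of $\Phi$ as $\frac{1}{\sqrt{s}}\sum_{i=1}^{s}\sigma_{ji}e_{\eta_{ji}}$. In the row-partitioned model, the $m$ rows are split into $s$ groups of size $m/s$. The index $\eta_{ji}$ is uniform in the $i$-th group, independently over $(j,i)$, and the $\sigma_{ji}$ are independent Rademacher signs. The $s$ nonzeros of a column then lie in distinct rows, so
\[
\|\Phi v\|_2^2-\|v\|_2^2=\frac{1}{s}\sum_{i=1}^s\sum_{j\neq j'}\sigma_{ji}\sigma_{j'i}\,\mathbf{1}[\eta_{ji}=\eta_{j'i}]\,v_jv_{j'}=:\sigma^\top A_\eta\sigma .
\]
This is an off-diagonal Rademacher chaos in the $Ds$ signs. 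Its coefficient matrix $A_\eta$ is block diagonal over $i$ and depends only on the row positions. Normalize $\|v\|_2=1$.

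\textbf{Step 1: Hanson--Wright conditional on $\eta$.} Conditioning on the positions, Hanson--Wright gives
\[
\Pr_\sigma[|\sigma^\top A_\eta\sigma|>u]\le 2\exp\!\Bigl(-c\min\Bigl\{\frac{u^2}{\|A_\eta\|_F^2},\frac{u}{\|A_\eta\|_2}\Bigr\}\Bigr).
\]

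\textbf{Step 2: operator-norm bound.} Each block $A_\eta^{(i)}$ splits over rows $r$ of group $i$ into rank-one pieces $\frac{1}{s}(w_rw_r^\top-\mathrm{diag}(w_r^2))$, where $w_r$ is $v$ restricted to the columns hashing to $r$. Hence $\|A_\eta\|_2\le \frac{1}{s}\max_r\|w_r\|_2^2\le \frac{1}{s}$ deterministically. This produces the $us$ term.

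\textbf{Step 3: Frobenius bound.} We have
\[
\|A_\eta\|_F^2=\frac{1}{s^2}\sum_i\sum_{j\ne j'}\mathbf{1}[\eta_{ji}=\eta_{j'i}]\,v_j^2v_{j'}^2,
\]
whose mean is at most $\frac{1}{s^2}\cdot s\cdot\frac{s}{m}=\frac1m$. I need $\|A_\eta\|_F^2\lesssim 1/m$ with high probability, with failure probability at most $2\exp(-c\min\{u^2m,us\})$. Two routes are available:
\begin{itemize}
\item Bound $\|A_\eta\|_F^2\le\sum_i\frac{1}{s^2}\sum_r\|w^{(i)}_r\|_2^4$. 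For each group $i$ this is a sum of bucket loads of a balls-in-bins process with weights $v_j^2$. Since the groups are independent, a Bernstein/Chernoff bound on the sum over $i$ gives concentration at scale $1/m$.
\item Alternatively, avoid conditioning and bound the moments $\E|\sigma^\top A_\eta\sigma|^p$ jointly over $(\sigma,\eta)$, in the style of the moment analyses of Kane--Nelson and Cohen for OSNAP. This gives $\|\cdot\|_{L^p}\lesssim \sqrt{p/m}+p/s$, and Markov with $p\asymp\min\{u^2m,us\}$ then yields the tail directly.
\end{itemize}

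\textbf{Combining.} On the good event for $\eta$, substituting the Step 2 and Step 3 bounds into Step 1 gives $2\exp(-c\min\{u^2m,us\})$. Adding the probability of the bad event for $\eta$ gives the factor $4$.

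\textbf{Main obstacle.} The main obstacle is Step 3, the high-probability control of the collision Frobenius norm $\|A_\eta\|_F^2$. Heavy coordinates of $v$ can make bucket loads large. Because the lemma is stated as a standard fact, I would try the moment route first and cite the known OSNAP moment bound rather than reprove the combinatorics. For the paper's use, the lemma is applied with $m=\Br$, $D=\kappa\Bc$ and sparsity $s$.
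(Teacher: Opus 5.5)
Your conditional Hanson--Wright route differs from the paper's, which does no hashing analysis: it takes Kane--Nelson's $(\varepsilon,\delta)$ distributional JL theorem for the block construction (their Theorem~13) as a black box, sets $\varepsilon=u/2$, and inverts $\delta=\exp(-c\min\{u^2m,us\})$. Your Setup, Step~1 and Step~2 are correct. However, Step~3 as stated is false, and your Combining paragraph relies on it. It is not true that $\|A_\eta\|_F^2\lesssim 1/m$ outside an event of probability $2\exp(-c\min\{u^2m,us\})$. Take $v=(e_1+e_2)/\sqrt2$. Then $\|A_\eta\|_F^2=N/(2s^2)$, where $N\sim\mathrm{Bin}(s,s/m)$ counts the groups in which the two columns collide. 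With $m=s^3$ and $u=1/2$ the target failure probability is $2e^{-cs/2}$. But for $s>2C$ the event $\|A_\eta\|_F^2>C/m$ is exactly $\{N\ge 1\}$, which has probability about $1/s$. Whenever $s\ll um$, a single collision already pushes the Frobenius norm far above scale $1/m$, and this happens with only polynomially small probability.

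The repair is to aim at the threshold Hanson--Wright actually needs, namely $\|A_\eta\|_F^2\lesssim\max\{1/m,u/s\}$, since $u^2/\max\{1/m,u/s\}=\min\{u^2m,us\}$.
\begin{itemize}
\item Write $\|A_\eta\|_F^2=\sum_{i=1}^sY_i$ with $Y_i=s^{-2}\sum_{j\ne j'}\mathbf{1}[\eta_{ji}=\eta_{j'i}]v_j^2v_{j'}^2$.
\item The $Y_i$ are independent across groups and satisfy $0\le Y_i\le 1/s^2$ deterministically, because $\sum_{j\ne j'}v_j^2v_{j'}^2\le\|v\|_2^4=1$. Also $\sum_i\E Y_i\le 1/m$.
\item A multiplicative Chernoff bound then gives $\Pr[\sum_iY_i\ge 2\max\{1/m,u/s\}]\le\exp(-c\max\{s^2/m,us\})\le\exp(-c\min\{u^2m,us\})$.
\end{itemize}
So your heavy-coordinate worry disappears, and no balls-in-bins load analysis is needed. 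With this threshold your Combining step goes through, with total failure probability at most $3\exp(-c'\min\{u^2m,us\})$.

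Your moment route ($\|Z\|_p\lesssim\sqrt{p/m}+p/s$, then Markov with $p\asymp\min\{u^2m,us\}$) is also sound and never needs the false claim. As you describe it, though, it amounts to citing the known SJLT moment bound, so it is no more self-contained than the paper's one-line inversion.

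The repaired direct argument buys you something over the paper. It is a self-contained proof valid for all $(m,s,u)$, with no dependence on the cited theorem's hypotheses ($\delta<1/2$ and its parameter regime). It also shows where each term comes from: the $us$ term from $\|A_\eta\|_2\le 1/s$, and the $u^2m$ term from the collision Frobenius norm. The paper's route buys brevity.
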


\noindent
We will use \Cref{lem:sjlt_fixed_vector} as a black box throughout the appendix.
For completeness (and to avoid any ambiguity about constants and notation), we give a short derivation of the stated tail form from the distributional JL guarantee for the \emph{block construction} (i.e., row-partitioned hashing) analyzed by \citet{kane2014sparser}.
The proof is a simple ``parameter inversion'': we start from a statement of the form ``if $m$ and $s$ scale like $\varepsilon^{-2}\log(1/\delta)$ and $\varepsilon^{-1}\log(1/\delta)$ then failure probability is at most $\delta$'', and then solve for $\delta$ as a function of $m,s,u$.

\begin{proof}[Derivation of \Cref{lem:sjlt_fixed_vector} from \citet{kane2014sparser}]
  Fix $v\neq 0$ and set $x=v/\|v\|_2$ so that $\|x\|_2=1$.
  Let $S\in\R^{m\times D}$ be the row-partitioned SJLT (``block construction'') with $s$ nonzeros per column and entries $\pm 1/\sqrt{s}$.
  Following \citet{kane2014sparser}, define the distortion random variable
  \begin{equation}
    Z\;\coloneqq\;\|Sx\|_2^2-1.
  \end{equation}
  In the notation of \citet{kane2014sparser}, $m$ corresponds to their embedding dimension $k$, and this $Z$ is exactly the random variable defined in their Eq.
  ~(3).

  \paragraph{Step 1: start from a $(\varepsilon,\delta)$-style JL guarantee.}
  \citet[Theorem~13]{kane2014sparser} shows (for the block construction) that there exist absolute constants $C_1,C_2>0$ such that for any $\varepsilon\in(0,1)$ and $\delta\in(0,1/2)$, if
  \begin{equation}
    \begin{aligned}
      m & \;\ge\; C_1\,\varepsilon^{-2}\log(1/\delta), \\
      s & \;\ge\; C_2\,\varepsilon^{-1}\log(1/\delta),
    \end{aligned}
  \end{equation}
  then
  \begin{equation}
    \Pr\bigl[|Z| > 2\varepsilon-\varepsilon^2\bigr] \;\le\; \delta.
  \end{equation}

  \paragraph{Step 2: convert to a tail bound in terms of $u$.}
  Let $u\in(0,1)$ and set $\varepsilon\coloneqq u/2$.
  Since $2\varepsilon-\varepsilon^2 = u-u^2/4 < u$, we have the set inclusion
  \begin{equation}
    \{|Z|>u\} \;\subseteq\; \bigl\{|Z|>2\varepsilon-\varepsilon^2\bigr\}.
  \end{equation}
  Therefore, whenever the conditions on $m$ and $s$ above hold (with $\varepsilon=u/2$),
  \begin{equation}
    \Pr\bigl[|Z|>u\bigr] \;\le\; \Pr\bigl[|Z|>2\varepsilon-\varepsilon^2\bigr] \;\le\; \delta.
  \end{equation}

  \paragraph{Step 3: solve for $\delta$ as a function of $m,s,u$.}
  With $\varepsilon=u/2$, the sufficient conditions become
  \begin{equation}
    \log(1/\delta) \;\le\; \frac{u^2 m}{4C_1}
    \qquad\text{and}\qquad
    \log(1/\delta) \;\le\; \frac{u s}{2C_2}.
  \end{equation}
  Equivalently, for any $c \le \min\{(4C_1)^{-1}, (2C_2)^{-1}\}$, choosing
  \begin{equation}
    \delta \;=\; \exp\bigl(-c\,\min\{u^2 m,\; u s\}\bigr)
  \end{equation}
  ensures the hypotheses of \citet[Theorem~13]{kane2014sparser} (with $\varepsilon=u/2$) are satisfied.
  Substituting this choice of $\delta$ into the preceding display yields
  \begin{equation}
    \Pr\bigl[|Z|>u\bigr] \;\le\; \exp\bigl(-c\,\min\{u^2 m,\; u s\}\bigr).
  \end{equation}
  Finally, rescaling from $x=v/\|v\|_2$ back to $v$ gives
  \begin{equation}
    \begin{aligned}
      \Pr\Bigl[\bigl|\|Sv\|_2^2-\|v\|_2^2\bigr| > u\,\|v\|_2^2\Bigr]
      \\
      \le\; \exp\bigl(-c\,\min\{u^2 m,\; u s\}\bigr).
    \end{aligned}
  \end{equation}
  which is the desired tail form up to adjusting constants (we state a slightly looser version with a prefactor~$4$ for convenience).
\end{proof}

In particular, we will repeatedly invoke \Cref{lem:sjlt_fixed_vector} to control the centered error $\|\Phi v\|_2^2-\|v\|_2^2$ as a sub-exponential random variable (cf.
\ the proof of \Cref{prop:fixed_vec_blockperm}).

\subsection{Coherence metrics}
\label{sec:app_coherence_metrics}

We use two coherence parameters to track how vectors and subspaces align with the block partition.
Block coherence measures concentration inside a single contiguous block.
Neighborhood coherence measures concentration inside a union of $\kappa$ blocks selected by the wiring permutations.
We use the same notation $\mu_{\mathrm{blk}}(\cdot)$ and $\mu_{\mathrm{nbr}}(\cdot;\pi)$ for vectors and matrices.

\begin{definition}[Vector block and neighborhood coherence]
  \label{def:vec_coherences}
  Let $x\in\R^d$ be nonzero and partition it into contiguous blocks $x=[x^{(1)};\dots;x^{(M)}]$.
  Define
  \begin{align}
    \mu_{\mathrm{blk}}(x)
     & := \frac{M}{\|x\|_2^2}\max_{h\in[M]}\|x^{(h)}\|_2^2,
    \label{eq:mu_blk_vec}                                                              \\
    \mu_{\mathrm{nbr}}(x;\pi)
     & := \frac{M}{\kappa\|x\|_2^2}\max_{g\in[M]}\bigl\|x_{\mathcal{N}(g)}\bigr\|_2^2.
    \label{eq:mu_nbr_vec}
  \end{align}
\end{definition}

\begin{definition}[Matrix block and neighborhood coherence~\cite{srinivasa2020localized}]
  \label{def:mat_coherences}
  Let $U\in\R^{d\times r}$ have orthonormal columns and write
  $U=[U^{(1)};\dots;U^{(M)}]$ for its induced row-block partition.
  Define
  \begin{align}
    \mu_{\mathrm{blk}}(U)
     & := M\max_{h\in[M]}\|U^{(h)}\|_2^2,
    \label{eq:mu_blk_U_app}                                                   \\
    \mu_{\mathrm{nbr}}(U;\pi)
     & := \frac{M}{\kappa}\max_{g\in[M]}\bigl\|U_{\mathcal{N}(g)}\bigr\|_2^2.
    \label{eq:mu_nbr_U_app}
  \end{align}
  These agree with the definitions \eqref{eq:mu_blk_def}--\eqref{eq:mu_nbr_def} in the main text.
\end{definition}

\subsection{Fixed-vector guarantee for \sketchfamily}

\begin{proposition}[Fixed-vector JL for \sketchfamily]
  \label{prop:fixed_vec_blockperm}
  Let $S\sim\sketchfamily$ with parameters $(M,\Br,\kappa,s)$ and independent $\{\Phi_g\}_{g=1}^M$.
  Then for any fixed $x\in\R^d$ and any $u\in(0,1)$,
  \begin{equation}
    \label{eq:fixed_vec_blockperm_bound}
    \begin{aligned}
       & \Pr\Bigl[\bigl|\|Sx\|_2^2-\|x\|_2^2\bigr| > u\,\|x\|_2^2\Bigr] \\
       & \qquad \le\;
      2\exp\!\Bigl(
      -c\min\Bigl\{
      \frac{u^2 k}{\mu_{\mathrm{nbr}}(x;\pi)},
      \; u\,\kappa s
      \Bigr\}
      \Bigr).
    \end{aligned}
  \end{equation}
  where $k=M\Br$ and $c>0$ is an absolute constant.
\end{proposition}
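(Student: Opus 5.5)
The plan is to condition on the wiring $\pi$, write the distortion as a sum of $M$ independent local errors, control each with \Cref{lem:sjlt_fixed_vector}, and combine them with a Bernstein inequality for sub-exponential (sub-gamma) variables. Write $v_g := x_{\mathcal{N}(g)}$. By \eqref{eq:block_output}, $\|Sx\|_2^2=\frac1\kappa\sum_g\|\Phi_g v_g\|_2^2$. By the energy identity \eqref{eq:energy_identity_vec} of \Cref{lem:energy_identity}, $\|x\|_2^2=\frac1\kappa\sum_g\|v_g\|_2^2$. Hence
\[
\|Sx\|_2^2-\|x\|_2^2=\sum_{g=1}^M X_g,\qquad X_g:=\tfrac1\kappa\bigl(\|\Phi_g v_g\|_2^2-\|v_g\|_2^2\bigr).
\]
The $X_g$ are independent because the $\Phi_g$ are independent. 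Each has mean zero, since an SJLT preserves norms in expectation.

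Next I control each $X_g$. The concatenation $\Phi_g\in\R^{\Br\times\kappa\Bc}$ is again a row-partitioned SJLT with $s$ nonzeros per column, entries $\pm1/\sqrt s$, and independent hashes. So \Cref{lem:sjlt_fixed_vector} with $m=\Br$ gives
\[
\Pr\bigl[|X_g|>u\|v_g\|_2^2/\kappa\bigr]\le 4\exp\bigl(-c\min\{u^2\Br,\,us\}\bigr).
\]
I convert this tail into a sub-gamma (Bernstein-type) moment condition with variance proxy $\sigma_g^2\asymp\|v_g\|_2^4/(\kappa^2\Br)$ and scale $b_g\asymp\|v_g\|_2^2/(\kappa s)$. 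Bernstein's inequality for sums of independent centered sub-gamma variables then bounds $\Pr[|\sum_g X_g|>u\|x\|_2^2]$ by
\[
2\exp\Bigl(-c\min\Bigl\{\frac{u^2\|x\|_2^4}{\sum_g\sigma_g^2},\,\frac{u\|x\|_2^2}{\max_g b_g}\Bigr\}\Bigr).
\]

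It remains to evaluate these parameters using neighborhood coherence. By \eqref{eq:mu_nbr_vec}, $\max_g\|v_g\|_2^2=\frac{\kappa}{M}\mu_{\mathrm{nbr}}(x;\pi)\|x\|_2^2$.
\begin{itemize}
\item For the variance term, combine this maximum with the energy identity:
\[
\sum_g\|v_g\|_2^4\le \max_g\|v_g\|_2^2\sum_g\|v_g\|_2^2=\tfrac{\kappa^2}{M}\mu_{\mathrm{nbr}}\|x\|_2^4 .
\]
Hence $\sum_g\sigma_g^2\lesssim \mu_{\mathrm{nbr}}\|x\|_2^4/(M\Br)=\mu_{\mathrm{nbr}}\|x\|_2^4/k$, which gives the first branch $u^2k/\mu_{\mathrm{nbr}}$.
\item For the scale term, $\max_g b_g\lesssim \mu_{\mathrm{nbr}}\|x\|_2^2/(Ms)$, so the second branch is $uMs/\mu_{\mathrm{nbr}}$. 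Since $\|v_g\|_2^2\le\|x\|_2^2$, we have $\mu_{\mathrm{nbr}}\le M/\kappa$. Therefore $uMs/\mu_{\mathrm{nbr}}\ge u\kappa s$, which yields \eqref{eq:fixed_vec_blockperm_bound}.
\end{itemize}

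The main obstacle is the step from tail to moment condition. \Cref{lem:sjlt_fixed_vector} is stated only for $u\in(0,1)$, but Bernstein needs control of all moments, including large deviations. I would first try to integrate the tail directly. For $u\ge1$, I would use the trivial bound $\|\Phi_g v_g\|_2^2\le s\|v_g\|_1^2$-type estimates, or alternatively appeal to the moment bounds $\E|Z|^p$ that \citet{kane2014sparser} actually prove before inverting to Theorem~13. From these I would obtain $\|X_g\|_{L^p}\lesssim \frac{\|v_g\|_2^2}{\kappa}(\sqrt{p/\Br}+p/s)$ for all $p$, which is exactly the sub-gamma condition required. Constants are absorbed into $c$.
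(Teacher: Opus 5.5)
Your proposal follows the paper's proof essentially step for step: the same decomposition $\|Sx\|_2^2-\|x\|_2^2=\sum_g X_g$ via the energy identity, the same per-block sub-exponential parameters $\|v_g\|_2^4/(\kappa^2\Br)$ and $\|v_g\|_2^2/(\kappa s)$ from \Cref{lem:sjlt_fixed_vector}, Bernstein, the bound $\sum_g\|v_g\|_2^4\le\max_g\|v_g\|_2^2\sum_g\|v_g\|_2^2$, and finally $\mu_{\mathrm{nbr}}(x;\pi)\le M/\kappa$. The obstacle you flag is real. The paper simply applies the lemma ``for all $u>0$'' although it is stated only for $u\in(0,1)$, and its tail-integration step never uses centering. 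Your moment route, $\|X_g\|_{L^p}\lesssim \frac{\|v_g\|_2^2}{\kappa}\bigl(\sqrt{p/\Br}+p/s\bigr)$ combined with $\E X_g=0$, is the right way to close it. The deterministic $\ell_1$-type estimate would not work, because it only gives a dimension-dependent scale of order $\kappa\Bc\,\|v_g\|_2^2$.
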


\begin{proof}
  Write the output norm as a sum over blocks using \eqref{eq:block_output}:
  \[
    \|Sx\|_2^2
    = \sum_{g=1}^M \|(Sx)^{(g)}\|_2^2
    = \frac{1}{\kappa}\sum_{g=1}^M \|\Phi_g x_{\mathcal{N}(g)}\|_2^2.
  \]
  Define random variables
  \[
    Z_g \;=\; \frac{1}{\kappa}\Bigl(\|\Phi_g x_{\mathcal{N}(g)}\|_2^2 - \|x_{\mathcal{N}(g)}\|_2^2\Bigr).
  \]
  Then $\E Z_g=0$ and $\{Z_g\}$ are independent across $g$.
  \emph{(Here, the randomness is only over the independent SJLT blocks $\{\Phi_g\}_{g=1}^M$.
  We treat the wiring $\pi$ as fixed; if $\pi$ is random, the argument holds conditional on $\pi$ and hence also unconditionally.)
  }
  Moreover,
  \[
    \|Sx\|_2^2 - \|x\|_2^2
    = \sum_{g=1}^M Z_g
    \quad\text{since}\quad
    \frac{1}{\kappa}\sum_{g=1}^M\|x_{\mathcal{N}(g)}\|_2^2 = \|x\|_2^2
  \]
  by \Cref{lem:energy_identity}.

  We now bound $\sum_g Z_g$ using Bernstein concentration for independent sub-exponential variables.
  For completeness, we record the exact Bernstein inequality we use, together with a standard definition of the ``$(\nu^2,b)$'' parameters.

  \begin{definition}[Sub-exponential with parameters $(\nu^2,b)$]
    \label{def:subexp_params}
    A centered random variable $X$ is sub-exponential with parameters $(\nu^2,b)$ if
    \begin{equation}
      \label{eq:subexp_mgf}
      \E\exp(\lambda X)\;\le\;\exp\!\left(\tfrac{\lambda^2\nu^2}{2}\right)
      \qquad\text{for all }|\lambda| < 1/b.
    \end{equation}
  \end{definition}

  \begin{lemma}[Bernstein inequality for independent sub-exponential variables]
    \label{lem:bernstein_subexp}
    Let $\{X_i\}_{i=1}^N$ be independent, centered random variables.
    Suppose each $X_i$ is sub-exponential with parameters $(\nu_i^2,b_i)$ in the sense of \Cref{def:subexp_params}.
    Then there is an absolute constant $c>0$ such that for every $t\ge 0$,
    \begin{equation}
      \label{eq:bernstein_subexp_lemma}
      \Pr\Bigl[\Bigl|\sum_{i=1}^N X_i\Bigr|\ge t\Bigr]
      \;\le\;
      2\exp\!\left(
      -c\min\left\{
      \frac{t^2}{\sum_i \nu_i^2},
      \frac{t}{\max_i b_i}
      \right\}
      \right).
    \end{equation}
    This is \citet[Corollary~2.8.3]{vershynin2018hdp}.
  \end{lemma}

  Define
  \[
    \tilde Z_g := \|\Phi_g x_{\mathcal{N}(g)}\|_2^2 - \|x_{\mathcal{N}(g)}\|_2^2,
    \qquad
    q_g := \|x_{\mathcal{N}(g)}\|_2^2.
  \]
  Applying \Cref{lem:sjlt_fixed_vector} with $v=x_{\mathcal{N}(g)}$ yields, for all $u>0$,
  \[
    \Prb\bigl[|\tilde Z_g| > u q_g\bigr]
    \le 4\exp\bigl(-c\min\{u^2\Br,\;u s\}\bigr).
  \]
  We now convert this tail bound into explicit sub-exponential parameters.
  Rewrite the previous display in terms of an arbitrary threshold $t>0$ by setting $u=t/q_g$:
  \begin{equation}
    \label{eq:tildeZ_tail_t}
    \Prb\bigl[|\tilde Z_g| > t\bigr]
    \;\le\;
    4\exp\!\Bigl(-c\min\Bigl\{\frac{\Br\,t^2}{q_g^2},\ \frac{s\,t}{q_g}\Bigr\}\Bigr).
  \end{equation}
  We now check the mgf condition in \Cref{def:subexp_params} directly.
  Define
  \begin{equation}
    \label{eq:tildeZ_nu_b_defs}
    \nu_{g,0}^2 \coloneqq \frac{q_g^2}{\Br}
    \qquad\text{and}\qquad
    b_{g,0} \coloneqq \frac{q_g}{s}.
  \end{equation}
  Then \eqref{eq:tildeZ_tail_t} can be rewritten as
  \begin{equation}
    \label{eq:tildeZ_tail_rewrite}
    \Pr[|\tilde Z_g|>t]
    \;\le\;
    4\exp\Bigl(-c\min\{t^2/\nu_{g,0}^2,\ t/b_{g,0}\}\Bigr).
  \end{equation}
  We now derive a Bernstein-type mgf bound from \eqref{eq:tildeZ_tail_rewrite}.
  Fix $\lambda\ge 0$ and use the identity
  \begin{equation}
    \label{eq:mgf_tail_identity}
    \begin{aligned}
      \E e^{\lambda\tilde Z_g}
       & = 1 + \int_{0}^{\infty} \lambda e^{\lambda t}\Pr[\tilde Z_g>t]dt      \\
       & \quad + \int_{0}^{\infty} \lambda e^{-\lambda t}\Pr[\tilde Z_g<-t]dt.
    \end{aligned}
  \end{equation}
  Since $\Pr[\tilde Z_g>t]$ and $\Pr[\tilde Z_g<-t]$ are both bounded by \eqref{eq:tildeZ_tail_rewrite}, we obtain
  \begin{equation}
    \label{eq:mgf_tail_bound}
    \begin{aligned}
      \E e^{\lambda\tilde Z_g}
      \; & \le\;
      1 + 8|\lambda|\int_0^{\infty} e^{|\lambda| t}                             \\
      \; & \quad\cdot\exp\Bigl(-c\min\{t^2/\nu_{g,0}^2,\ t/b_{g,0}\}\Bigr)\,dt.
    \end{aligned}
  \end{equation}
  The same bound holds for $\lambda<0$ by applying \eqref{eq:mgf_tail_identity} to $-\tilde Z_g$.
  Split the integral at $t_\star\coloneqq \nu_{g,0}^2/b_{g,0}$.
  For $t\le t_\star$ we use the quadratic part of the minimum, which gives an integrand proportional to $\exp(|\lambda|t - c t^2/\nu_{g,0}^2)$.
  Completing the square yields
  \begin{equation}
    \label{eq:mgf_small_t}
    \begin{aligned}
      \int_0^{t_\star} \exp\Bigl(|\lambda|t - c t^2/\nu_{g,0}^2\Bigr)dt
      \; & \le\;
      C\,\nu_{g,0}                                           \\
      \; & \quad\cdot \exp\bigl(C\lambda^2\nu_{g,0}^2\bigr).
    \end{aligned}
  \end{equation}
  for an absolute constant $C>0$.
  For $t\ge t_\star$ we use the linear part of the minimum.
  If $|\lambda|\le c/(2b_{g,0})$, then
  \begin{align}
    \label{eq:mgf_large_t}
    \int_{t_\star}^{\infty} \exp\Bigl(-\bigl(c/b_{g,0}-|\lambda|\bigr)t\Bigr)\,dt \; \nonumber \\
    \le\; \int_{t_\star}^{\infty} \exp\Bigl(-c t/(2b_{g,0})\Bigr)\,dt\;    \le\;C\,b_{g,0}.
  \end{align}
  Substituting \eqref{eq:mgf_small_t} and \eqref{eq:mgf_large_t} into \eqref{eq:mgf_tail_bound} shows that there are absolute constants $C_0,c_0>0$ such that
  \begin{equation}
    \label{eq:tildeZ_mgf_bound}
    \begin{aligned}
      \E\exp(\lambda\tilde Z_g)
      \; & \le\;
      \exp\!\left(\tfrac{\lambda^2\,C_0\nu_{g,0}^2}{2}\right)
      \\
      \; & \quad\text{for all }|\lambda| < c_0/b_{g,0}.
    \end{aligned}
  \end{equation}
  Combining \eqref{eq:tildeZ_mgf_bound} with \Cref{def:subexp_params} gives the following explicit parameter bounds.
  There exist absolute constants $C,c'>0$ such that $\tilde Z_g$ is sub-exponential with parameters
  \begin{equation}
    \label{eq:tildeZ_params}
    \nu_g^2 \le C\,\frac{q_g^2}{\Br}
    \qquad\text{and}\qquad
    b_g \le C\,\frac{q_g}{s}.
  \end{equation}
  Since $Z_g=\tilde Z_g/\kappa$, scaling \eqref{eq:subexp_mgf} shows that $Z_g$ is sub-exponential with parameters
  \begin{equation}
    \label{eq:Zg_params}
    \nu_g^2 \le C\,\frac{q_g^2}{\kappa^2\Br}
    \qquad\text{and}\qquad
    b_g \le C\,\frac{q_g}{\kappa s}.
  \end{equation}
  We now apply \Cref{lem:bernstein_subexp} with $X_g=Z_g$.
  \begin{equation}
    \label{eq:bernstein_subexp}
    \begin{aligned}
       & \Pr\Bigl[\Bigl|\sum_{g=1}^M Z_g\Bigr|>u\|x\|_2^2\Bigr] \\
       & \qquad\le\;
      2\exp\!\Bigl(
      -c\min\Bigl\{
      \frac{u^2\|x\|_2^4}{\sum_g \nu_g^2},
      \\
       & \hspace{3.2cm}
      \frac{u\|x\|_2^2}{\max_g b_g}
      \Bigr\}
      \Bigr).
    \end{aligned}
  \end{equation}
  It remains to bound $\sum_g \nu_g^2$ and $\max_g b_g$ in terms of $\mu_{\mathrm{nbr}}(x;\pi)$.
  Recall $q_g := \|x_{\mathcal{N}(g)}\|_2^2$.
  Then $\sum_g q_g = \kappa\|x\|_2^2$ by \Cref{lem:energy_identity}, and
  \[
    \sum_{g=1}^M q_g^2 \;\le\; \bigl(\max_g q_g\bigr)\sum_{g=1}^M q_g
    = \kappa\|x\|_2^2\,\max_g q_g.
  \]
  Therefore
  \[
    \sum_{g=1}^M \nu_g^2
    \;\lesssim\;
    \frac{1}{\kappa^2\Br}\sum_g q_g^2
    \;\lesssim\;
    \frac{\|x\|_2^2\,\max_g q_g}{\kappa\Br}.
  \]
  Using $\max_g q_g = (\kappa\|x\|_2^2/M)\,\mu_{\mathrm{nbr}}(x;\pi)$ from \eqref{eq:mu_nbr_vec} yields
  \[
    \sum_g \nu_g^2 \;\lesssim\; \frac{\mu_{\mathrm{nbr}}(x;\pi)}{M\Br}\,\|x\|_2^4
    = \frac{\mu_{\mathrm{nbr}}(x;\pi)}{k}\,\|x\|_2^4.
  \]
  Similarly,
  \[
    \max_g b_g \;\lesssim\; \frac{\max_g q_g}{\kappa s}
    = \frac{\mu_{\mathrm{nbr}}(x;\pi)}{M s}\,\|x\|_2^2.
  \]
  Plugging these into Bernstein yields
  \begin{equation}
    \label{eq:bernstein_plugged}
    \begin{aligned}
       & \Pr\Bigl[\bigl|\|Sx\|_2^2-\|x\|_2^2\bigr| > u\,\|x\|_2^2\Bigr] \\
       & \qquad\le\;
      2\exp\!\Bigl(
      -c\min\Bigl\{
      \frac{u^2 k}{\mu_{\mathrm{nbr}}(x;\pi)},
      \\
       & \hspace{3.2cm}
      \frac{u\,M s}{\mu_{\mathrm{nbr}}(x;\pi)}
      \Bigr\}
      \Bigr).
    \end{aligned}
  \end{equation}
  Finally, since $\mu_{\mathrm{nbr}}(x;\pi)\le M/\kappa$, we have $Ms/\mu_{\mathrm{nbr}}(x;\pi)\ge \kappa s$, giving the stated bound.
\end{proof}

\subsection{Oblivious subspace embedding (OSE)}

For a fixed $U\in\R^{d\times r}$ with orthonormal columns, define the neighborhood coherence
\[
  \mu_{\mathrm{nbr}}(U;\pi)
  \;=\;
  \frac{M}{\kappa}\max_{g\in[M]}\|U_{\mathcal{N}(g)}\|_2^2,
\]
matching \eqref{eq:mu_nbr_def}.
For any $x=Uw$, we have
$\|x_{\mathcal{N}(g)}\|_2 \le \|U_{\mathcal{N}(g)}\|_2\|w\|_2$,
so $\mu_{\mathrm{nbr}}(x;\pi)\le \mu_{\mathrm{nbr}}(U;\pi)$.

\begin{theorem}[OSE for \sketchfamily]
  \label{thm:ose_blockperm}
  Fix $U\in\R^{d\times r}$ with orthonormal columns.
  Let $S\sim\sketchfamily$ with parameters $(M,\Br,\kappa,s)$, independent $\{\Phi_g\}$, and any fixed wiring $\pi$.
  There exist absolute constants $C,c>0$ such that if
  \begin{equation}
    \label{eq:ose_blockperm_conditions}
    \begin{aligned}
      k=M\Br
       & \;\ge\;
      C\,\frac{\mu_{\mathrm{nbr}}(U;\pi)}{\varepsilon^2}\Bigl(r+\log\frac{1}{\delta}\Bigr), \\
      \kappa s
       & \;\ge\;
      C\,\frac{1}{\varepsilon}\Bigl(r+\log\frac{1}{\delta}\Bigr),
    \end{aligned}
  \end{equation}
  then with probability at least $1-\delta$,
  \[
    \bigl\|U^\top S^\top S U - I_r\bigr\|_2 \le \varepsilon,
  \]
  equivalently $S$ is an $(\varepsilon,\delta)$-OSE for $\mathrm{range}(U)$.
\end{theorem}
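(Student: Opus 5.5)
We combine the fixed-vector bound of \Cref{prop:fixed_vec_blockperm} with a standard $\varepsilon$-net argument over the unit sphere of $\R^r$, conditioning throughout on the fixed wiring $\pi$. Write $E := U^\top S^\top S U - I_r$, which is symmetric, so $\|E\|_2 = \sup_{\|w\|_2=1}|w^\top E w| = \sup_{\|w\|_2=1}\bigl|\|SUw\|_2^2-\|Uw\|_2^2\bigr|$, using $\|Uw\|_2=\|w\|_2$ by orthonormality.

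\textbf{Step 1 (fixed direction).} For a fixed unit $w$, set $x=Uw$. We already have $\mu_{\mathrm{nbr}}(x;\pi)\le\mu_{\mathrm{nbr}}(U;\pi)$. \Cref{prop:fixed_vec_blockperm} with $u=\varepsilon/2$ then gives
\[
\Pr\bigl[|w^\top E w|>\varepsilon/2\bigr]\le 2\exp\Bigl(-c\min\Bigl\{\tfrac{\varepsilon^2 k}{4\mu_{\mathrm{nbr}}(U;\pi)},\ \tfrac{\varepsilon\kappa s}{2}\Bigr\}\Bigr).
\]
Under \eqref{eq:ose_blockperm_conditions} with $C$ large, both terms in the minimum exceed $C'(r+\log(1/\delta))$. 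Hence the failure probability is at most $2e^{-C'r}\delta$.

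\textbf{Step 2 (net and union bound).} Take a $1/4$-net $\mathcal{T}$ of the unit sphere $S^{r-1}$ with $|\mathcal{T}|\le 9^r$. A union bound gives $\max_{w\in\mathcal{T}}|w^\top E w|\le\varepsilon/2$ with probability at least $1-2\cdot 9^r e^{-C'r}\delta\ge 1-\delta$, once $C'\ge \log 9+\log 2$. This fixes how large $C$ must be. Note that the net is deterministic given $U$, so no measurability issues arise.

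\textbf{Step 3 (net to sphere).} We use the standard quadratic-form net lemma: for symmetric $E$ and a $\theta$-net with $\theta<1/2$, $\|E\|_2\le (1-2\theta)^{-1}\max_{w\in\mathcal{T}}|w^\top Ew|$. To see this, take $w$ attaining the norm and $w_0\in\mathcal T$ with $\|w-w_0\|\le\theta$. Then $|w^\top Ew-w_0^\top Ew_0|=|(w-w_0)^\top E(w+w_0)|\le 2\theta\|E\|_2$. With $\theta=1/4$ this yields $\|E\|_2\le 2\cdot\varepsilon/2=\varepsilon$.

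\textbf{Main obstacle.} The only real subtlety is making the coherence bound uniform over directions. Step 1 must use $\mu_{\mathrm{nbr}}(U;\pi)$ rather than the vector-dependent $\mu_{\mathrm{nbr}}(x;\pi)$, which is exactly what the inequality $\|x_{\mathcal N(g)}\|_2\le\|U_{\mathcal N(g)}\|_2\|w\|_2$ provides. Beyond that, the remaining work is bookkeeping of absolute constants so that the $9^r$ net cardinality is absorbed by the $r$ term in $t=r+\log(1/\delta)$, for both the $k$ and the $\kappa s$ conditions. Since $\pi$ is fixed, the conclusion holds conditionally on $\pi$, and therefore also for random $\pi$ independent of the SJLT blocks.
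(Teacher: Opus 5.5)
Your proposal is correct and matches the paper's proof step for step. The paper applies \Cref{prop:fixed_vec_blockperm} with $u=\varepsilon/2$ at each point of a $1/4$-net of size at most $9^r$, using $\mu_{\mathrm{nbr}}(Ut;\pi)\le\mu_{\mathrm{nbr}}(U;\pi)$ to make the bound uniform, then takes a union bound in which the $9^r$ factor is absorbed by the $r$ term of $r+\log(1/\delta)$, and finally passes from the net to the sphere via $(1-2\eta)\|E\|_2\le\max_{t\in\mathcal{T}}|t^\top E t|$ — exactly your argument, with your explicit constant bookkeeping ($C'\ge\log 9+\log 2$) simply filling in the paper's ``for a suitable absolute constant $C$.''
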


\begin{proof}
  We prove a uniform quadratic form bound from the fixed-vector guarantee in \Cref{prop:fixed_vec_blockperm}.
  Let
  \[
    A \coloneqq U^\top S^\top S U - I_r.
  \]
  Since $A$ is symmetric, $\|A\|_2 = \sup_{\|x\|_2=1} |x^\top A x|$.

  \paragraph{Step 1: build an $\eta$-net.}
  Fix $\eta=1/4$.
  A set $\mathcal{T}\subseteq\{x\in\R^r:\|x\|_2=1\}$ is an $\eta$-net if for every unit vector $x$ there exists $t\in\mathcal{T}$ with $\|x-t\|_2\le \eta$.
  There exists such a net with cardinality
  \begin{equation}
    \label{eq:net_cardinality}
    |\mathcal{T}| \le \left(1+\frac{2}{\eta}\right)^r = 9^r.
  \end{equation}
  This follows, for example, from \citet[Corollary~4.2.13]{vershynin2018hdp}.

  \paragraph{Step 2: control $\|SUt\|_2^2$ on the net.}
  Fix $t\in\mathcal{T}$ and set $x=Ut$.
  Then $\|x\|_2=1$ and $\mu_{\mathrm{nbr}}(x;\pi)\le \mu_{\mathrm{nbr}}(U;\pi)$.
  Applying \Cref{prop:fixed_vec_blockperm} with $u=\varepsilon/2$ gives
  \begin{equation}
    \begin{aligned}
       & \Pr\Bigl[\bigl|\|SUt\|_2^2-\|Ut\|_2^2\bigr| > \tfrac{\varepsilon}{2}\,\|Ut\|_2^2\Bigr] \\
       & \qquad\le\;
      2\exp\!\Bigl(
      -c\min\Bigl\{
      \frac{\varepsilon^2 k}{\mu_{\mathrm{nbr}}(U;\pi)},
      \; \varepsilon\,\kappa s
      \Bigr\}
      \Bigr).
    \end{aligned}
  \end{equation}
  Since $\|Ut\|_2=1$, this is the same as $\Pr[|t^\top A t|>\varepsilon/2]$.

  \paragraph{Step 3: union bound over the net.}
  Let $\mathcal{E}$ be the event that $|t^\top A t|\le \varepsilon/2$ holds for all $t\in\mathcal{T}$.
  By \eqref{eq:net_cardinality} and a union bound,
  \begin{equation}
    \begin{aligned}
      \Pr[\mathcal{E}^c]
       & \le 2\cdot 9^r\,
      \exp\!\Bigl(
      -c\min\Bigl\{
      \frac{\varepsilon^2 k}{\mu_{\mathrm{nbr}}(U;\pi)},
      \; \varepsilon\,\kappa s
      \Bigr\}
      \Bigr).
    \end{aligned}
  \end{equation}
  Under \eqref{eq:ose_blockperm_conditions} (for a suitable absolute constant $C$), we have $\Pr[\mathcal{E}]\ge 1-\delta$.

  \paragraph{Step 4: extend from the net to the full sphere.}
  We show that $\mathcal{E}$ implies $\|A\|_2\le \varepsilon$.
  This is a standard argument based on covering the sphere by a net.
  See, for example, \citet[Exercise~4.4.4]{vershynin2018hdp}.
  Let $x_\star\in\R^r$ be a unit vector achieving $\|A\|_2 = |x_\star^\top A x_\star|$.
  Pick $t\in\mathcal{T}$ with $\|x_\star-t\|_2\le \eta$.
  Then
  \begin{equation}
    \label{eq:net_to_sphere_diff}
    \begin{aligned}
      |x_\star^\top A x_\star - t^\top A t|
       & = |(x_\star-t)^\top A x_\star + t^\top A (x_\star-t)| \\
       & \le 2\,\|x_\star-t\|_2\,\|A\|_2
      \le 2\eta\,\|A\|_2.
    \end{aligned}
  \end{equation}
  Rearranging \eqref{eq:net_to_sphere_diff} yields
  \[(1-2\eta)\,\|A\|_2 \le \sup_{t\in\mathcal{T}} |t^\top A t|.
  \]
  On $\mathcal{E}$ we have $\sup_{t\in\mathcal{T}} |t^\top A t|\le \varepsilon/2$.
  With $\eta=1/4$, this gives $\|A\|_2\le \varepsilon$.
  This is equivalent to the OSE statement.
\end{proof}

\subsection{How $\kappa$ enters the bounds}
\label{sec:app_kappa_bounds}

At this point, the fixed-vector and OSE bounds depend on the neighborhood coherence $\mu_{\mathrm{nbr}}(U;\pi)$.
The following lemma relates it to the standard localized-sketching quantity $\mu_{\mathrm{blk}}(U)$.
It yields inequality~\eqref{eq:mu_nbr_bounds} in the main text and shows that, in the worst case, increasing $\kappa$ can buy at most a factor-$1/\kappa$ improvement.
The next subsection explains why randomized wirings can do substantially better: independent permutations \emph{smooth} neighborhood coherence toward one as $\kappa$ grows.

\begin{lemma}[Bounding neighborhood coherence by block coherence]
  \label{lem:mu_bounds}
  Let $U=[U^{(1)};\ldots;U^{(M)}]$ and define $\mu_{\mathrm{blk}}(U)=M\max_h\|U^{(h)}\|_2^2$.
  Then for any wiring $\pi$,
  \[
    \frac{1}{\kappa}\,\mu_{\mathrm{blk}}(U) \;\le\; \mu_{\mathrm{nbr}}(U;\pi) \;\le\; \mu_{\mathrm{blk}}(U).
  \]
\end{lemma}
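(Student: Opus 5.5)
The plan is to bound $\|U_{\mathcal{N}(g)}\|_2^2$ above and below in terms of the individual block norms $\|U^{(h)}\|_2^2$, using only the stacked structure of $U_{\mathcal{N}(g)}$. The key observation is the Gram identity
\[
  U_{\mathcal{N}(g)}^\top U_{\mathcal{N}(g)} \;=\; \sum_{\ell=1}^{\kappa} U^{(\pi_\ell(g))\top} U^{(\pi_\ell(g))},
\]
which holds because the edge-disjointness assumption makes the $\kappa$ blocks in $\mathcal{N}(g)$ distinct. Consequently $\|U_{\mathcal{N}(g)}\|_2^2 = \bigl\|\sum_{\ell} U^{(\pi_\ell(g))\top}U^{(\pi_\ell(g))}\bigr\|_2$, a sum of $\kappa$ positive semidefinite matrices.

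\emph{Upper bound.} Apply the triangle inequality for the spectral norm to the sum of PSD matrices:
\[
  \|U_{\mathcal{N}(g)}\|_2^2 \le \sum_{\ell=1}^{\kappa}\|U^{(\pi_\ell(g))}\|_2^2 \le \kappa \max_h \|U^{(h)}\|_2^2 = \frac{\kappa}{M}\mu_{\mathrm{blk}}(U).
\]
Taking the maximum over $g$ and multiplying by $M/\kappa$ gives $\mu_{\mathrm{nbr}}(U;\pi)\le\mu_{\mathrm{blk}}(U)$.

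\emph{Lower bound.} Let $h^\star$ attain $\max_h\|U^{(h)}\|_2^2$. Since $\pi_1$ is a bijection, there is a $g^\star=\pi_1^{-1}(h^\star)$ with $h^\star\in\mathcal{N}(g^\star)$. Because $U^{(h^\star)}$ is a row-submatrix of $U_{\mathcal{N}(g^\star)}$, we have $\|U^{(h^\star)}\|_2\le\|U_{\mathcal{N}(g^\star)}\|_2$; equivalently, PSD ordering in the Gram sum gives $U^{(h^\star)\top}U^{(h^\star)}\preceq U_{\mathcal{N}(g^\star)}^\top U_{\mathcal{N}(g^\star)}$. Therefore
\[
  \mu_{\mathrm{nbr}}(U;\pi)\ge \frac{M}{\kappa}\|U_{\mathcal{N}(g^\star)}\|_2^2 \ge \frac{M}{\kappa}\|U^{(h^\star)}\|_2^2 = \frac{1}{\kappa}\mu_{\mathrm{blk}}(U).
\]

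Neither step is a real obstacle. The one point needing care is the Gram identity: it uses that the blocks in each neighborhood are distinct (guaranteed by edge-disjointness), and that a row-submatrix has spectral norm no larger than the full matrix. The upper bound remains valid even without distinctness, since a repeated block would only be counted with multiplicity.
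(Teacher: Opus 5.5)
Your proof is correct and takes essentially the same approach as the paper. The upper bound is $\|U_{\mathcal{N}(g)}\|_2^2 \le \sum_{\ell}\|U^{(\pi_\ell(g))}\|_2^2 \le \kappa\max_h\|U^{(h)}\|_2^2$, which you justify explicitly via the Gram identity and the triangle inequality for PSD summands; the lower bound picks a maximizing block $h^\star$, takes its preimage $g^\star$ under one permutation, and uses that a row-submatrix has no larger spectral norm (the only nit is that edge-disjointness is not needed for the Gram identity, since the appendix defines $U_{\mathcal{N}(g)}$ as the ordered concatenation over $\ell=1,\ldots,\kappa$).
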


\begin{proof}
  For any $g$,
  \begin{equation}
    \begin{aligned}
      \|U_{\mathcal{N}(g)}\|_2^2
       & = \left\|\begin{bmatrix}
                    U^{(\pi_1(g))} \\
                    \vdots         \\
                    U^{(\pi_\kappa(g))}
                  \end{bmatrix}\right\|_2^2          \\
       & \le \sum_{\ell=1}^\kappa \|U^{(\pi_\ell(g))}\|_2^2 \\
       & \le \kappa \max_h \|U^{(h)}\|_2^2 .
    \end{aligned}
  \end{equation}
  Taking $\max_g$ and multiplying by $M/\kappa$ gives $\mu_{\mathrm{nbr}}(U;\pi)\le \mu_{\mathrm{blk}}(U)$.

  For the lower bound, pick $h^\star\in\arg\max_h\|U^{(h)}\|_2^2$ and choose any $\ell$.
  There exists $g^\star$ with $\pi_\ell(g^\star)=h^\star$ (since $\pi_\ell$ is a permutation), so
  $\|U_{\mathcal{N}(g^\star)}\|_2^2\ge \|U^{(h^\star)}\|_2^2$.
  Thus
  \begin{equation}
    \begin{aligned}
      \mu_{\mathrm{nbr}}(U;\pi)
       & = \frac{M}{\kappa}\max_g\|U_{\mathcal{N}(g)}\|_2^2 \\
       & \ge \frac{M}{\kappa}\|U^{(h^\star)}\|_2^2
      = \frac{1}{\kappa}\mu_{\mathrm{blk}}(U).
    \end{aligned}
  \end{equation}
\end{proof}

\subsection{Smoothing of Neighborhood Coherence with Randomized Permutations}
\label{sec:perm_smoothing}

The deterministic bounds in \Cref{lem:mu_bounds} capture the best- and worst-case effects of increasing
$\kappa$.
If the wiring is randomized, we can say more: a union of random permutations tends to \emph{balance}
block mass across neighborhoods, which drives $\mu_{\mathrm{nbr}}(U;\pi)$ toward~$1$.

\begin{lemma}[Matrix Chernoff (upper tail)]
  \label{lem:matrix_chernoff}
  Let $\{X_\ell\}_{\ell=1}^\kappa$ be independent random PSD matrices in $\R^{r\times r}$.
  Assume $0\preceq X_\ell\preceq R I_r$ almost surely and $\E[X_\ell]=\mu I_r$ for some $\mu>0$.
  Then for any $\tau\ge 0$,
  \begin{align*}
    \Prb\!
    \left[\lambda_{\max}\!\left(\sum_{\ell=1}^\kappa X_\ell\right) \ge (1+\tau)\,\kappa\mu\right] \\
    \;\le\; r\,\exp\!\left(-c\,\min\{\tau^2,\tau\}\,\frac{\kappa\mu}{R}\right)
  \end{align*}
  for an absolute constant $c>0$.
\end{lemma}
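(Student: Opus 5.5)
We will prove \Cref{lem:matrix_chernoff} by the matrix Laplace transform method, following Tropp's derivation of the matrix Chernoff inequality, and then simplify the resulting tail into the $\min\{\tau^2,\tau\}$ form. First we normalize. Set $Y_\ell := X_\ell/R$, so that $0\preceq Y_\ell\preceq I_r$ and $\E[Y_\ell]=(\mu/R)I_r$. Write $m:=\kappa\mu/R$. The event in question is then $\{\lambda_{\max}(\sum_\ell Y_\ell)\ge (1+\tau)m\}$.

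\textbf{Step 1: Laplace transform bound.} For every $\theta>0$, Markov's inequality applied to $e^{\theta\lambda_{\max}}\le \mathrm{tr}\,e^{\theta\sum_\ell Y_\ell}$ gives a tail bound. We then invoke the subadditivity of matrix cumulant generating functions for independent summands, which is a consequence of Lieb's concavity theorem; we take this as a standard black box from Tropp. Together these yield
\[
  \Prb\Bigl[\lambda_{\max}\Bigl(\textstyle\sum_\ell Y_\ell\Bigr)\ge s\Bigr]
  \le e^{-\theta s}\,\mathrm{tr}\exp\Bigl(\textstyle\sum_{\ell=1}^\kappa \log \E e^{\theta Y_\ell}\Bigr).
\]
This step is the only genuinely nontrivial ingredient, since independence alone does not let the trace exponential factor. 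We therefore expect it to be the main obstacle if it must be proved rather than cited, and we would simply cite it.

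\textbf{Step 2: bounding each mgf.} The scalar function $y\mapsto e^{\theta y}$ is convex on $[0,1]$, so $e^{\theta y}\le 1+(e^\theta-1)y$ there. Because $0\preceq Y_\ell\preceq I$, the transfer rule for spectral functions gives $e^{\theta Y_\ell}\preceq I+(e^\theta-1)Y_\ell$. Taking expectations,
\[
  \E e^{\theta Y_\ell}\preceq \bigl(1+(e^\theta-1)\mu/R\bigr)I_r .
\]
The logarithm is operator monotone, and $\log(1+a)\le a$. Hence $\log\E e^{\theta Y_\ell}\preceq (e^\theta-1)(\mu/R)I_r$. Since the trace exponential is monotone, the trace term is at most $r\exp\bigl((e^\theta-1)m\bigr)$.

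\textbf{Step 3: optimize $\theta$ and simplify.} Take $s=(1+\tau)m$ and $\theta=\log(1+\tau)$. This gives the classical bound
\[
  r\Bigl(\tfrac{e^\tau}{(1+\tau)^{1+\tau}}\Bigr)^{m}=r\exp\bigl(-m\,h(\tau)\bigr),
  \qquad h(\tau)=(1+\tau)\log(1+\tau)-\tau .
\]
The remaining task is the elementary inequality $h(\tau)\ge \tau^2/(2(1+\tau/3))$. It holds because both sides vanish at $\tau=0$ along with their first derivatives, and the second derivatives compare, $1/(1+\tau)\ge$ that of the right-hand side. This lower bound is at least $\tau^2/(8/3)$ when $\tau\le1$, and at least $3\tau/8$ when $\tau\ge1$. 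Hence $h(\tau)\ge c\min\{\tau^2,\tau\}$ with $c=3/8$. Substituting $m=\kappa\mu/R$ gives the stated bound.
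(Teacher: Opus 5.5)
Your proof is correct and follows the same route as the paper. The paper simply cites Tropp's matrix Chernoff bound $r\bigl(e^\tau/(1+\tau)^{1+\tau}\bigr)^{\kappa\mu/R}$ and then applies the elementary inequality $e^\tau/(1+\tau)^{1+\tau}\le \exp(-c\min\{\tau^2,\tau\})$; you unpack Tropp's Laplace-transform/Lieb argument behind that bound and make the constant explicit ($c=3/8$ via $h(\tau)\ge \tau^2/(2(1+\tau/3))$), with the trivial edge case $\tau=0$ (where $\theta=0$ is not admissible) covered because the claimed bound $r\ge 1$ holds vacuously.
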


\noindent
\Cref{lem:matrix_chernoff} is a standard matrix Chernoff inequality.
One convenient reference is \citet[Corollary~5.2]{tropp2012userfriendly}.
Our stated form follows by applying that corollary and using the elementary bound
$\tfrac{\mathrm{e}^\tau}{(1+\tau)^{1+\tau}} \le \exp\bigl(-c\min\{\tau^2,\tau\}\bigr)$ for an absolute constant $c>0$.

\begin{proposition}[Random permutations smooth neighborhood coherence]
  \label{prop:perm_smoothing}
  Let $U\in\R^{d\times r}$ have orthonormal columns and be partitioned into $M$ row blocks
  $U=[U^{(1)};\ldots;U^{(M)}]$ of size $\Bc$.
  Let $\mu_{\mathrm{blk}}(U)=M\max_{h\in[M]}\|U^{(h)}\|_2^2$.
  Suppose the wiring permutations $\{\pi_\ell\}_{\ell=1}^\kappa$ are independent and uniformly random.
  Then with probability at least $1-\delta$ over the choice of the permutations,
  let $L := \log\!
    \frac{2Mr}{\delta}$.
  \begin{equation}
    \label{eq:perm_smoothing_bound}
    \begin{aligned}
      \mu_{\mathrm{nbr}}(U;\pi)
      \;\le\;
      1 + C\left(
      \sqrt{\frac{\mu_{\mathrm{blk}}(U)\,L}{\kappa}}
      + \frac{\mu_{\mathrm{blk}}(U)\,L}{\kappa}
      \right).
    \end{aligned}
  \end{equation}
  for an absolute constant $C>0$.
  In particular, when $\kappa \gtrsim \mu_{\mathrm{blk}}(U)\log\!
    \bigl(\tfrac{Mr}{\delta}\bigr)$, we have
  $\mu_{\mathrm{nbr}}(U;\pi)=\bigO(1)$ with probability $1-\delta$.
\end{proposition}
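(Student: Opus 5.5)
Fix an output block $g\in[M]$ and write
\[
  U_{\mathcal{N}(g)}^\top U_{\mathcal{N}(g)} = \sum_{\ell=1}^\kappa X_\ell,
  \qquad X_\ell := U^{(\pi_\ell(g))\top}U^{(\pi_\ell(g))}\in\R^{r\times r}.
\]
Since $\|U_{\mathcal{N}(g)}\|_2^2=\lambda_{\max}\bigl(\sum_\ell X_\ell\bigr)$, it suffices to bound this maximal eigenvalue for each fixed $g$ and then take a union bound over the $M$ blocks.

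Under the model where $\pi_1,\ldots,\pi_\kappa$ are independent and uniformly random, the image $\pi_\ell(g)$ is uniform on $[M]$ for each fixed $g$. The indices $\pi_1(g),\ldots,\pi_\kappa(g)$ are independent across $\ell$, so the $X_\ell$ are independent PSD matrices. (Edge-disjointness is dropped in this model, exactly as the Remark after the main-text statement allows.) Their common mean is
\[
  \E X_\ell=\frac1M\sum_{h}U^{(h)\top}U^{(h)}=\frac1M U^\top U=\frac1M I_r,
\]
so $\mu=1/M$. Each $X_\ell$ satisfies $X_\ell\preceq \max_h\|U^{(h)}\|_2^2\, I_r = \frac{\mu_{\mathrm{blk}}(U)}{M}I_r$, giving $R=\mu_{\mathrm{blk}}(U)/M$. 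The key ratio is then
\[
  \frac{\kappa\mu}{R}=\frac{\kappa}{\mu_{\mathrm{blk}}(U)}.
\]

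Apply \Cref{lem:matrix_chernoff} to this sum:
\[
  \Prb\Bigl[\lambda_{\max}\Bigl(\sum_\ell X_\ell\Bigr)\ge (1+\tau)\frac{\kappa}{M}\Bigr]\le r\exp\Bigl(-c\min\{\tau^2,\tau\}\frac{\kappa}{\mu_{\mathrm{blk}}(U)}\Bigr).
\]
A union bound over $g\in[M]$ gives total failure probability at most $Mr\exp(\cdots)$. Set this equal to $\delta$ (the factor $2$ inside $L$ gives slack). This requires
\[
  \min\{\tau^2,\tau\}\ge \frac{\mu_{\mathrm{blk}}(U)\,L}{c\,\kappa}.
\]
Let $a:=\mu_{\mathrm{blk}}(U)L/(c\kappa)$ and choose $\tau=\max\{\sqrt a,a\}\le \sqrt a+a$. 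This works in both regimes: if $a\le1$ then $\tau=\sqrt a$ and $\min\{\tau^2,\tau\}=a$; if $a>1$ then $\tau=a$ and $\min\{\tau^2,\tau\}=a$.

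On the complementary event, every $g$ satisfies $\|U_{\mathcal{N}(g)}\|_2^2\le(1+\tau)\kappa/M$. Multiplying by $M/\kappa$ gives
\[
  \mu_{\mathrm{nbr}}(U;\pi)\le 1+\tau\le 1+C\Bigl(\sqrt{\mu_{\mathrm{blk}}(U)L/\kappa}+\mu_{\mathrm{blk}}(U)L/\kappa\Bigr),
\]
with $C$ absorbing $c^{-1/2}$ and $c^{-1}$. The "in particular" clause follows immediately: when $\kappa\gtrsim \mu_{\mathrm{blk}}(U)L$, both terms are $O(1)$.

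The main point needing care is justifying independence and the exact mean of the $X_\ell$. It relies on the independent-permutation model rather than the edge-disjoint one, and on the fact that a uniform permutation's value at a fixed $g$ is uniform. Beyond that, the only subtlety is the two-regime inversion of $\min\{\tau^2,\tau\}$, handled by the choice of $\tau$ above.
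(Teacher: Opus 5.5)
Your proposal is correct and follows essentially the same route as the paper's proof. Both apply the matrix Chernoff bound to $X_\ell = U^{(\pi_\ell(g))\top}U^{(\pi_\ell(g))}$ with $\mu = 1/M$ and $R = \mu_{\mathrm{blk}}(U)/M$, take a union bound over $g\in[M]$, and pick $\tau \asymp \sqrt{a}+a$; your two-regime choice $\tau=\max\{\sqrt a,a\}$ simply makes the paper's ``$\tau\asymp$'' step explicit.
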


\begin{proof}
  Define PSD matrices $A_h = U^{(h)\top}U^{(h)}\in\R^{r\times r}$.
  Since $U$ has orthonormal columns, $\sum_{h=1}^M A_h = U^\top U = I_r$.
  Let $\alpha := \max_h\|U^{(h)}\|_2^2$, so $\mu_{\mathrm{blk}}(U)=M\alpha$.
  Let $L := \log\!
    \bigl(\tfrac{2Mr}{\delta}\bigr)$.

  Fix an output block $g\in[M]$ and consider the neighborhood Gram matrix
  \[
    G_g := U_{\mathcal{N}(g)}^{\top}U_{\mathcal{N}(g)}
    = \sum_{\ell=1}^\kappa A_{\pi_\ell(g)}.
  \]
  Let $X_\ell := A_{\pi_\ell(g)}$.
  For fixed $g$, the indices $\pi_\ell(g)$ are i.i.d.
  uniform on $[M]$ (independent uniform permutations), hence
  \[
    \E[X_\ell] = \frac{1}{M}\sum_{h=1}^M A_h = \frac{1}{M}
    I_r,
    \qquad
    0\preceq X_\ell\preceq \alpha I_r.
  \]
  Applying \Cref{lem:matrix_chernoff} with $R=\alpha$ and $\mu=1/M$ gives, for any $\tau\ge 0$,
  \[
    \Prb\!\left[\|G_g\|_2 \ge (1+\tau)\,\frac{\kappa}{M}\right]
    \;\le\;
    r\,\exp\!\left(-c\,\min\{\tau^2,\tau\}\,\frac{\kappa}{\alpha M}\right).
  \]
  Choose
  $\tau \asymp \sqrt{\tfrac{\alpha M}{\kappa}
      L} + \tfrac{\alpha M}{\kappa}L$
  so that the RHS is at most $\delta/M$.
  A union bound over $g\in[M]$ yields that with probability at least $1-\delta$,
  \[
    \max_{g\in[M]}\|G_g\|_2 \le (1+\tau)\frac{\kappa}{M}.
  \]
  Finally, $\mu_{\mathrm{nbr}}(U;\pi)=\tfrac{M}{\kappa}\max_g\|G_g\|_2 \le 1+\tau$.
  Substituting $\alpha M = \mu_{\mathrm{blk}}(U)$ gives \eqref{eq:perm_smoothing_bound}.
\end{proof}

\begin{remark}[Theory vs.\ Practice: Independence vs.\ Distinctness]
  \label{sec:appendix-perm-practice}
  For theorems in \Cref{sec:app_theory} we model $\pi_1,\dots,\pi_\kappa$ as \emph{independent} uniform random permutations.
  Our kernel \method instead generates $\kappa$ \emph{distinct} permutations from a structured family (see \Cref{sec:app_perm_wiring}),
  which introduces mild dependence between the neighborhood indices $\{\pi_\ell(g)\}_{\ell=1}^\kappa$.
  This mismatch is small in the intended regime $\kappa\ll M$: for events depending only on a single neighborhood,
  sampling without replacement is close to sampling with replacement, and collisions $\pi_\ell(g)=\pi_{\ell'}(g)$ are rare when $\kappa^2\ll M$.
  We therefore present the i.i.d.
  model to keep the analysis clean and highlight the dominant scaling with $\kappa$.
  Tightening the dependence analysis is left as an interesting direction for future work.
\end{remark}

%% file: sections/91_appendix_kernel.tex
\section{Additional details on \method}
\label{sec:app_kernel}

\subsection{Split-$\Bc$ fallback}
\label{sec:splitB_fallback}
\method assigns each output tile to exactly one thread-block and performs no global atomics.
When $M\cdot\lceil \npts/\Tn\rceil$ is too small to saturate the GPU and achieve maximum occupancy (e.g., small $\npts$ and large $\din$), we resort to a split-$\Bc$ fallback analogous to split-$K$ GEMM~\cite{osama2023stream}:
we slice each input block along rows, launch an extra grid dimension for the slice id, and accumulate partial tiles with global atomics.
While this materially deviates from our guiding principle of eliminating global atomics, it still reduces global atomics by a factor of the number of slices which is less than prior kernels such as \cite{hu2025grass}, and it enables high occupancy for small problems.

\subsection{Tuning}
\method has several kernel parameters that we encode as compile-time constants via templates: the block sizes $(\Br,\Bc)$ and the tile sizes $(\Tk,\Tn)$.
Tile sizes affect performance by controlling shared memory usage and occupancy.
In our evaluation, we tune these parameters once over a small menu of $(\Br,\Tn,\Tk,\kappa,\spar)$ templates for different input shapes and dispatch the best choice of tiling based on observed performance.
Our tuning procedure is not exhaustive, and we leave a more thorough exploration of the kernel design space to future work.

%% file: sections/92_appendix_blockrowsketch.tex
\section{\blockrowsketch: A Fast but Fragile Alternative}
\label{sec:blockrowsketch}

While designing \method, our choices were guided by balancing the tension between GPU efficiency and sketching robustness.
An interesting direction we also explored is block-row sampling sketches, which are designed to maximize GPU friendliness with less concern for robustness.
Existing theoretical work explores these ideas.
We refer the reader to~\cite{drineas2012fast,tropp2011improved,chenakkod2024optimal} for more details.
The primary bottlenecks in \method are the shared-memory atomics, and the need to repeatedly stream $\kappa$ input blocks per output block.
The latter is a consequence of the fixed block column sparsity pattern of \sketchfamily, which was designed specifically to maintain theoretical robustness of the sketch.
An alternative is to relax this requirement, and simply independently sample $\kappa$ blocks per row of $S$, and simplify the kernel to pure gather operations without any atomics at any level.
We call this approach \blockrowsketch.

A natural consequence is that we are no longer able to guarantee a fixed number of nonzeros per column of $S$, and in fact it is possible for some columns to have no nonzeros.
This is terrible for sketching quality, as those input coordinates are completely ignored.
Specifically, its distortion depends on (block) leverage scores and can deteriorate in high-coherence regimes, especially when $\din$ is large and $\dout$ is small.
This fragility means that we cannot obtain OSE style guarantees for \blockrowsketch.
However, the simplified gather-reduce implementation is extremely fast as it both eliminates global atomics and reduces the number of times the input is read from $\kappa$ to $1$.

Interestingly, we find that despite its theoretical fragility, \blockrowsketch can perform well in practice on a small subset of datasets, while being significantly faster than \method and all other baselines.
We show this in the ablations in \Cref{sec:app_additional_experiments}.

For reference, we include pseudocode for one thread-block of the \blockrowsketch kernel in \Cref{alg:blockrowsketch}.

\begin{algorithm}[H]
  \caption{\blockrowsketch thread-block level pseudocode}
  \label{alg:blockrowsketch}
  \begin{algorithmic}[1]
    \REQUIRE $A\in\R^{\din\times\npts}$, block sizes $\Bc,\Br$, tile width $\Tn$, and integers $\kappa,\spar$
    \STATE Partition $A$ into input blocks $A^{(h)}\in\R^{\Bc\times\npts}$ for $h\in[M]$
    \FOR{each thread-block indexed by $(g,j)$ with $g\in[M]$ and $j\in\{0,\ldots,\lceil\npts/\Tn\rceil-1\}$}
    \STATE Sample a block neighborhood $\mathcal{N}_{\mathrm{row}}(g)\subset[M]$ with $|\mathcal{N}_{\mathrm{row}}(g)|=\kappa$
    \STATE Initialize a shared memory output tile $sY\in\R^{\Br\times\Tn}$ to zero
    \FOR{each $h\in\mathcal{N}_{\mathrm{row}}(g)$}
    \STATE Load $A^{(h)}_{:,\,j\Tn:(j+1)\Tn}$ into shared memory
    \FOR{$r=1$ to $\Br$}
    \STATE Sample indices $i_1,\dots,i_{\spar}\in[\Bc]$ uniformly and signs $\sigma_1,\dots,\sigma_{\spar}\in\{\pm 1\}$
    \STATE $sY_{r,:}\leftarrow sY_{r,:} + \tfrac{1}{\sqrt{\kappa\spar}}\sqrt{\tfrac{\din}{\dout}}\sum_{t=1}^{\spar}\sigma_t\,A^{(h)}_{i_t,\,j\Tn:(j+1)\Tn}$
    \ENDFOR
    \ENDFOR
    \STATE Write $sY$ to $SA[g\Br:(g+1)\Br,\,j\Tn:(j+1)\Tn]$
    \ENDFOR
  \end{algorithmic}
\end{algorithm}

%% file: sections/93_appendix_perm_wiring.tex

\section{Permutation Wiring and On-The-Fly Generation}
\label{sec:app_perm_wiring}

The block-level sparsity pattern of \sketchfamily can be viewed as a bipartite graph between
output blocks $g\in[M]$ (rows of $S$ grouped into blocks of size $\Br$) and input blocks
$h\in[M]$ (rows of the input grouped into blocks of size $\Bc$).
In the paper we describe this pattern as a \emph{union of permutations}:
we draw $\kappa$ bijections $\{\pi_\ell\}_{\ell=1}^{\kappa}$ on $[M]$, and connect
each output block $g$ to the $\kappa$ input blocks $\pi_1(g),\ldots,\pi_\kappa(g)$.

\paragraph{Multiset versus set union.}
It is worth being precise about what ``union'' means.
If we keep the $\kappa$ matchings as a \emph{multiset}, then each $g$ has exactly $\kappa$
incident edges and each $h$ also appears exactly $\kappa$ times when edges are counted
with multiplicity.
This multiset view is the one used in our proofs.
It is also consistent with the kernel: if two matchings place an edge on the same
block position $(g,h)$, the resulting sketch simply adds two independent SJLT blocks
into the same output tile.

If instead we interpret the block pattern as a $0/1$ sparsity mask (a \emph{set} of edges),
then two distinct permutations can overlap on some edges.
In that case, the number of \emph{distinct} nonzero blocks in a block row can be
smaller than $\kappa$.
For example, with $M=4$, the identity permutation and a swap of the first two entries
are distinct, but their set union gives only one distinct neighbor for $g=3$ and $g=4$.
For our purposes this overlap is undesirable because it reduces mixing and can lead to
redundant reads.

\paragraph{Edge-disjoint block wiring.}
To avoid such overlaps, \sketchfamily uses the stronger block wiring in which the
$\kappa$ neighbors of every $g$ are distinct.
Equivalently, the $\kappa$ perfect matchings are edge-disjoint.
This yields a simple $\kappa$-regular bipartite graph and ensures that every block row and
block column contains exactly $\kappa$ nonzero blocks.

\paragraph{Full-cycle affine construction.}
Rather than materializing $\kappa$ permutation tables, the kernel generates neighbors
on the fly using an affine map modulo $M$.
Let
\begin{equation}
  f(x) \defeq (a x + b) \bmod M.
\end{equation}
The kernel chooses integers $a$ and $b$ such that the recurrence
$x_{t+1}=f(x_t)$ has period $M$.
One sufficient set of conditions is the classical full-period criterion for linear congruential
generators\footnote{These conditions are also used by \method to ensure that iterates visit all
  block indices before repeating.
} \cite{hull1962random}:
\begin{enumerate}[label=(\alph*)]
  \item $\gcd(b,M)=1$,
  \item $a-1$ is divisible by every prime factor of $M$,
  \item if $4\mid M$, then $4\mid (a-1)$.
\end{enumerate}

Under these conditions, for any starting value $x_0\in[M]$, the sequence
$x_0,x_1,\ldots,x_{M-1}$ visits every element of $[M]$ exactly once.

We then define the $\kappa$ neighbors of block $g$ by iterating this map:
\begin{equation}
  \pi_\ell(g) \defeq f^{\ell}(g),\qquad \ell=1,\ldots,\kappa,
\end{equation}
where $f^{\ell}$ denotes $\ell$-fold composition.
As long as $\kappa\le M$, these neighbors are distinct for every $g$, and the resulting
block sparsity pattern is exactly $\kappa$-regular as desired.
This is the permutation wiring used by \method.

\paragraph{Within-block hashing.}
Once a block neighbor $h\in\mathcal{N}(g)$ is selected, \method applies a row-partitioned SJLT
within the corresponding $(g,h)$ block.
The kernel does not store row indices.
Instead, it uses a fast 32-bit mixing hash to generate, for each input row inside the block,
both a destination row index in $[\Br]$ and an independent Rademacher sign.
It generates $\spar$ unique row indices using an affine permutation map similar to the one above, with scale and shift parameters generated from the hash.
This realizes the distributional definition of \sketchfamily while keeping the kernel fully
on-the-fly.

%% file: sections/94_appendix_experiments_grass.tex
\section{Additional details on GraSS}
\subsection{GraSS pipeline}
\label{sec:app_additional_experiments_grass}
At a high level, GraSS constructs a \emph{feature cache} for the training set:
for each training example $z_i$, it computes a per-example gradient vector $g_i$ (or a structured per-layer analogue), applies gradient sparsification/compression, and then applies a random projection to produce a low-dimensional representation $\phi_i \in \R^k$ that is stored for fast querying.
Given a test/query example $z$, GraSS computes an analogous representation $\phi_z$ and then produces an attribution vector $\tau(z)\in\R^n$ (one score per training example) using the cached features and a small downstream solve.
In GraSS, the random projection step is a key bottleneck: it is invoked for every training example during cache construction (and for every query at inference time), so kernel efficiency directly impacts end-to-end wall time.
We refer the reader to \cite{hu2025grass} for further details on the GraSS pipeline.

\subsection{Details on linear datamodeling score (LDS)}
GraSS evaluates quantitative data-attribution quality using the linear datamodeling score (LDS) introduced by TRAK~\cite{park2023trak} and adopted by GraSS~\cite{hu2025grass}.
We summarize the protocol here to make our end-to-end evaluation fully explicit.

Let $S=\{z_i\}_{i=1}^N$ denote the training set and let $f(z;\theta)$ denote a scalar model output of interest evaluated at an example $z$ (e.g., a logit or loss, depending on the task and configuration).
LDS is a counterfactual benchmark: it compares (a) true model outputs obtained by retraining on randomly subsampled training sets to (b) predictions obtained by summing attribution scores over the same subsampled sets.

\paragraph{Training subsets.}
Following GraSS, we sample $m=50$ random subsets $\{S_j\}_{j=1}^m$ of the training set, each of size $|S_j|=\alpha N$ with $\alpha=0.5$ (i.e., 50\% subsampling).
For each subset $S_j$, we train a model to obtain parameters $\theta^\star(S_j)$ using the same training procedure as the GraSS implementation.

\paragraph{Attribution-derived counterfactual predictor.}
A data attribution method $\tau$ produces a per-example score vector $\tau(z,S)\in\mathbb{R}^N$ for a fixed example of interest $z$ (typically a validation/test example).
TRAK defines the attribution-derived prediction for the counterfactual output on subset $S_j$ as the sum of the scores of the training examples included in that subset:
\begin{equation}
  g_\tau(z,S_j;S)
  := \sum_{i:\, z_i\in S_j} \tau(z,S)_i
  = \tau(z,S)^\top \mathbf{1}_{S_j},
\end{equation}
where $\mathbf{1}_{S_j}\in\{0,1\}^N$ is the indicator vector of $S_j$.
This is the “additive datamodel” prediction used by TRAK.

\paragraph{Definition of LDS.}
For each example of interest $z$, we form two length-$m$ sequences:
\begin{align}
  y_j(z)      & := f\!\left(z;\theta^\star(S_j)\right), \\
  \hat y_j(z) & := g_\tau(z,S_j;S).
\end{align}
The LDS for the example $z$ is the Spearman rank correlation~\cite{spearman1961proof} between these sequences:
\begin{equation}
  \mathrm{LDS}(\tau,z)
  := \rho_{\mathrm{Spearman}}\left(\{y_j(z)\}_{j=1}^m,\{\hat y_j(z)\}_{j=1}^m\right).
\end{equation}
The reported LDS is the average of $\mathrm{LDS}(\tau,z)$ over the evaluation examples $z$ used by the GraSS protocol.

In our integration experiments, the only change to GraSS is the SJLT projection kernel used inside the gradient-compression step.
All other components and the LDS evaluation code path are identical to the official GraSS repository~\cite{grass_github}.
The model used is the default in the GraSS repo, a 3-layer MLP with ReLU activations and a total of $109,386$ parameters.
We use the same training hyperparameters as in GraSS, and sketch down from dimension $4k$ to $k$ for $k \in\{1024,2048,4096\}$.
We use LDS as our primary measure of attribution quality and report Pareto frontiers of LDS vs.
projection time.

%% file: sections/95_appendix_experiments.tex
\section{Additional Experiments and Ablations for RandNLA Tasks}
\label{sec:app_additional_experiments}

\subsection{Metrics}
We define all error metrics exactly as they are computed in the benchmark code.
Let $A\in\R^{\din\times\npts}$, let $S\in\R^{\dout\times\din}$, and set $SA\in\R^{\dout\times\npts}$.

\subsubsection{Gram approximation}
We form $G=A^\top A$ and $\hat G=(SA)^\top(SA)$ and report the relative Frobenius error.
\begin{align*}
  E_{\mathrm{Gram}}     & = \|\hat G - G\|_F, \\
  E_{\mathrm{Gram,rel}} & =
  \begin{cases}
    \|\hat G - G\|_F / \|G\|_F, & \|G\|_F > 0, \\
    \|\hat G - G\|_F,           & \|G\|_F = 0.
  \end{cases}
\end{align*}

\subsubsection{OSE spectral error}
We build an orthonormal basis $Q\in\R^{\din\times r}$.
The default is the column-space variant $Q=\mathrm{qr}(A)$ with $r=\min(\texttt{r},\din,\npts)$, while probing uses a Gaussian $Q$ with $r=\texttt{probes}$.
\begin{align*}
  \hat Q           & = SQ,                             \\
  E_{\mathrm{OSE}} & = \|\hat Q^\top \hat Q - I_r\|_2.
\end{align*}

\subsubsection{Ridge regression residual}
For $b\in\R^{\din}$ and $\lambda>0$, we solve
\begin{align*}
  x                  & = \arg\min_x \|SAx - Sb\|_2^2 + \lambda\|x\|_2^2, \\
  E_{\mathrm{ridge}} & =
  \begin{cases}
    \|Ax-b\|_2/\|b\|_2, & \|b\|_2 > 0, \\
    \|Ax-b\|_2,         & \|b\|_2 = 0.
  \end{cases}
\end{align*}

\subsubsection{Sketch-and-solve least squares}
We solve $x=\arg\min_x \|SAx - Sb\|_2$ and report the same residual definition $E_{\mathrm{ridge}}$.


\begin{figure}[H]
  \centering
  \includegraphics[width=0.7\linewidth]{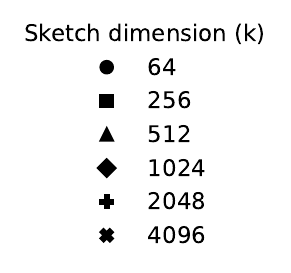}
  \caption{\textbf{Legend for sketch dimension.}
    Marker shapes indicate sketch dimension $k$ for the ablation plots in this appendix.
  }
  \label{fig:app_ablation_legend_k}
\end{figure}

\begin{figure}[H]
  \centering
  \includegraphics[width=0.95\linewidth]{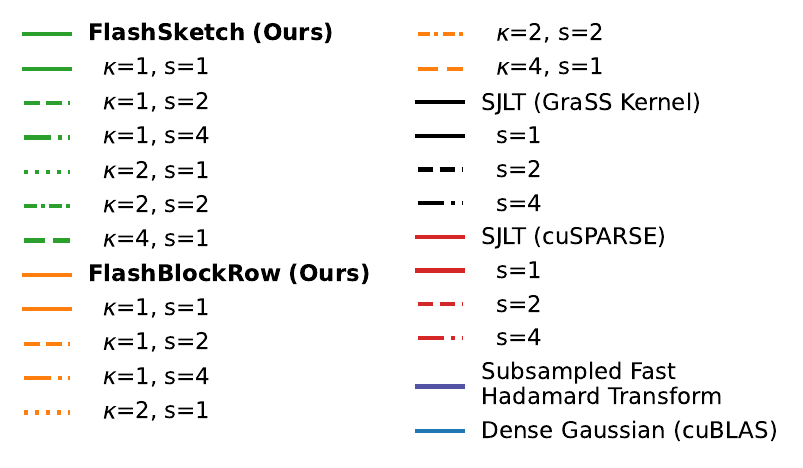}
  \caption{\textbf{Legend for methods and parameters.}
    Line colors/styles indicate method families and $(\kappa,\spar)$ settings used across the appendix ablations.
  }
  \label{fig:app_ablation_legend_methods}
\end{figure}

\input{sections/95_appendix_ablation_figs.tex}

\clearpage
\section{Additional Experiments and Ablations (RTX A6000)}
\input{sections/95_appendix_ablation_figs_a6000.tex}

%% file: sections/95_appendix_ablation_figs.tex

\subsection{Gram-matrix approximation ablations}
\begin{figure}[H]
  \centering
  \includegraphics[width=0.95\linewidth]{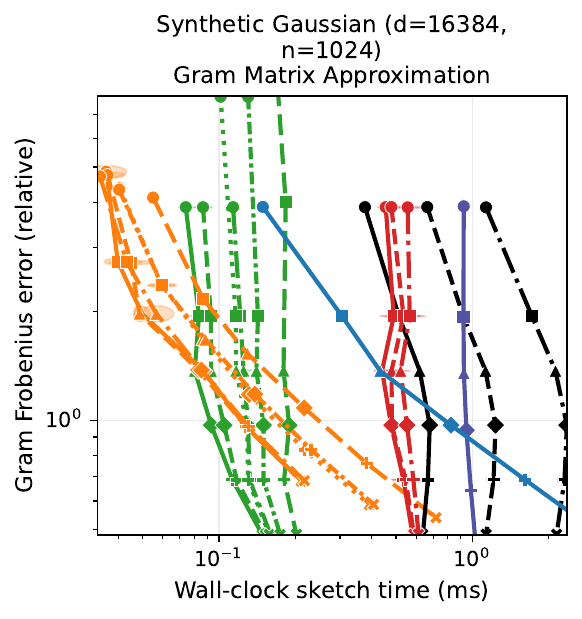}
  \caption{\textbf{Gram-matrix approximation ablations.} Synthetic Gaussian (d=16384, n=1024). GPU: NVIDIA GeForce RTX 4090.}
  \label{fig:app_gram_synthetic_synthetic_ls_tall_16k_1k_16384x1024}
\end{figure}

\begin{figure}[H]
  \centering
  \includegraphics[width=0.95\linewidth]{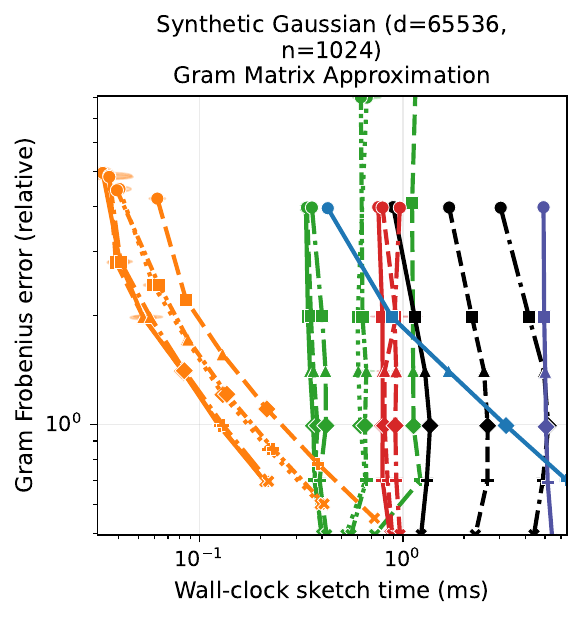}
  \caption{\textbf{Gram-matrix approximation ablations.} Synthetic Gaussian (d=65536, n=1024). GPU: NVIDIA GeForce RTX 4090.}
  \label{fig:app_gram_synthetic_synthetic_ls_tall_64k_1k_65536x1024}
\end{figure}

\begin{figure}[H]
  \centering
  \includegraphics[width=0.95\linewidth]{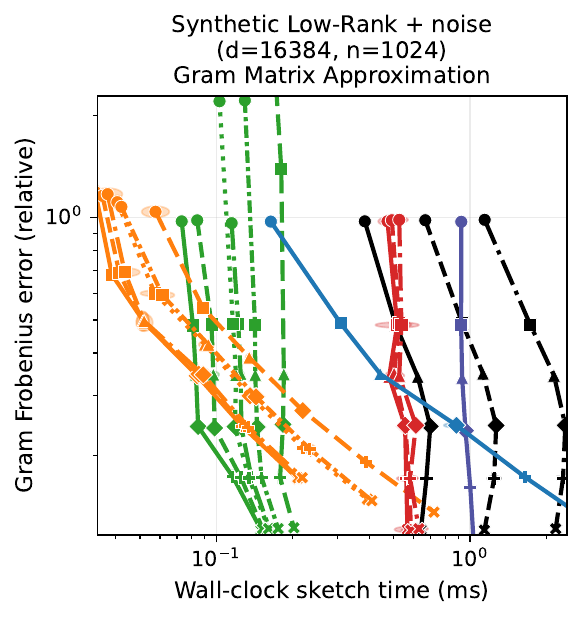}
  \caption{\textbf{Gram-matrix approximation ablations.} Synthetic Low-Rank + noise (d=16384, n=1024). GPU: NVIDIA GeForce RTX 4090.}
  \label{fig:app_gram_synthetic_synthetic_lr_tall_16k_1k_16384x1024}
\end{figure}

\begin{figure}[H]
  \centering
  \includegraphics[width=0.95\linewidth]{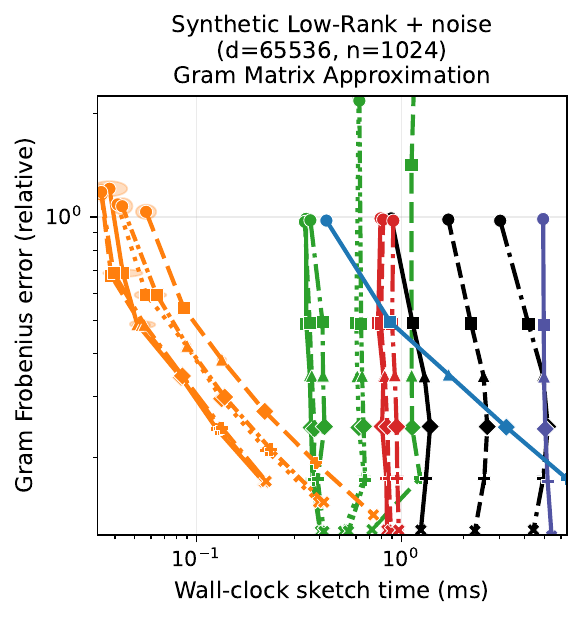}
  \caption{\textbf{Gram-matrix approximation ablations.} Synthetic Low-Rank + noise (d=65536, n=1024). GPU: NVIDIA GeForce RTX 4090.}
  \label{fig:app_gram_synthetic_synthetic_lr_tall_64k_1k_65536x1024}
\end{figure}

\begin{figure}[H]
  \centering
  \includegraphics[width=0.95\linewidth]{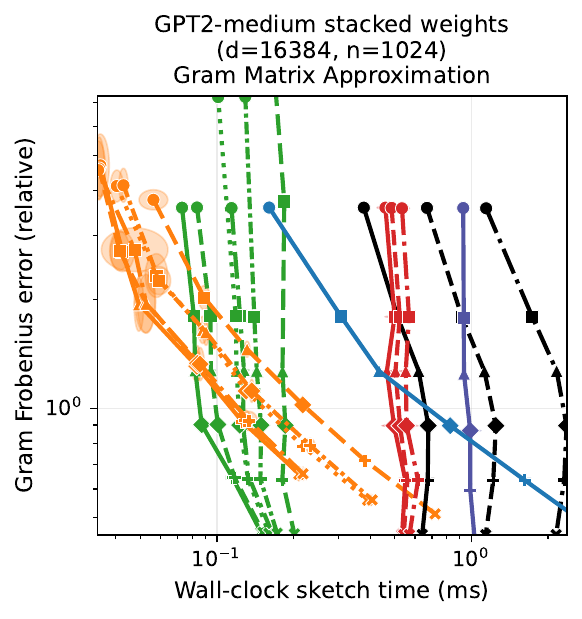}
  \caption{\textbf{Gram-matrix approximation ablations.} GPT2-medium stacked weights (d=16384, n=1024). GPU: NVIDIA GeForce RTX 4090.}
  \label{fig:app_gram_gpt2_medium_gpt2_medium_weights_concat_16k_1k_16384x1024}
\end{figure}

\begin{figure}[H]
  \centering
  \includegraphics[width=0.95\linewidth]{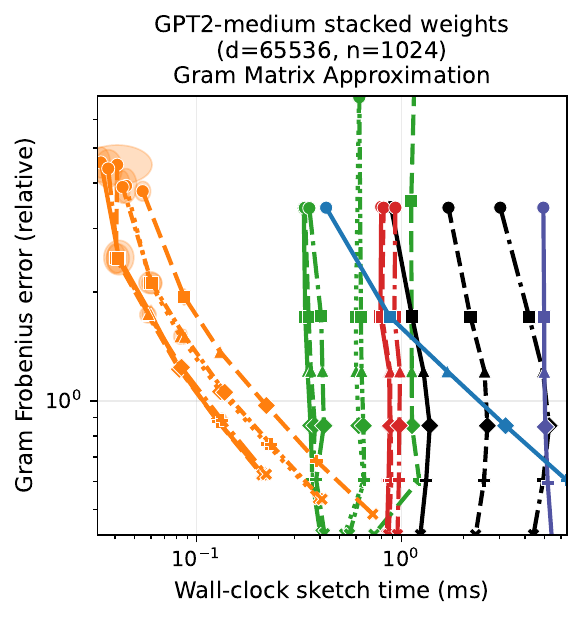}
  \caption{\textbf{Gram-matrix approximation ablations.} GPT2-medium stacked weights (d=65536, n=1024). GPU: NVIDIA GeForce RTX 4090.}
  \label{fig:app_gram_gpt2_medium_gpt2_medium_weights_concat_full_64k_1k_65536x1024}
\end{figure}

\begin{figure}[H]
  \centering
  \includegraphics[width=0.95\linewidth]{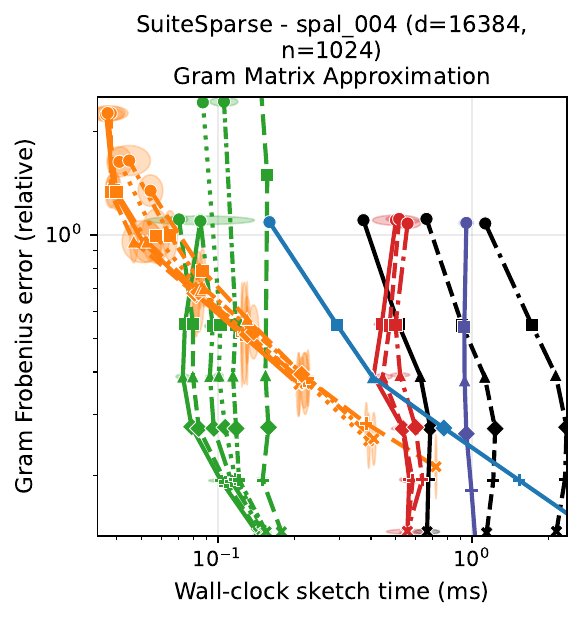}
  \caption{\textbf{Gram-matrix approximation ablations.} SuiteSparse - spal\_004 (d=16384, n=1024). GPU: NVIDIA GeForce RTX 4090.}
  \label{fig:app_gram_mittelmann_spal_004_16384x1024}
\end{figure}

\begin{figure}[H]
  \centering
  \includegraphics[width=0.95\linewidth]{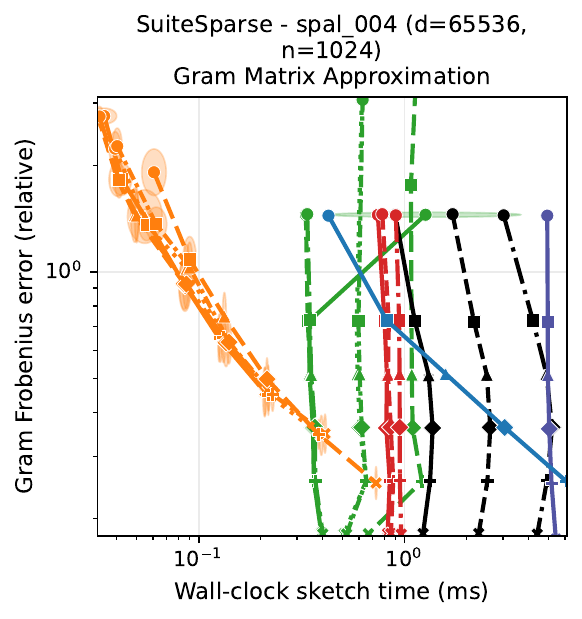}
  \caption{\textbf{Gram-matrix approximation ablations.} SuiteSparse - spal\_004 (d=65536, n=1024). GPU: NVIDIA GeForce RTX 4090.}
  \label{fig:app_gram_mittelmann_spal_004_65536x1024}
\end{figure}

\begin{figure}[H]
  \centering
  \includegraphics[width=0.95\linewidth]{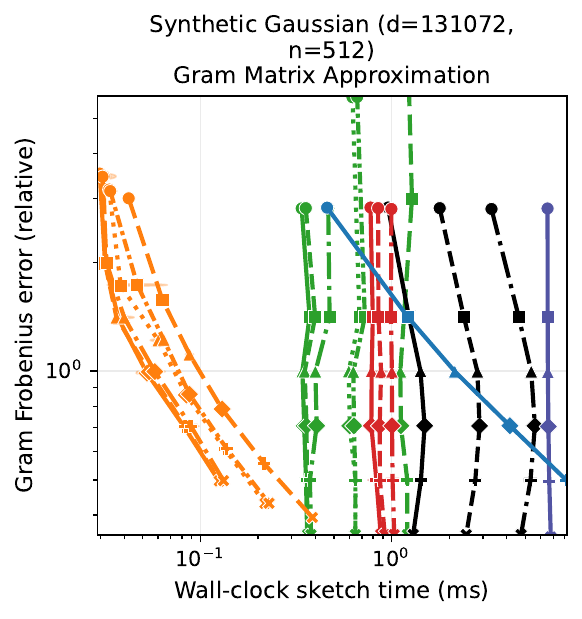}
  \caption{\textbf{Gram-matrix approximation ablations.} Synthetic Gaussian (d=131072, n=512). GPU: NVIDIA GeForce RTX 4090.}
  \label{fig:app_gram_synthetic_synthetic_ls_tall_128k_512_131072x512}
\end{figure}

\begin{figure}[H]
  \centering
  \includegraphics[width=0.95\linewidth]{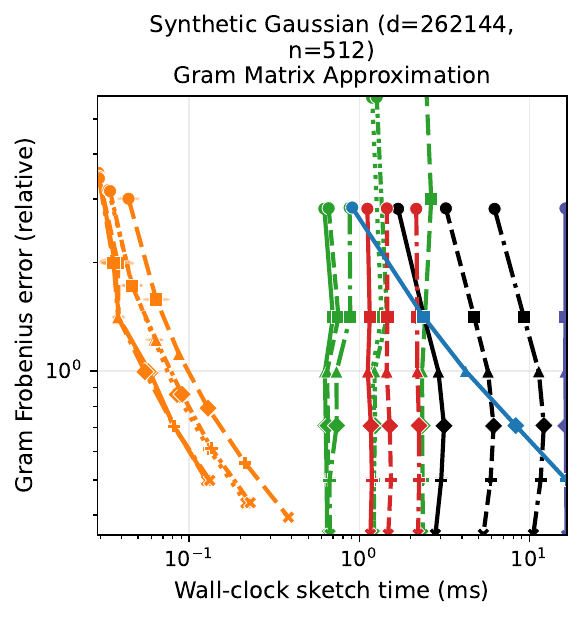}
  \caption{\textbf{Gram-matrix approximation ablations.} Synthetic Gaussian (d=262144, n=512). GPU: NVIDIA GeForce RTX 4090.}
  \label{fig:app_gram_synthetic_synthetic_ls_tall_256k_512_262144x512}
\end{figure}

\begin{figure}[H]
  \centering
  \includegraphics[width=0.95\linewidth]{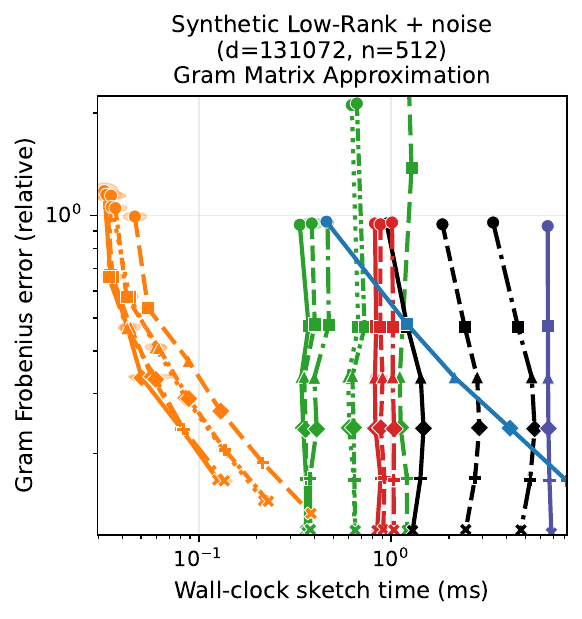}
  \caption{\textbf{Gram-matrix approximation ablations.} Synthetic Low-Rank + noise (d=131072, n=512). GPU: NVIDIA GeForce RTX 4090.}
  \label{fig:app_gram_synthetic_synthetic_lr_tall_128k_512_131072x512}
\end{figure}

\begin{figure}[H]
  \centering
  \includegraphics[width=0.95\linewidth]{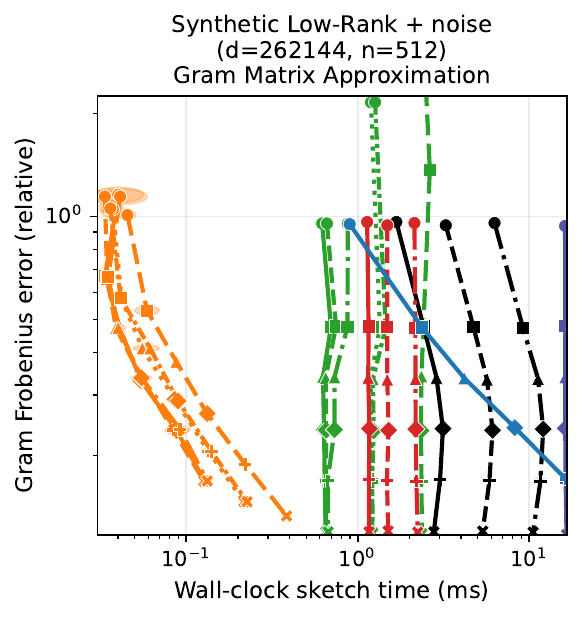}
  \caption{\textbf{Gram-matrix approximation ablations.} Synthetic Low-Rank + noise (d=262144, n=512). GPU: NVIDIA GeForce RTX 4090.}
  \label{fig:app_gram_synthetic_synthetic_lr_tall_256k_512_262144x512}
\end{figure}

\begin{figure}[H]
  \centering
  \includegraphics[width=0.95\linewidth]{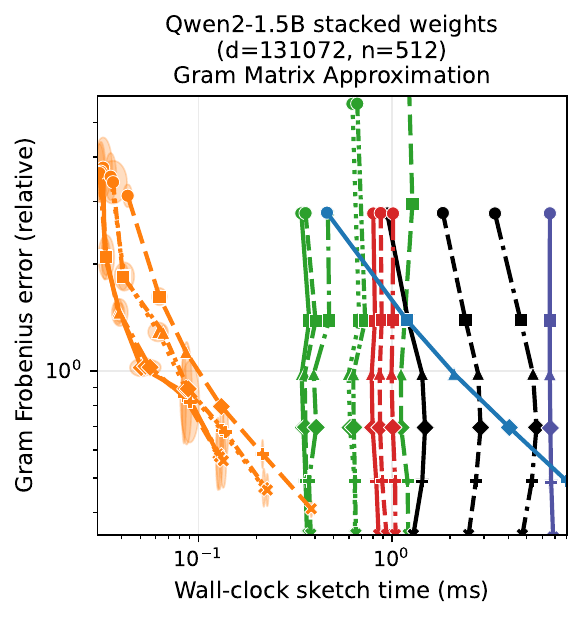}
  \caption{\textbf{Gram-matrix approximation ablations.} Qwen2-1.5B stacked weights (d=131072, n=512). GPU: NVIDIA GeForce RTX 4090.}
  \label{fig:app_gram_qwen_qwen2_1_5b_qwen2_1p5b_weights_full_128k_512_131072x512}
\end{figure}

\begin{figure}[H]
  \centering
  \includegraphics[width=0.95\linewidth]{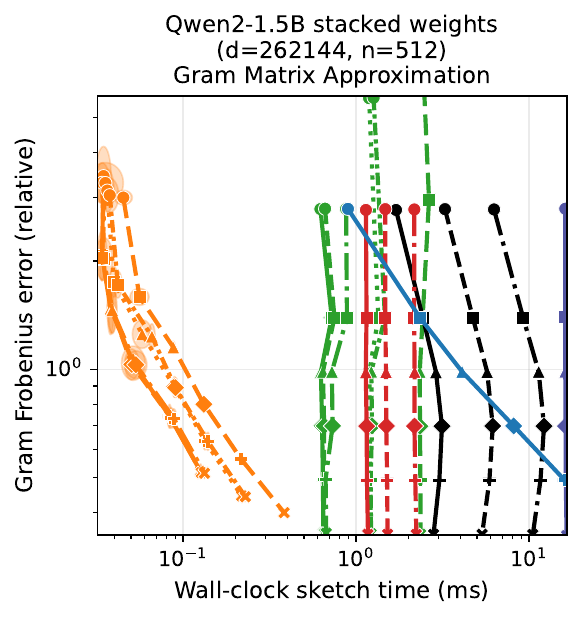}
  \caption{\textbf{Gram-matrix approximation ablations.} Qwen2-1.5B stacked weights (d=262144, n=512). GPU: NVIDIA GeForce RTX 4090.}
  \label{fig:app_gram_qwen_qwen2_1_5b_qwen2_1p5b_weights_full_256k_512_262144x512}
\end{figure}

\begin{figure}[H]
  \centering
  \includegraphics[width=0.95\linewidth]{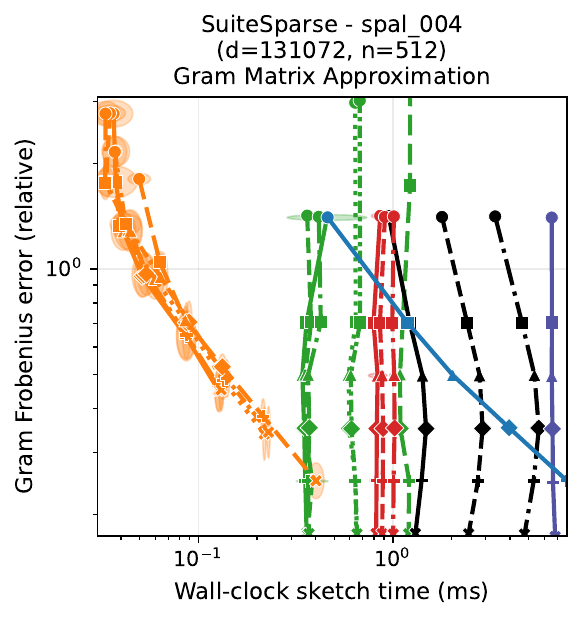}
  \caption{\textbf{Gram-matrix approximation ablations.} SuiteSparse - spal\_004 (d=131072, n=512). GPU: NVIDIA GeForce RTX 4090.}
  \label{fig:app_gram_mittelmann_spal_004_131072x512}
\end{figure}

\begin{figure}[H]
  \centering
  \includegraphics[width=0.95\linewidth]{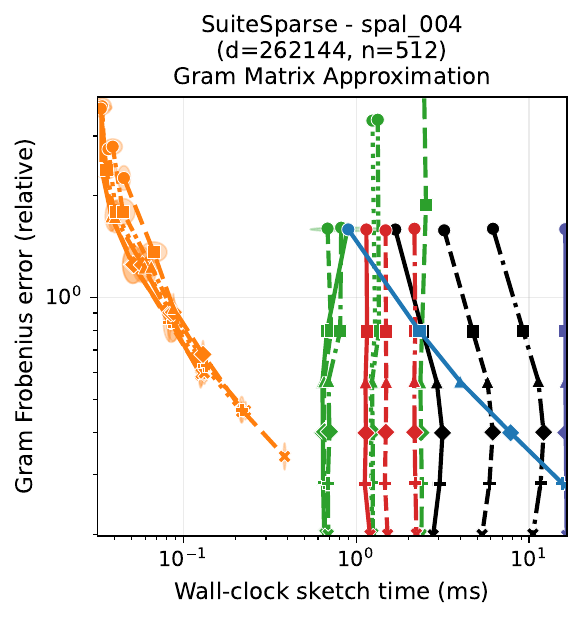}
  \caption{\textbf{Gram-matrix approximation ablations.} SuiteSparse - spal\_004 (d=262144, n=512). GPU: NVIDIA GeForce RTX 4090.}
  \label{fig:app_gram_mittelmann_spal_004_262144x512}
\end{figure}

\clearpage

\subsection{OSE spectral-error ablations}
\begin{figure}[H]
  \centering
  \includegraphics[width=0.95\linewidth]{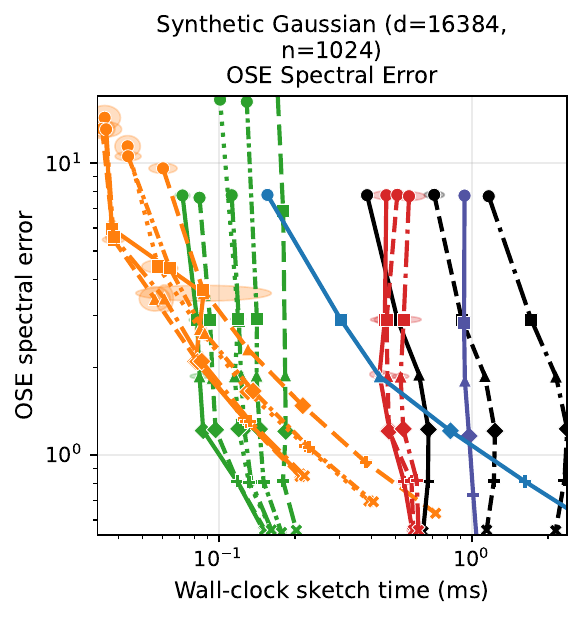}
  \caption{\textbf{OSE spectral-error ablations.} Synthetic Gaussian (d=16384, n=1024). GPU: NVIDIA GeForce RTX 4090.}
  \label{fig:app_ose_synthetic_synthetic_ls_tall_16k_1k_16384x1024}
\end{figure}

\begin{figure}[H]
  \centering
  \includegraphics[width=0.95\linewidth]{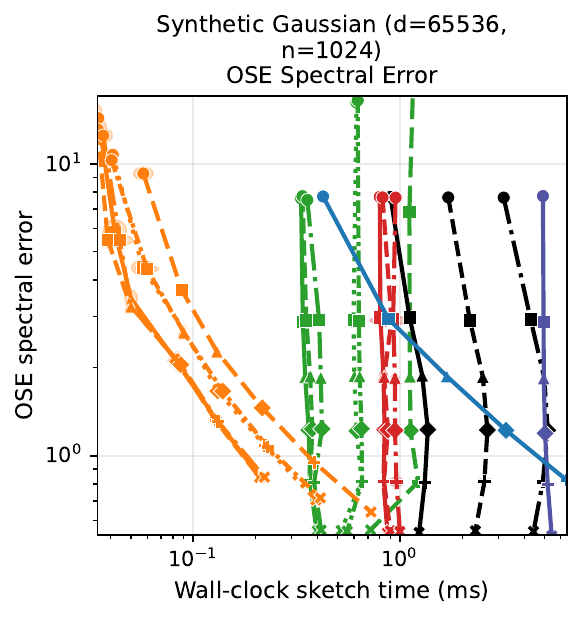}
  \caption{\textbf{OSE spectral-error ablations.} Synthetic Gaussian (d=65536, n=1024). GPU: NVIDIA GeForce RTX 4090.}
  \label{fig:app_ose_synthetic_synthetic_ls_tall_64k_1k_65536x1024}
\end{figure}

\begin{figure}[H]
  \centering
  \includegraphics[width=0.95\linewidth]{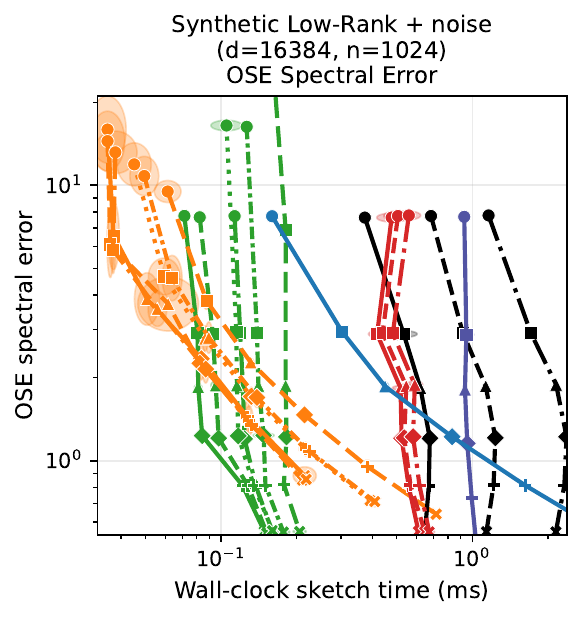}
  \caption{\textbf{OSE spectral-error ablations.} Synthetic Low-Rank + noise (d=16384, n=1024). GPU: NVIDIA GeForce RTX 4090.}
  \label{fig:app_ose_synthetic_synthetic_lr_tall_16k_1k_16384x1024}
\end{figure}

\begin{figure}[H]
  \centering
  \includegraphics[width=0.95\linewidth]{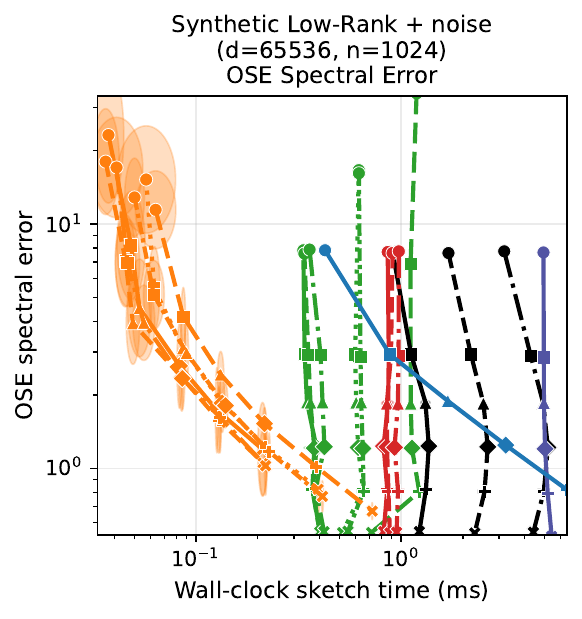}
  \caption{\textbf{OSE spectral-error ablations.} Synthetic Low-Rank + noise (d=65536, n=1024). GPU: NVIDIA GeForce RTX 4090.}
  \label{fig:app_ose_synthetic_synthetic_lr_tall_64k_1k_65536x1024}
\end{figure}

\begin{figure}[H]
  \centering
  \includegraphics[width=0.95\linewidth]{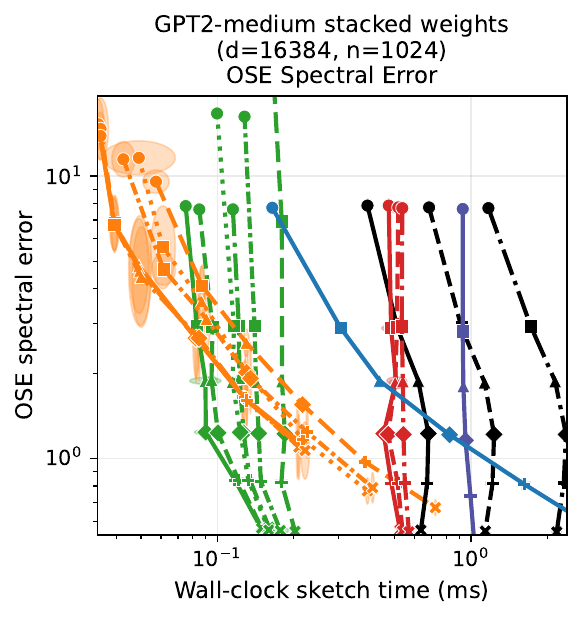}
  \caption{\textbf{OSE spectral-error ablations.} GPT2-medium stacked weights (d=16384, n=1024). GPU: NVIDIA GeForce RTX 4090.}
  \label{fig:app_ose_gpt2_medium_gpt2_medium_weights_concat_16k_1k_16384x1024}
\end{figure}

\begin{figure}[H]
  \centering
  \includegraphics[width=0.95\linewidth]{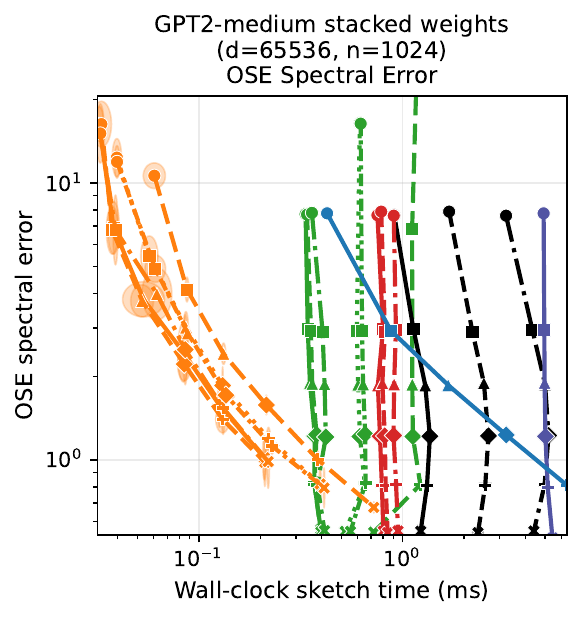}
  \caption{\textbf{OSE spectral-error ablations.} GPT2-medium stacked weights (d=65536, n=1024). GPU: NVIDIA GeForce RTX 4090.}
  \label{fig:app_ose_gpt2_medium_gpt2_medium_weights_concat_full_64k_1k_65536x1024}
\end{figure}

\begin{figure}[H]
  \centering
  \includegraphics[width=0.95\linewidth]{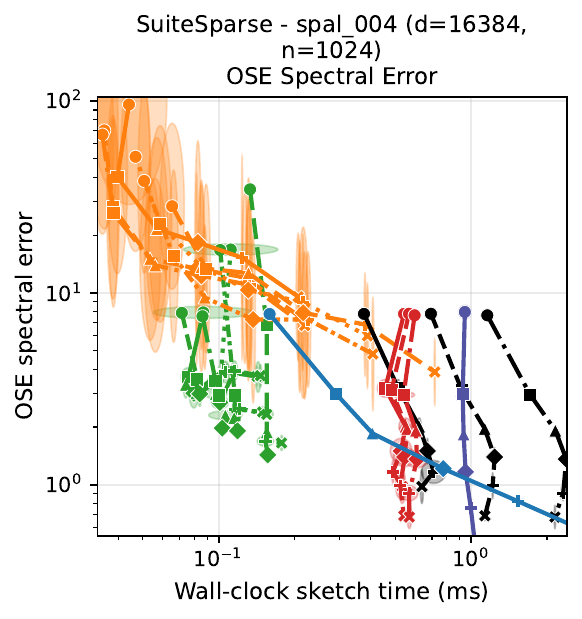}
  \caption{\textbf{OSE spectral-error ablations.} SuiteSparse - spal\_004 (d=16384, n=1024). GPU: NVIDIA GeForce RTX 4090.}
  \label{fig:app_ose_mittelmann_spal_004_16384x1024}
\end{figure}

\begin{figure}[H]
  \centering
  \includegraphics[width=0.95\linewidth]{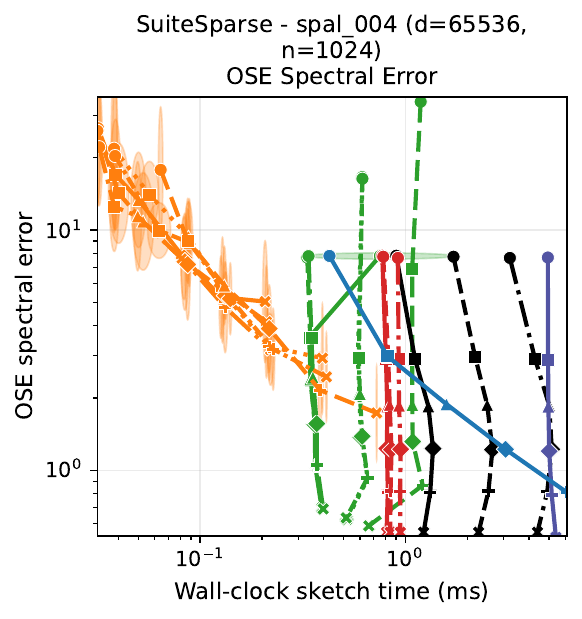}
  \caption{\textbf{OSE spectral-error ablations.} SuiteSparse - spal\_004 (d=65536, n=1024). GPU: NVIDIA GeForce RTX 4090.}
  \label{fig:app_ose_mittelmann_spal_004_65536x1024}
\end{figure}

\begin{figure}[H]
  \centering
  \includegraphics[width=0.95\linewidth]{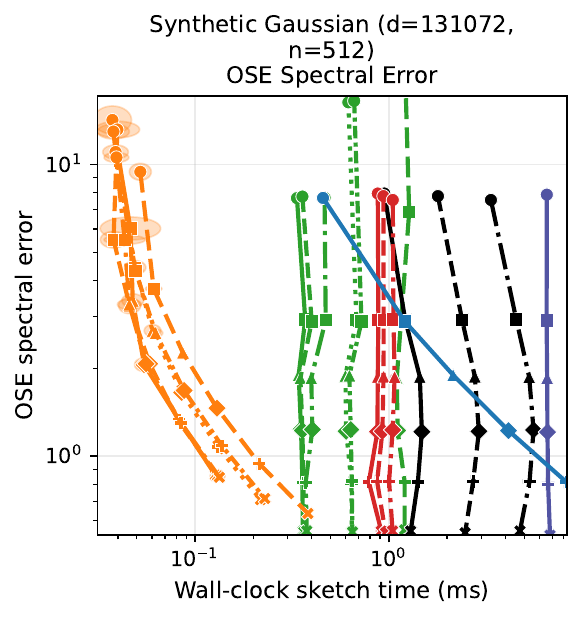}
  \caption{\textbf{OSE spectral-error ablations.} Synthetic Gaussian (d=131072, n=512). GPU: NVIDIA GeForce RTX 4090.}
  \label{fig:app_ose_synthetic_synthetic_ls_tall_128k_512_131072x512}
\end{figure}

\begin{figure}[H]
  \centering
  \includegraphics[width=0.95\linewidth]{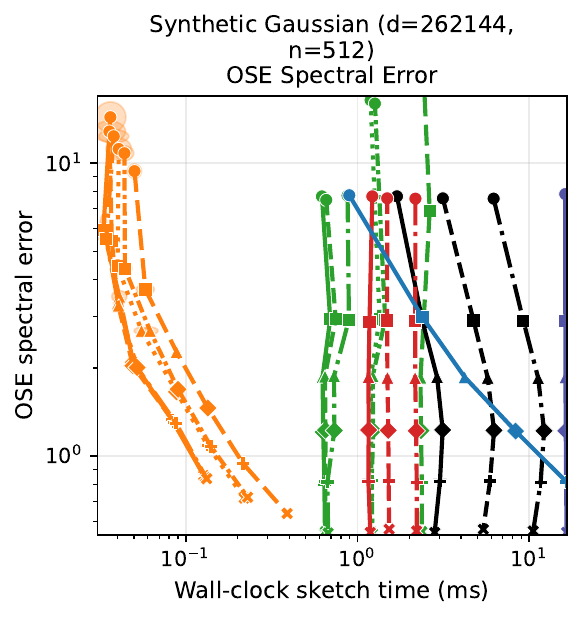}
  \caption{\textbf{OSE spectral-error ablations.} Synthetic Gaussian (d=262144, n=512). GPU: NVIDIA GeForce RTX 4090.}
  \label{fig:app_ose_synthetic_synthetic_ls_tall_256k_512_262144x512}
\end{figure}

\begin{figure}[H]
  \centering
  \includegraphics[width=0.95\linewidth]{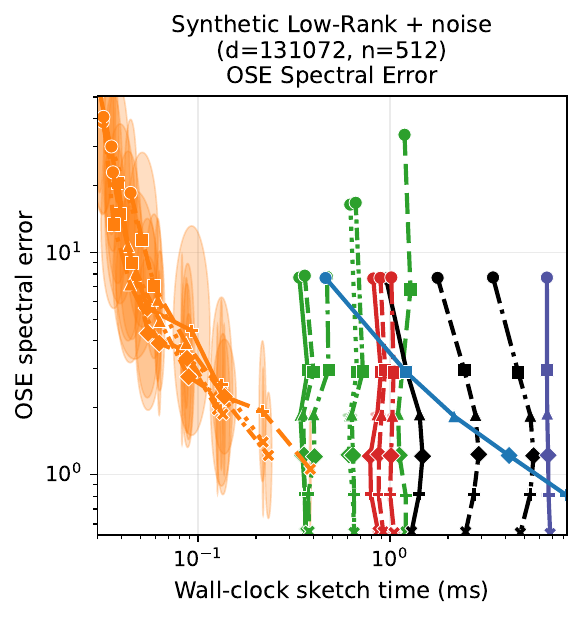}
  \caption{\textbf{OSE spectral-error ablations.} Synthetic Low-Rank + noise (d=131072, n=512). GPU: NVIDIA GeForce RTX 4090.}
  \label{fig:app_ose_synthetic_synthetic_lr_tall_128k_512_131072x512}
\end{figure}

\begin{figure}[H]
  \centering
  \includegraphics[width=0.95\linewidth]{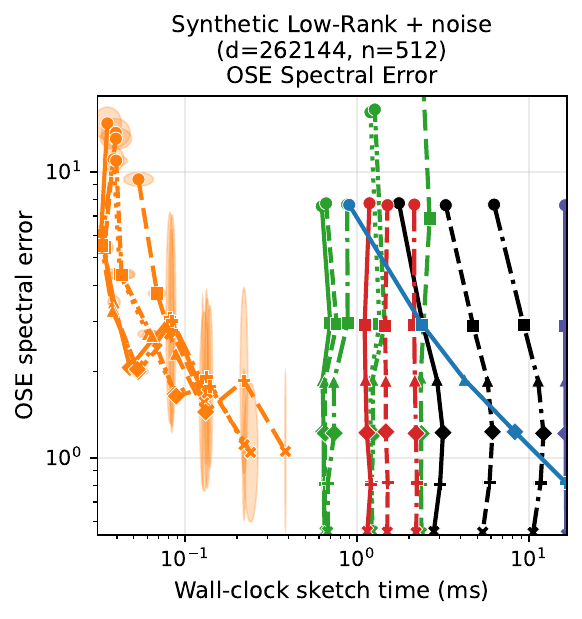}
  \caption{\textbf{OSE spectral-error ablations.} Synthetic Low-Rank + noise (d=262144, n=512). GPU: NVIDIA GeForce RTX 4090.}
  \label{fig:app_ose_synthetic_synthetic_lr_tall_256k_512_262144x512}
\end{figure}

\begin{figure}[H]
  \centering
  \includegraphics[width=0.95\linewidth]{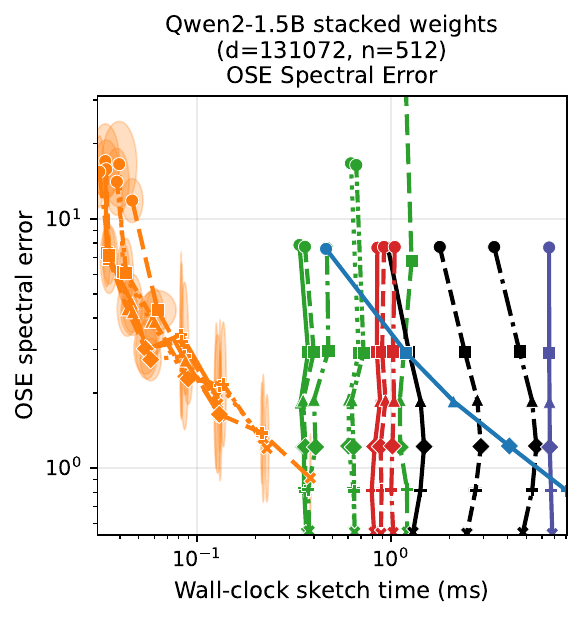}
  \caption{\textbf{OSE spectral-error ablations.} Qwen2-1.5B stacked weights (d=131072, n=512). GPU: NVIDIA GeForce RTX 4090.}
  \label{fig:app_ose_qwen_qwen2_1_5b_qwen2_1p5b_weights_full_128k_512_131072x512}
\end{figure}

\begin{figure}[H]
  \centering
  \includegraphics[width=0.95\linewidth]{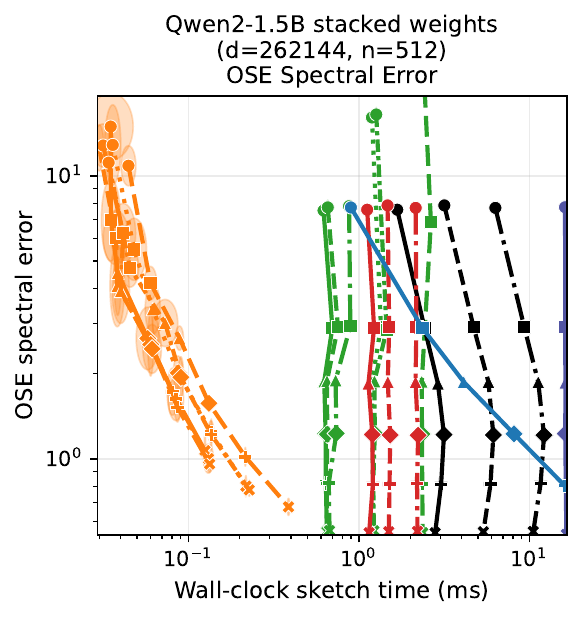}
  \caption{\textbf{OSE spectral-error ablations.} Qwen2-1.5B stacked weights (d=262144, n=512). GPU: NVIDIA GeForce RTX 4090.}
  \label{fig:app_ose_qwen_qwen2_1_5b_qwen2_1p5b_weights_full_256k_512_262144x512}
\end{figure}

\begin{figure}[H]
  \centering
  \includegraphics[width=0.95\linewidth]{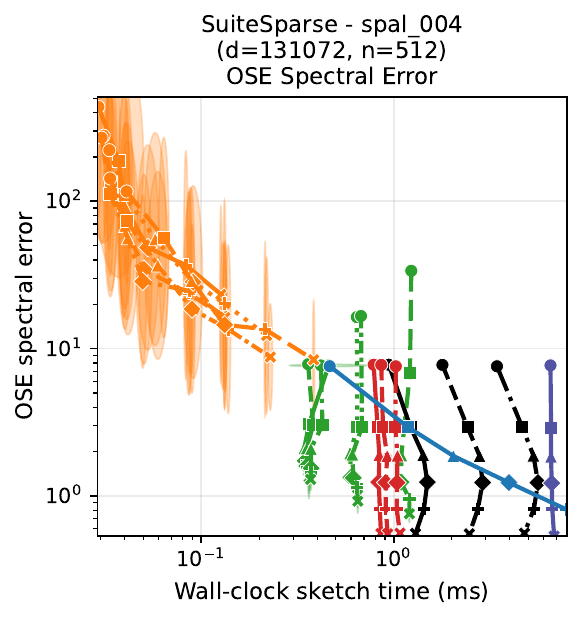}
  \caption{\textbf{OSE spectral-error ablations.} SuiteSparse - spal\_004 (d=131072, n=512). GPU: NVIDIA GeForce RTX 4090.}
  \label{fig:app_ose_mittelmann_spal_004_131072x512}
\end{figure}

\begin{figure}[H]
  \centering
  \includegraphics[width=0.95\linewidth]{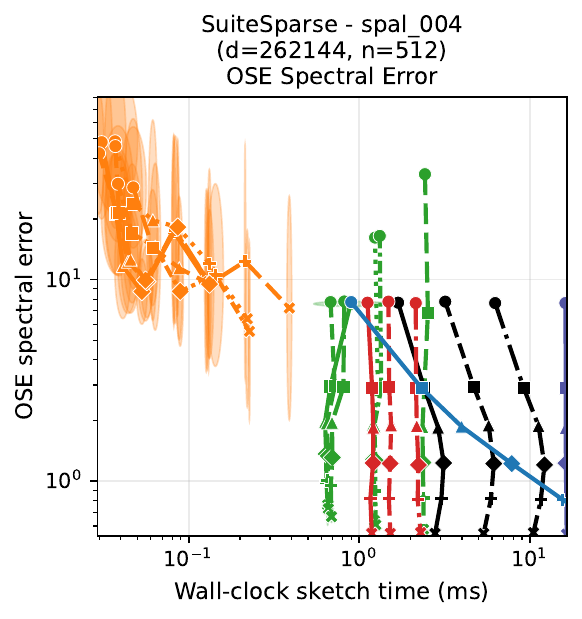}
  \caption{\textbf{OSE spectral-error ablations.} SuiteSparse - spal\_004 (d=262144, n=512). GPU: NVIDIA GeForce RTX 4090.}
  \label{fig:app_ose_mittelmann_spal_004_262144x512}
\end{figure}

\clearpage

\subsection{Sketch-and-ridge regression ablations}
\begin{figure}[H]
  \centering
  \includegraphics[width=0.95\linewidth]{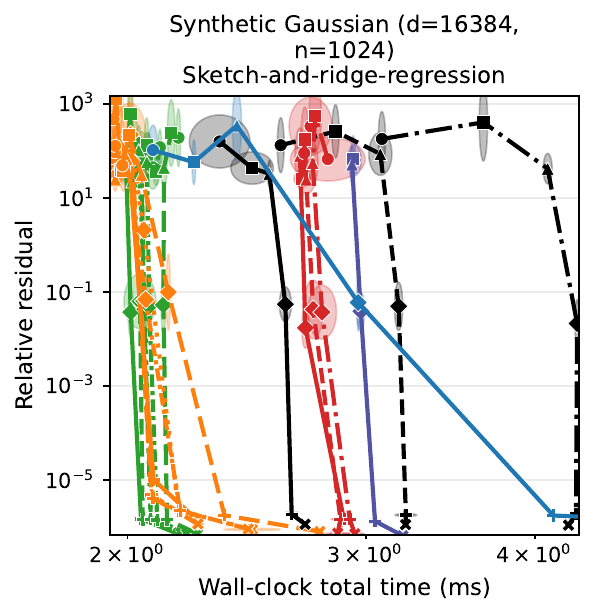}
  \caption{\textbf{Sketch-and-ridge regression ablations.} Synthetic Gaussian (d=16384, n=1024). GPU: NVIDIA GeForce RTX 4090.}
  \label{fig:app_ridge_synthetic_synthetic_ls_tall_16k_1k_16384x1024}
\end{figure}

\begin{figure}[H]
  \centering
  \includegraphics[width=0.95\linewidth]{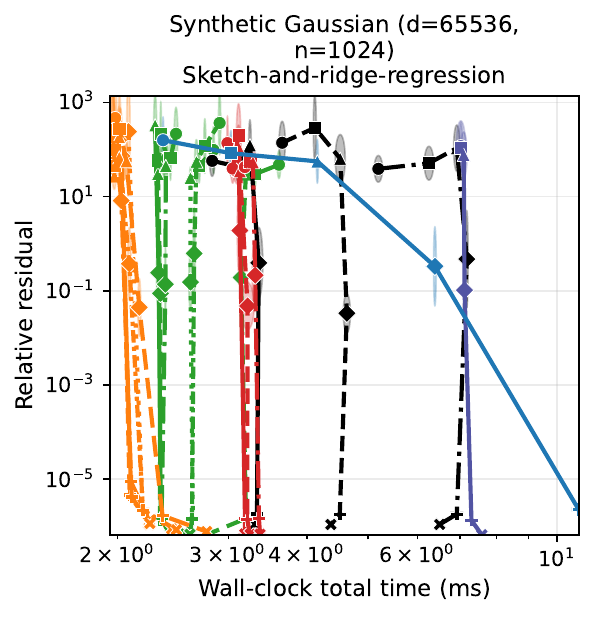}
  \caption{\textbf{Sketch-and-ridge regression ablations.} Synthetic Gaussian (d=65536, n=1024). GPU: NVIDIA GeForce RTX 4090.}
  \label{fig:app_ridge_synthetic_synthetic_ls_tall_64k_1k_65536x1024}
\end{figure}

\begin{figure}[H]
  \centering
  \includegraphics[width=0.95\linewidth]{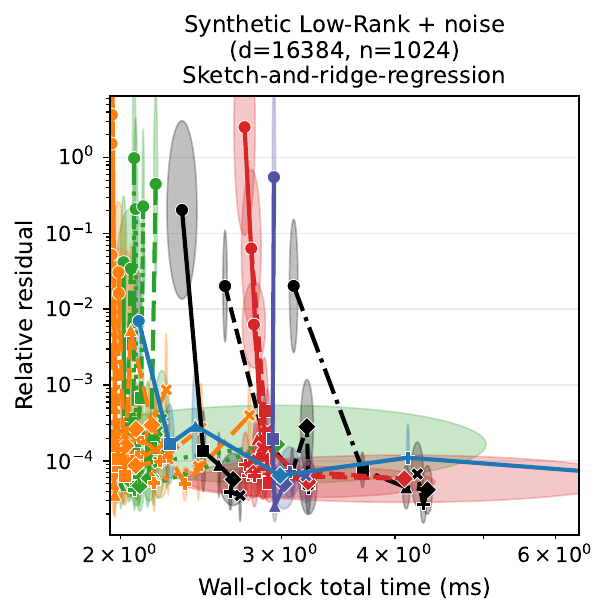}
  \caption{\textbf{Sketch-and-ridge regression ablations.} Synthetic Low-Rank + noise (d=16384, n=1024). GPU: NVIDIA GeForce RTX 4090.}
  \label{fig:app_ridge_synthetic_synthetic_lr_tall_16k_1k_16384x1024}
\end{figure}

\begin{figure}[H]
  \centering
  \includegraphics[width=0.95\linewidth]{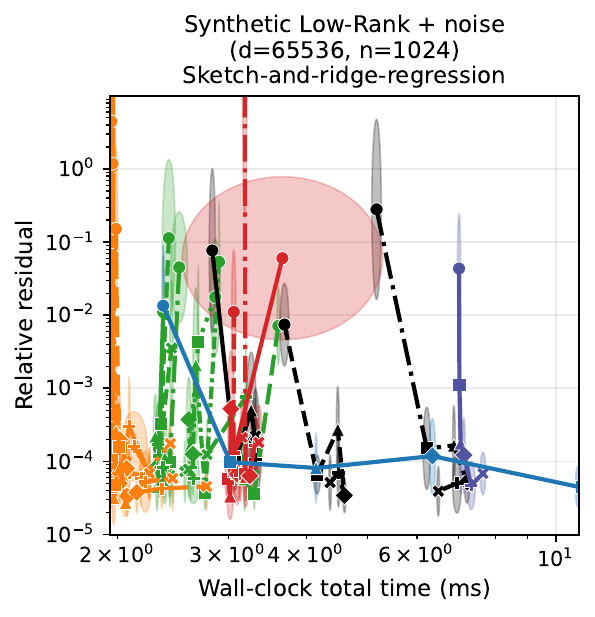}
  \caption{\textbf{Sketch-and-ridge regression ablations.} Synthetic Low-Rank + noise (d=65536, n=1024). GPU: NVIDIA GeForce RTX 4090.}
  \label{fig:app_ridge_synthetic_synthetic_lr_tall_64k_1k_65536x1024}
\end{figure}

\begin{figure}[H]
  \centering
  \includegraphics[width=0.95\linewidth]{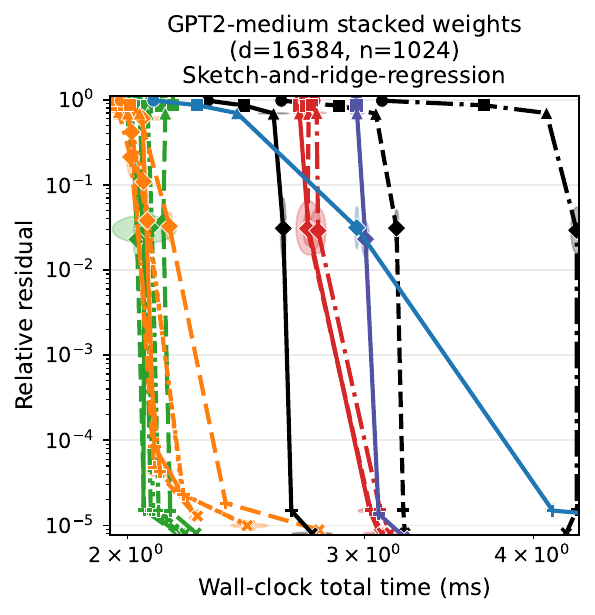}
  \caption{\textbf{Sketch-and-ridge regression ablations.} GPT2-medium stacked weights (d=16384, n=1024). GPU: NVIDIA GeForce RTX 4090.}
  \label{fig:app_ridge_gpt2_medium_gpt2_medium_weights_concat_16k_1k_16384x1024}
\end{figure}

\begin{figure}[H]
  \centering
  \includegraphics[width=0.95\linewidth]{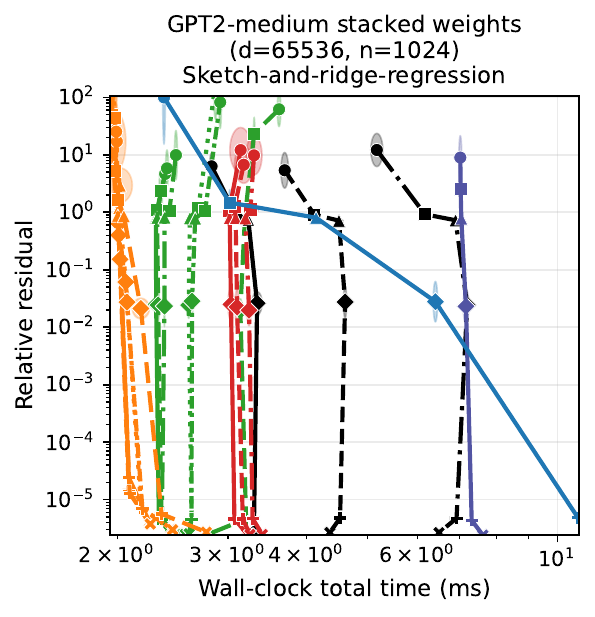}
  \caption{\textbf{Sketch-and-ridge regression ablations.} GPT2-medium stacked weights (d=65536, n=1024). GPU: NVIDIA GeForce RTX 4090.}
  \label{fig:app_ridge_gpt2_medium_gpt2_medium_weights_concat_full_64k_1k_65536x1024}
\end{figure}

\begin{figure}[H]
  \centering
  \includegraphics[width=0.95\linewidth]{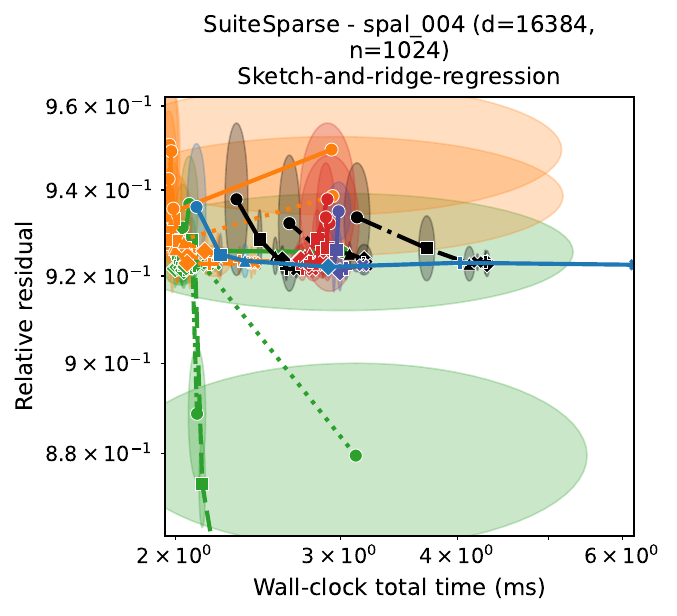}
  \caption{\textbf{Sketch-and-ridge regression ablations.} SuiteSparse - spal\_004 (d=16384, n=1024). GPU: NVIDIA GeForce RTX 4090.}
  \label{fig:app_ridge_mittelmann_spal_004_16384x1024}
\end{figure}

\begin{figure}[H]
  \centering
  \includegraphics[width=0.95\linewidth]{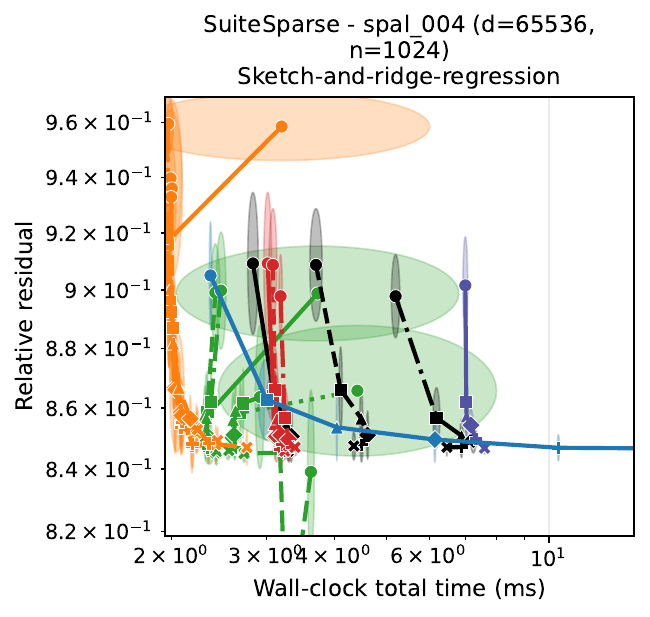}
  \caption{\textbf{Sketch-and-ridge regression ablations.} SuiteSparse - spal\_004 (d=65536, n=1024). GPU: NVIDIA GeForce RTX 4090.}
  \label{fig:app_ridge_mittelmann_spal_004_65536x1024}
\end{figure}

\begin{figure}[H]
  \centering
  \includegraphics[width=0.95\linewidth]{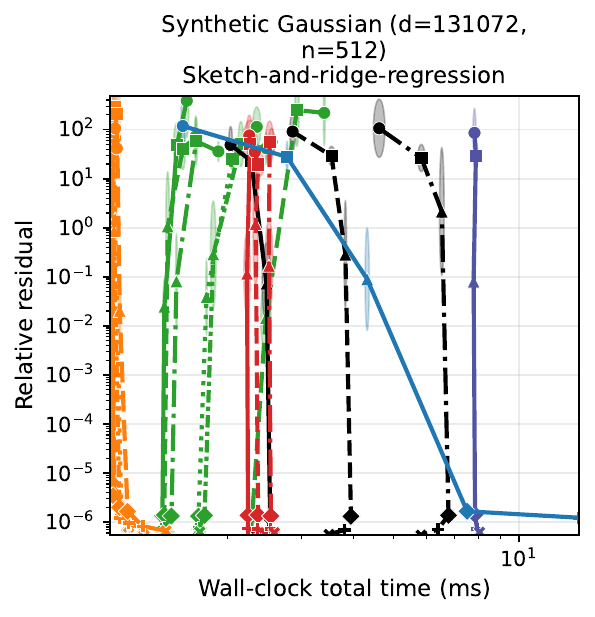}
  \caption{\textbf{Sketch-and-ridge regression ablations.} Synthetic Gaussian (d=131072, n=512). GPU: NVIDIA GeForce RTX 4090.}
  \label{fig:app_ridge_synthetic_synthetic_ls_tall_128k_512_131072x512}
\end{figure}

\begin{figure}[H]
  \centering
  \includegraphics[width=0.95\linewidth]{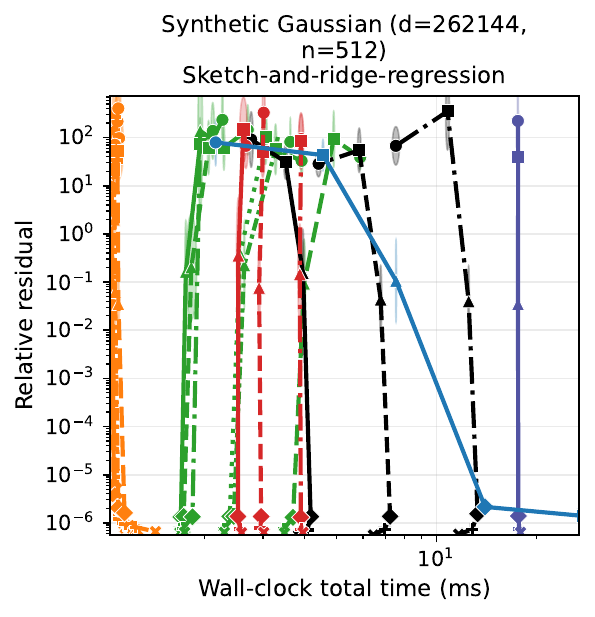}
  \caption{\textbf{Sketch-and-ridge regression ablations.} Synthetic Gaussian (d=262144, n=512). GPU: NVIDIA GeForce RTX 4090.}
  \label{fig:app_ridge_synthetic_synthetic_ls_tall_256k_512_262144x512}
\end{figure}

\begin{figure}[H]
  \centering
  \includegraphics[width=0.95\linewidth]{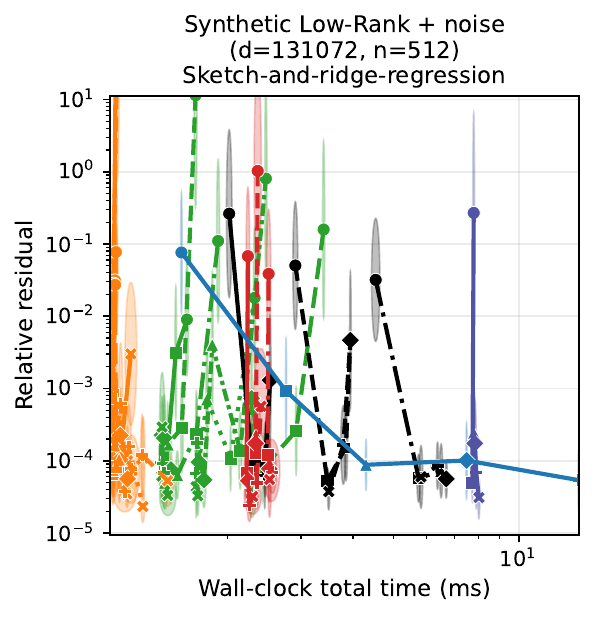}
  \caption{\textbf{Sketch-and-ridge regression ablations.} Synthetic Low-Rank + noise (d=131072, n=512). GPU: NVIDIA GeForce RTX 4090.}
  \label{fig:app_ridge_synthetic_synthetic_lr_tall_128k_512_131072x512}
\end{figure}

\begin{figure}[H]
  \centering
  \includegraphics[width=0.95\linewidth]{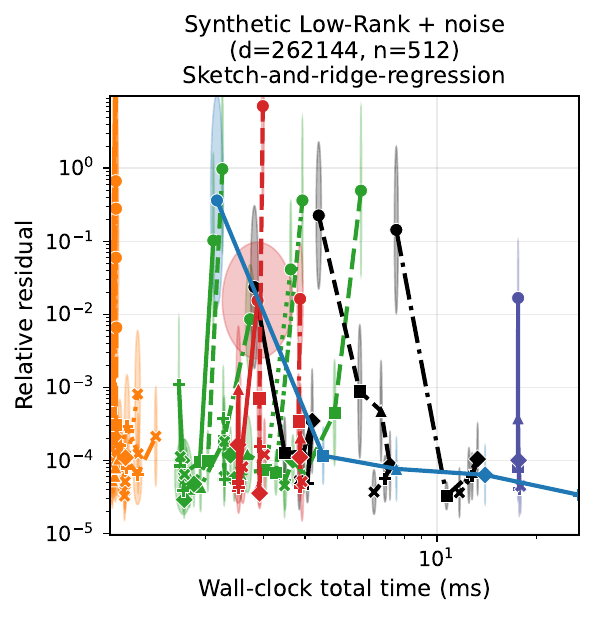}
  \caption{\textbf{Sketch-and-ridge regression ablations.} Synthetic Low-Rank + noise (d=262144, n=512). GPU: NVIDIA GeForce RTX 4090.}
  \label{fig:app_ridge_synthetic_synthetic_lr_tall_256k_512_262144x512}
\end{figure}

\begin{figure}[H]
  \centering
  \includegraphics[width=0.95\linewidth]{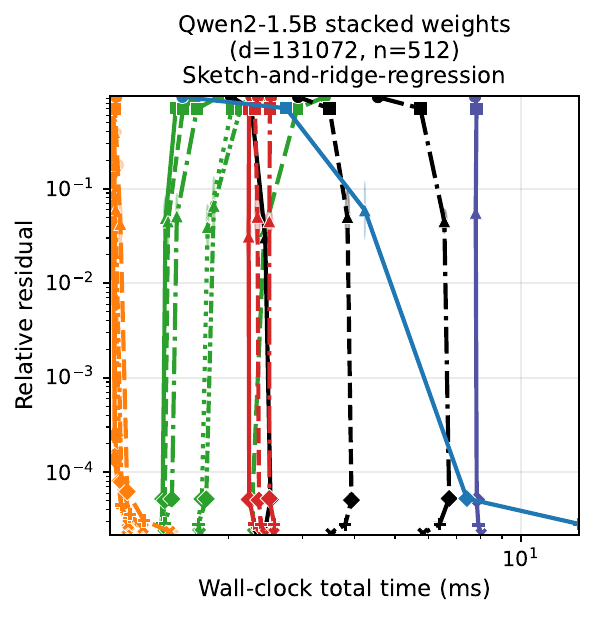}
  \caption{\textbf{Sketch-and-ridge regression ablations.} Qwen2-1.5B stacked weights (d=131072, n=512). GPU: NVIDIA GeForce RTX 4090.}
  \label{fig:app_ridge_qwen_qwen2_1_5b_qwen2_1p5b_weights_full_128k_512_131072x512}
\end{figure}

\begin{figure}[H]
  \centering
  \includegraphics[width=0.95\linewidth]{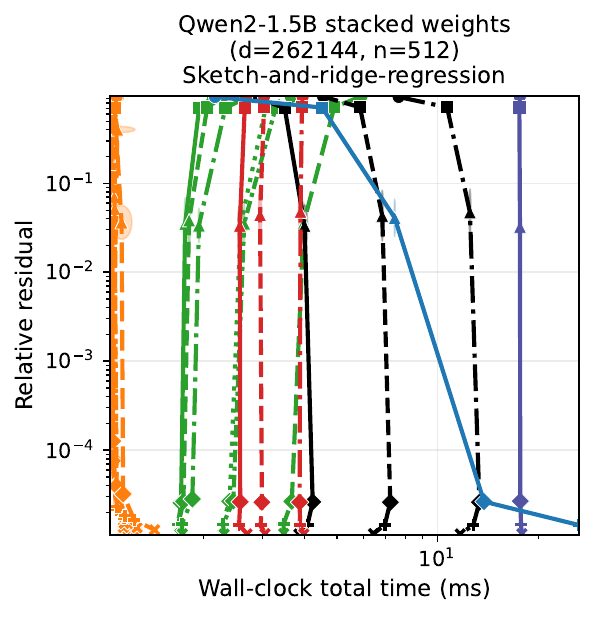}
  \caption{\textbf{Sketch-and-ridge regression ablations.} Qwen2-1.5B stacked weights (d=262144, n=512). GPU: NVIDIA GeForce RTX 4090.}
  \label{fig:app_ridge_qwen_qwen2_1_5b_qwen2_1p5b_weights_full_256k_512_262144x512}
\end{figure}

\begin{figure}[H]
  \centering
  \includegraphics[width=0.95\linewidth]{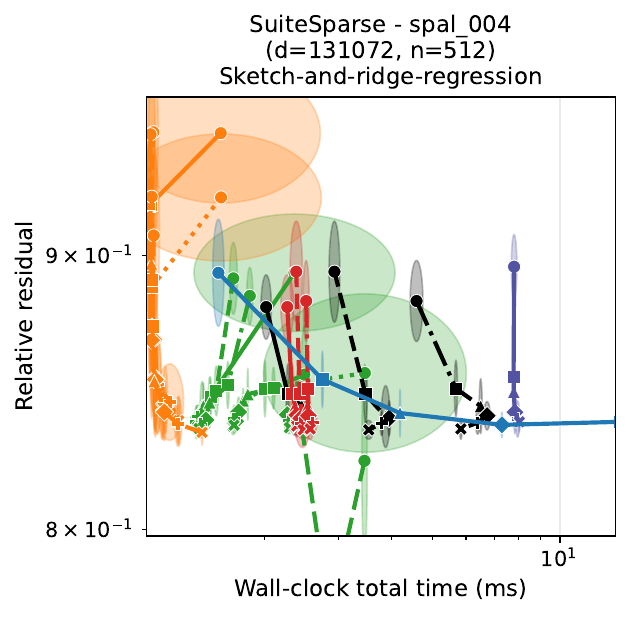}
  \caption{\textbf{Sketch-and-ridge regression ablations.} SuiteSparse - spal\_004 (d=131072, n=512). GPU: NVIDIA GeForce RTX 4090.}
  \label{fig:app_ridge_mittelmann_spal_004_131072x512}
\end{figure}

\begin{figure}[H]
  \centering
  \includegraphics[width=0.95\linewidth]{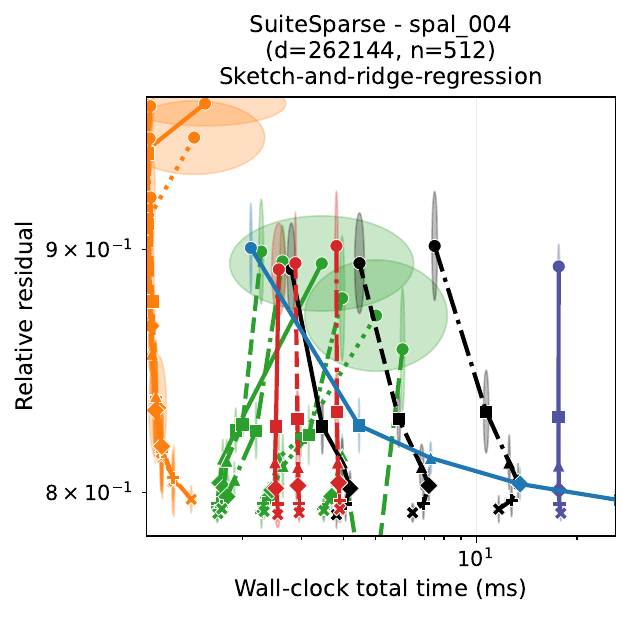}
  \caption{\textbf{Sketch-and-ridge regression ablations.} SuiteSparse - spal\_004 (d=262144, n=512). GPU: NVIDIA GeForce RTX 4090.}
  \label{fig:app_ridge_mittelmann_spal_004_262144x512}
\end{figure}

\clearpage

\subsection{Sketch-and-solve ablations}
\begin{figure}[H]
  \centering
  \includegraphics[width=0.95\linewidth]{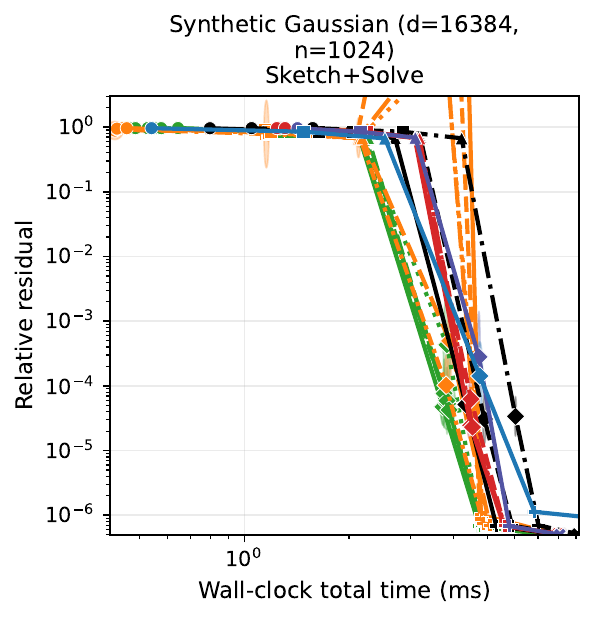}
  \caption{\textbf{Sketch-and-solve ablations.} Synthetic Gaussian (d=16384, n=1024). GPU: NVIDIA GeForce RTX 4090.}
  \label{fig:app_sketch_solve_synthetic_synthetic_ls_tall_16k_1k_16384x1024}
\end{figure}

\begin{figure}[H]
  \centering
  \includegraphics[width=0.95\linewidth]{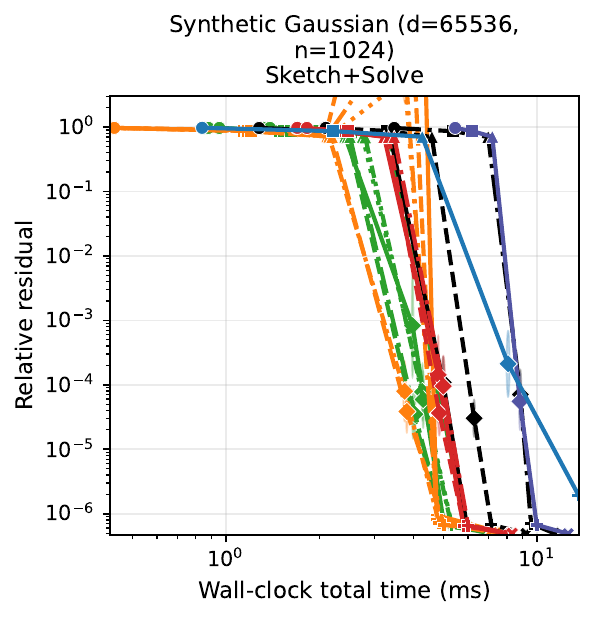}
  \caption{\textbf{Sketch-and-solve ablations.} Synthetic Gaussian (d=65536, n=1024). GPU: NVIDIA GeForce RTX 4090.}
  \label{fig:app_sketch_solve_synthetic_synthetic_ls_tall_64k_1k_65536x1024}
\end{figure}

\begin{figure}[H]
  \centering
  \includegraphics[width=0.95\linewidth]{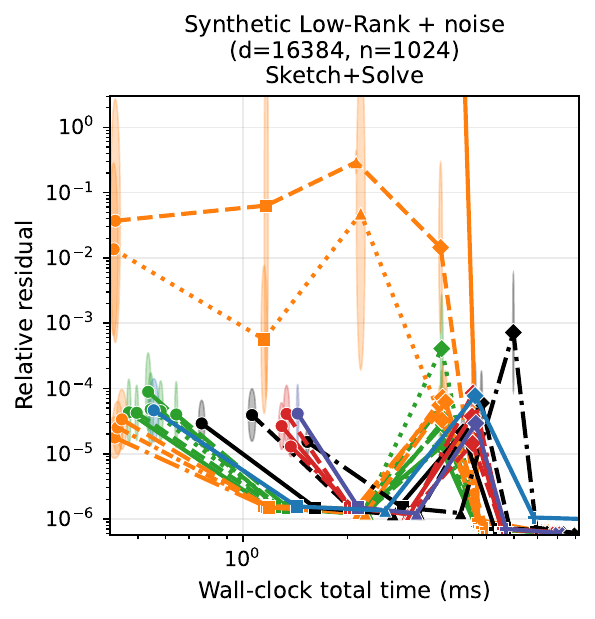}
  \caption{\textbf{Sketch-and-solve ablations.} Synthetic Low-Rank + noise (d=16384, n=1024). GPU: NVIDIA GeForce RTX 4090.}
  \label{fig:app_sketch_solve_synthetic_synthetic_lr_tall_16k_1k_16384x1024}
\end{figure}

\begin{figure}[H]
  \centering
  \includegraphics[width=0.95\linewidth]{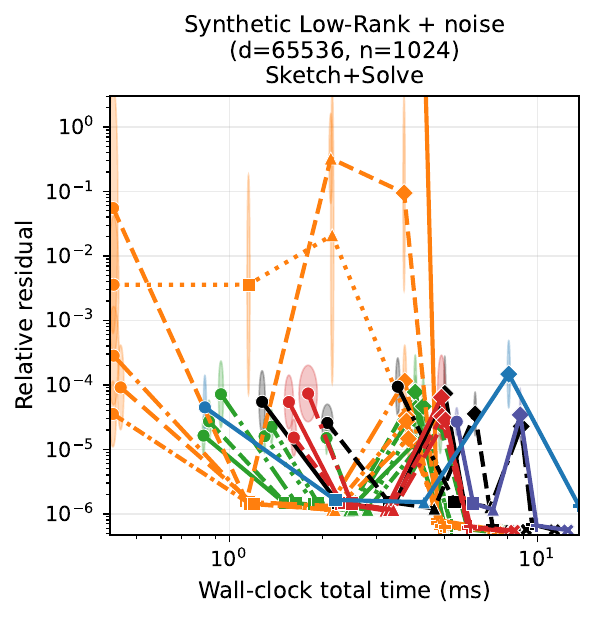}
  \caption{\textbf{Sketch-and-solve ablations.} Synthetic Low-Rank + noise (d=65536, n=1024). GPU: NVIDIA GeForce RTX 4090.}
  \label{fig:app_sketch_solve_synthetic_synthetic_lr_tall_64k_1k_65536x1024}
\end{figure}

\begin{figure}[H]
  \centering
  \includegraphics[width=0.95\linewidth]{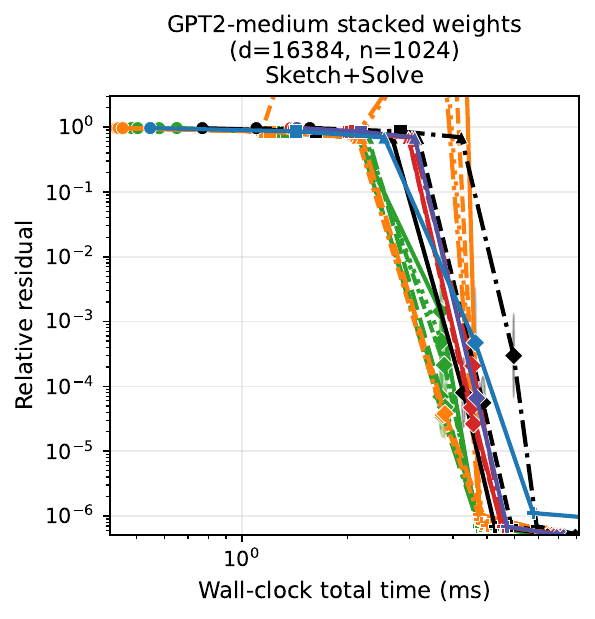}
  \caption{\textbf{Sketch-and-solve ablations.} GPT2-medium stacked weights (d=16384, n=1024). GPU: NVIDIA GeForce RTX 4090.}
  \label{fig:app_sketch_solve_gpt2_medium_gpt2_medium_weights_concat_16k_1k_16384x1024}
\end{figure}

\begin{figure}[H]
  \centering
  \includegraphics[width=0.95\linewidth]{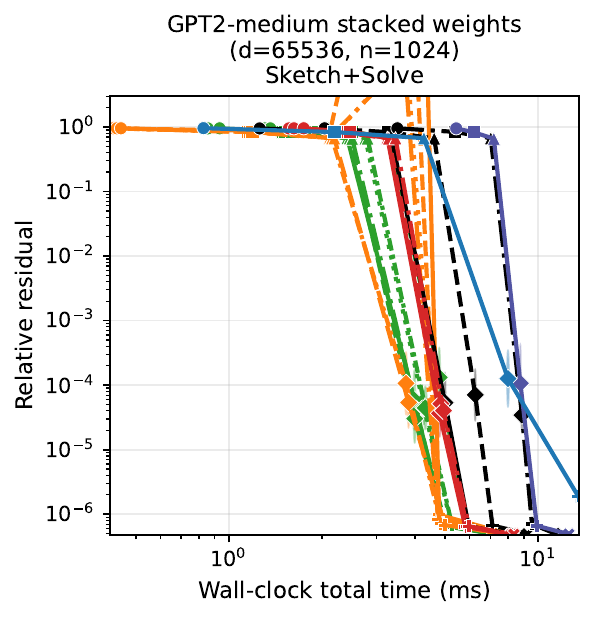}
  \caption{\textbf{Sketch-and-solve ablations.} GPT2-medium stacked weights (d=65536, n=1024). GPU: NVIDIA GeForce RTX 4090.}
  \label{fig:app_sketch_solve_gpt2_medium_gpt2_medium_weights_concat_full_64k_1k_65536x1024}
\end{figure}

\begin{figure}[H]
  \centering
  \includegraphics[width=0.95\linewidth]{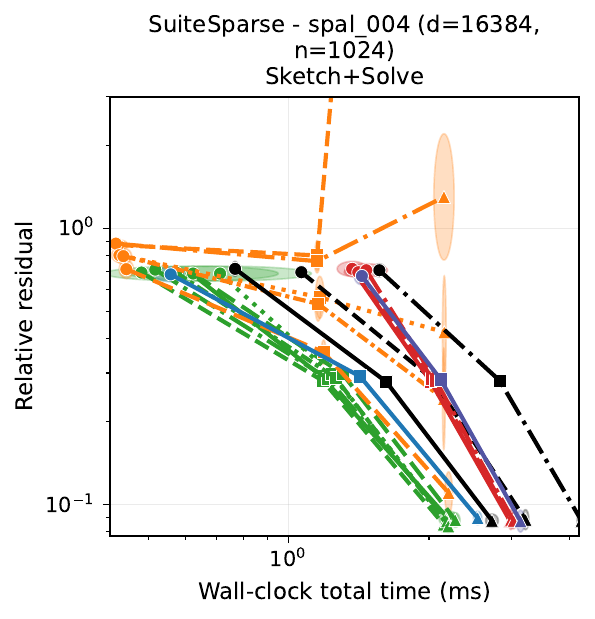}
  \caption{\textbf{Sketch-and-solve ablations.} SuiteSparse - spal\_004 (d=16384, n=1024). GPU: NVIDIA GeForce RTX 4090.}
  \label{fig:app_sketch_solve_mittelmann_spal_004_16384x1024}
\end{figure}

\begin{figure}[H]
  \centering
  \includegraphics[width=0.95\linewidth]{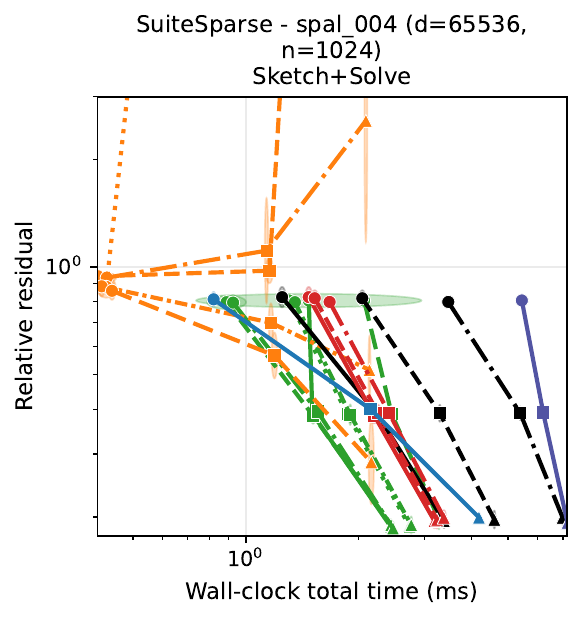}
  \caption{\textbf{Sketch-and-solve ablations.} SuiteSparse - spal\_004 (d=65536, n=1024). GPU: NVIDIA GeForce RTX 4090.}
  \label{fig:app_sketch_solve_mittelmann_spal_004_65536x1024}
\end{figure}

\begin{figure}[H]
  \centering
  \includegraphics[width=0.95\linewidth]{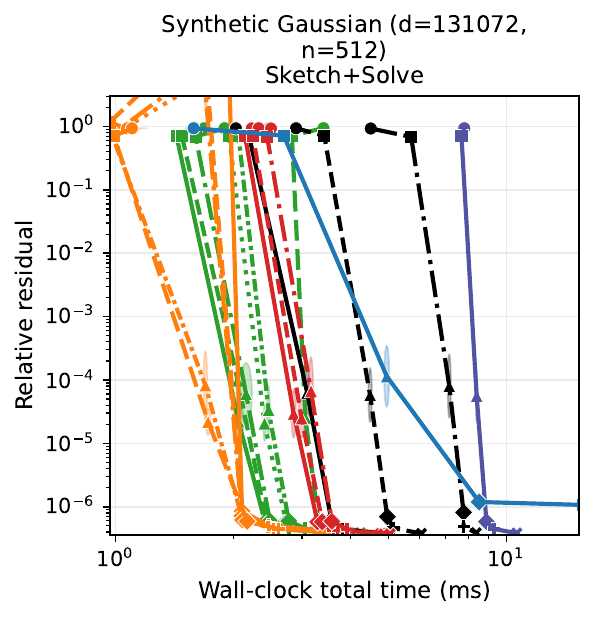}
  \caption{\textbf{Sketch-and-solve ablations.} Synthetic Gaussian (d=131072, n=512). GPU: NVIDIA GeForce RTX 4090.}
  \label{fig:app_sketch_solve_synthetic_synthetic_ls_tall_128k_512_131072x512}
\end{figure}

\begin{figure}[H]
  \centering
  \includegraphics[width=0.95\linewidth]{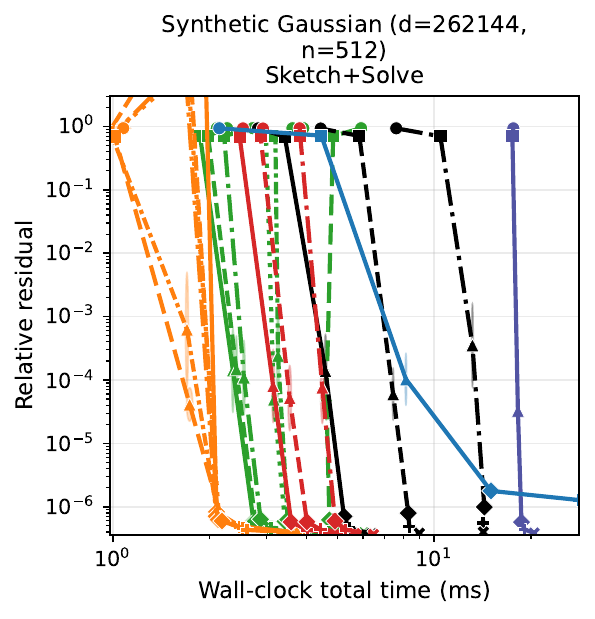}
  \caption{\textbf{Sketch-and-solve ablations.} Synthetic Gaussian (d=262144, n=512). GPU: NVIDIA GeForce RTX 4090.}
  \label{fig:app_sketch_solve_synthetic_synthetic_ls_tall_256k_512_262144x512}
\end{figure}

\begin{figure}[H]
  \centering
  \includegraphics[width=0.95\linewidth]{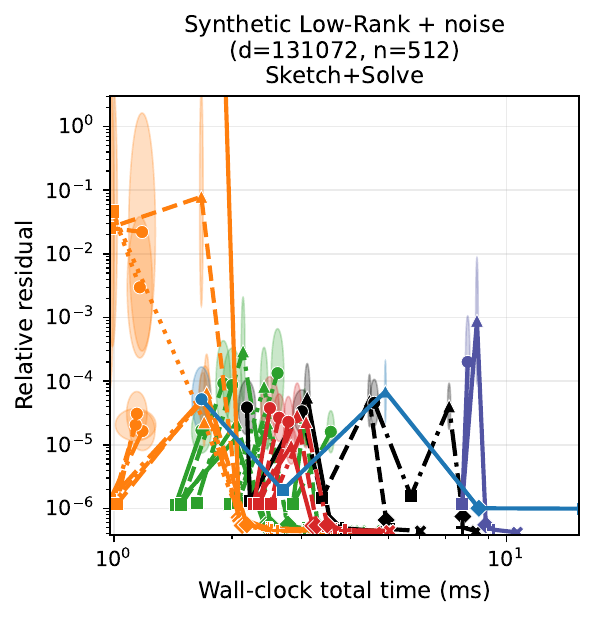}
  \caption{\textbf{Sketch-and-solve ablations.} Synthetic Low-Rank + noise (d=131072, n=512). GPU: NVIDIA GeForce RTX 4090.}
  \label{fig:app_sketch_solve_synthetic_synthetic_lr_tall_128k_512_131072x512}
\end{figure}

\begin{figure}[H]
  \centering
  \includegraphics[width=0.95\linewidth]{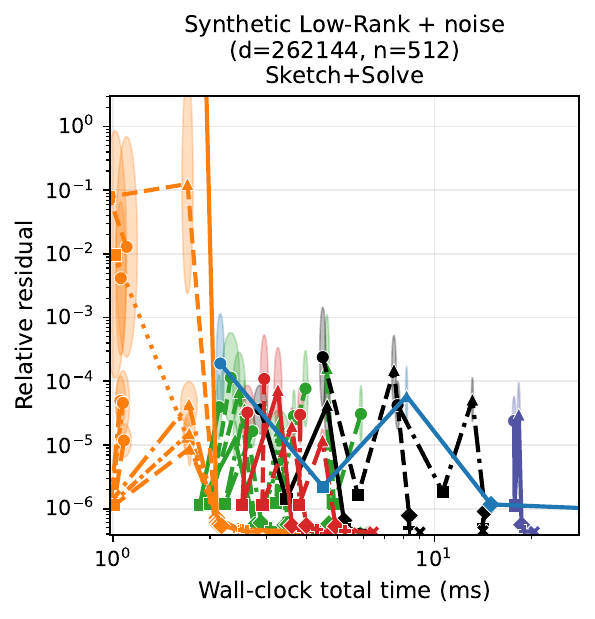}
  \caption{\textbf{Sketch-and-solve ablations.} Synthetic Low-Rank + noise (d=262144, n=512). GPU: NVIDIA GeForce RTX 4090.}
  \label{fig:app_sketch_solve_synthetic_synthetic_lr_tall_256k_512_262144x512}
\end{figure}

\begin{figure}[H]
  \centering
  \includegraphics[width=0.95\linewidth]{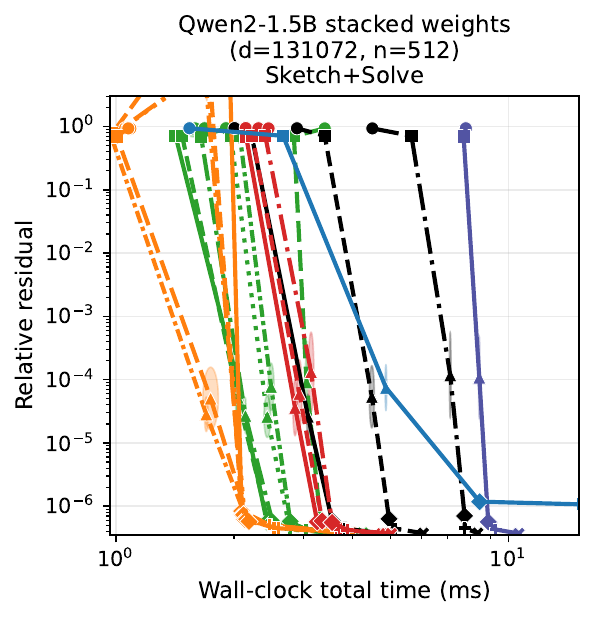}
  \caption{\textbf{Sketch-and-solve ablations.} Qwen2-1.5B stacked weights (d=131072, n=512). GPU: NVIDIA GeForce RTX 4090.}
  \label{fig:app_sketch_solve_qwen_qwen2_1_5b_qwen2_1p5b_weights_full_128k_512_131072x512}
\end{figure}

\begin{figure}[H]
  \centering
  \includegraphics[width=0.95\linewidth]{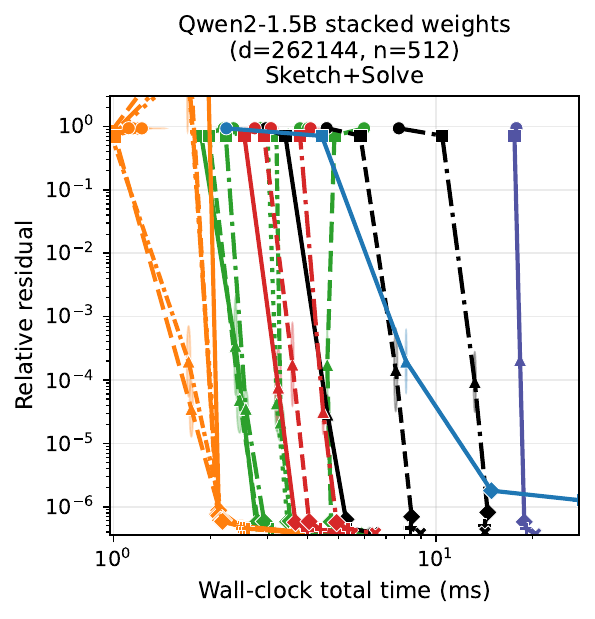}
  \caption{\textbf{Sketch-and-solve ablations.} Qwen2-1.5B stacked weights (d=262144, n=512). GPU: NVIDIA GeForce RTX 4090.}
  \label{fig:app_sketch_solve_qwen_qwen2_1_5b_qwen2_1p5b_weights_full_256k_512_262144x512}
\end{figure}

\begin{figure}[H]
  \centering
  \includegraphics[width=0.95\linewidth]{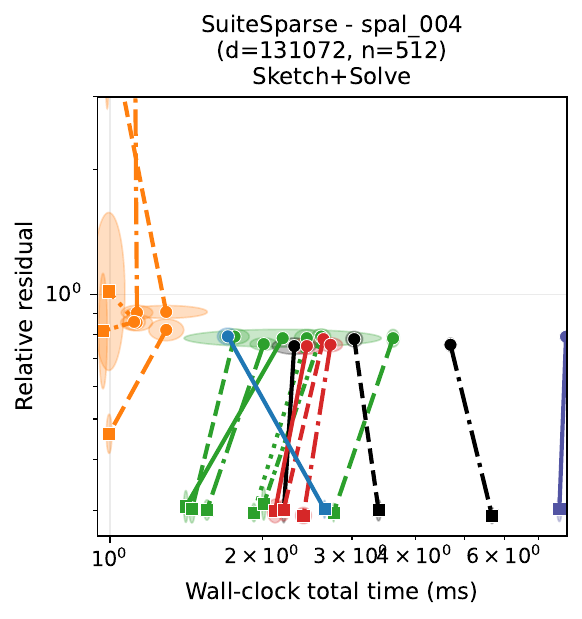}
  \caption{\textbf{Sketch-and-solve ablations.} SuiteSparse - spal\_004 (d=131072, n=512). GPU: NVIDIA GeForce RTX 4090.}
  \label{fig:app_sketch_solve_mittelmann_spal_004_131072x512}
\end{figure}

\begin{figure}[H]
  \centering
  \includegraphics[width=0.95\linewidth]{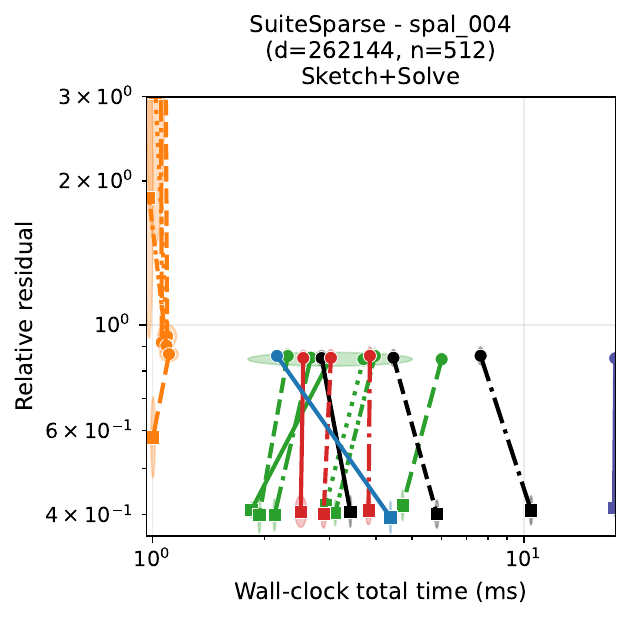}
  \caption{\textbf{Sketch-and-solve ablations.} SuiteSparse - spal\_004 (d=262144, n=512). GPU: NVIDIA GeForce RTX 4090.}
  \label{fig:app_sketch_solve_mittelmann_spal_004_262144x512}
\end{figure}

\clearpage

%% file: sections/95_appendix_ablation_figs_a6000.tex

\subsection{Gram-matrix approximation ablations}
\begin{figure}[H]
  \centering
  \includegraphics[width=0.95\linewidth]{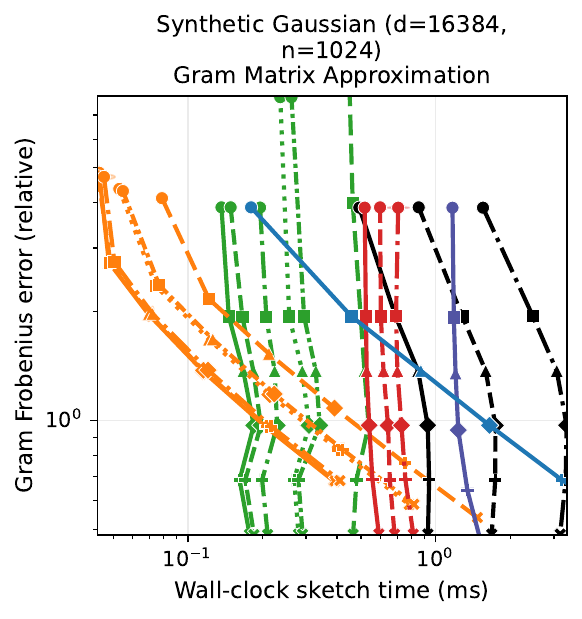}
  \caption{\textbf{Gram-matrix approximation ablations.} Synthetic Gaussian (d=16384, n=1024). GPU: NVIDIA RTX A6000.}
  \label{fig:app_a6000_gram_synthetic_synthetic_ls_tall_16k_1k_16384x1024}
\end{figure}

\begin{figure}[H]
  \centering
  \includegraphics[width=0.95\linewidth]{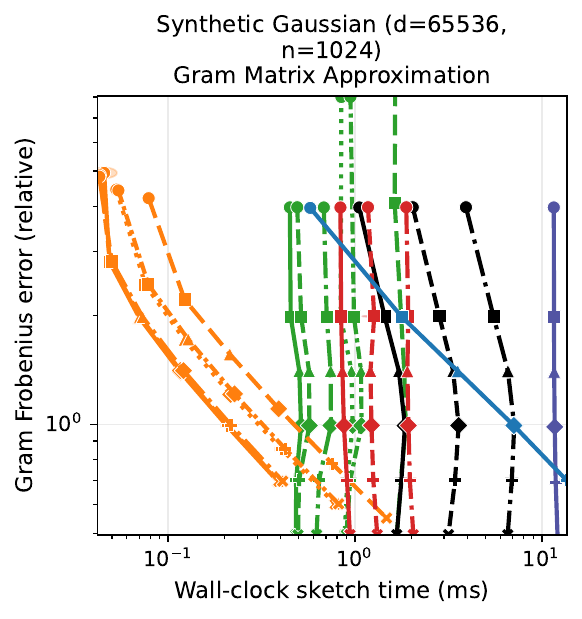}
  \caption{\textbf{Gram-matrix approximation ablations.} Synthetic Gaussian (d=65536, n=1024). GPU: NVIDIA RTX A6000.}
  \label{fig:app_a6000_gram_synthetic_synthetic_ls_tall_64k_1k_65536x1024}
\end{figure}

\begin{figure}[H]
  \centering
  \includegraphics[width=0.95\linewidth]{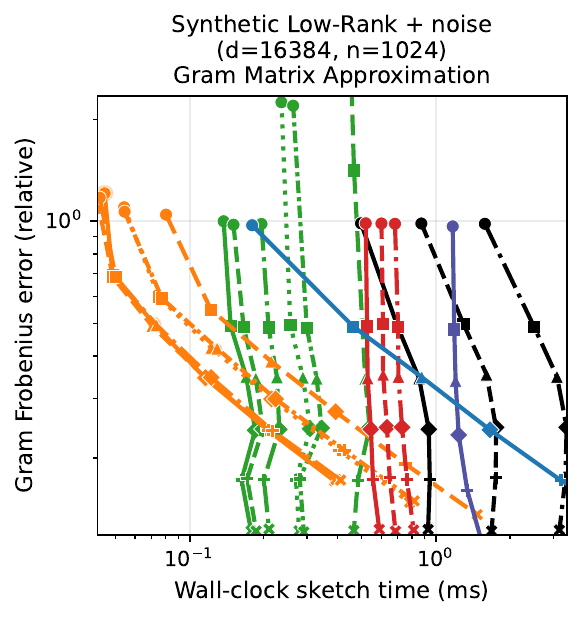}
  \caption{\textbf{Gram-matrix approximation ablations.} Synthetic Low-Rank + noise (d=16384, n=1024). GPU: NVIDIA RTX A6000.}
  \label{fig:app_a6000_gram_synthetic_synthetic_lr_tall_16k_1k_16384x1024}
\end{figure}

\begin{figure}[H]
  \centering
  \includegraphics[width=0.95\linewidth]{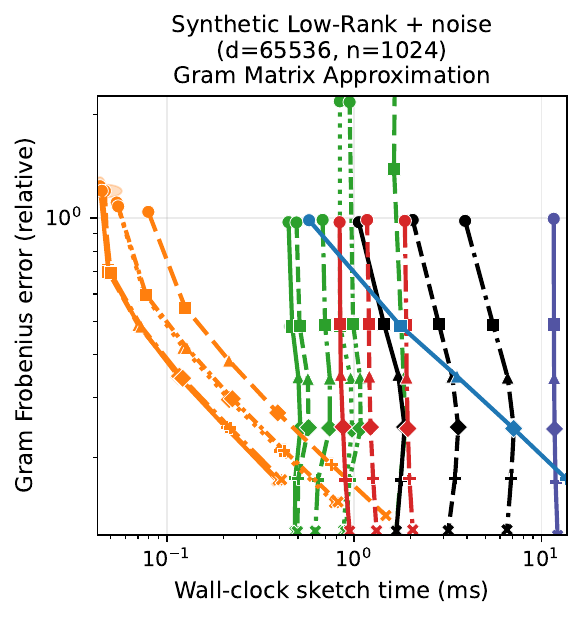}
  \caption{\textbf{Gram-matrix approximation ablations.} Synthetic Low-Rank + noise (d=65536, n=1024). GPU: NVIDIA RTX A6000.}
  \label{fig:app_a6000_gram_synthetic_synthetic_lr_tall_64k_1k_65536x1024}
\end{figure}

\begin{figure}[H]
  \centering
  \includegraphics[width=0.95\linewidth]{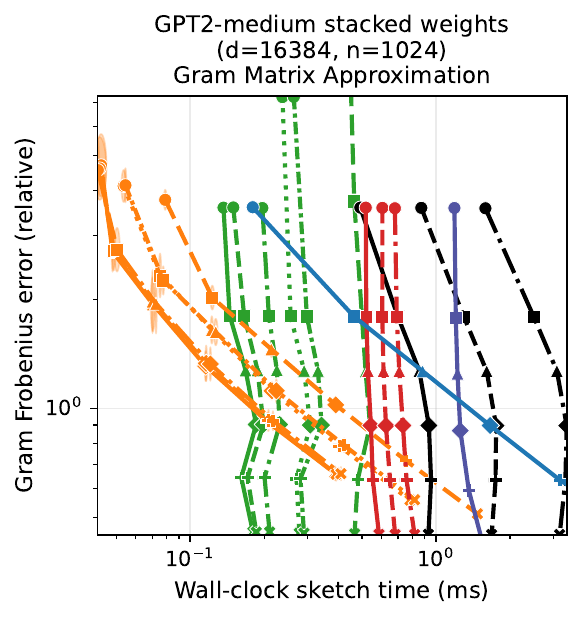}
  \caption{\textbf{Gram-matrix approximation ablations.} GPT2-medium stacked weights (d=16384, n=1024). GPU: NVIDIA RTX A6000.}
  \label{fig:app_a6000_gram_gpt2_medium_gpt2_medium_weights_concat_16k_1k_16384x1024}
\end{figure}

\begin{figure}[H]
  \centering
  \includegraphics[width=0.95\linewidth]{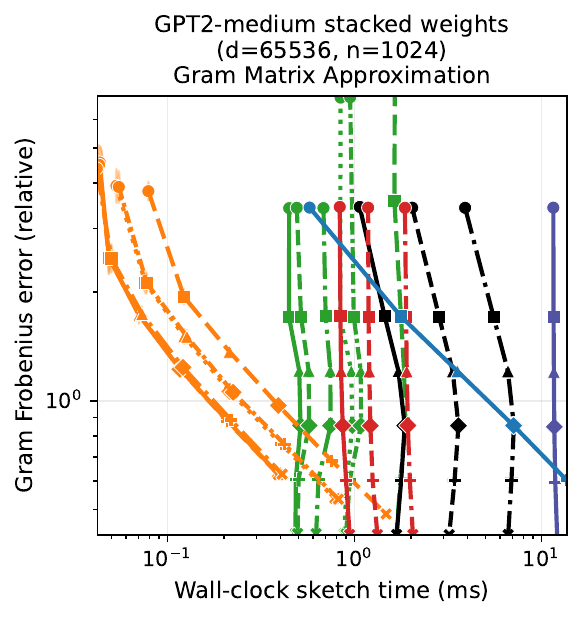}
  \caption{\textbf{Gram-matrix approximation ablations.} GPT2-medium stacked weights (d=65536, n=1024). GPU: NVIDIA RTX A6000.}
  \label{fig:app_a6000_gram_gpt2_medium_gpt2_medium_weights_concat_full_64k_1k_65536x1024}
\end{figure}

\begin{figure}[H]
  \centering
  \includegraphics[width=0.95\linewidth]{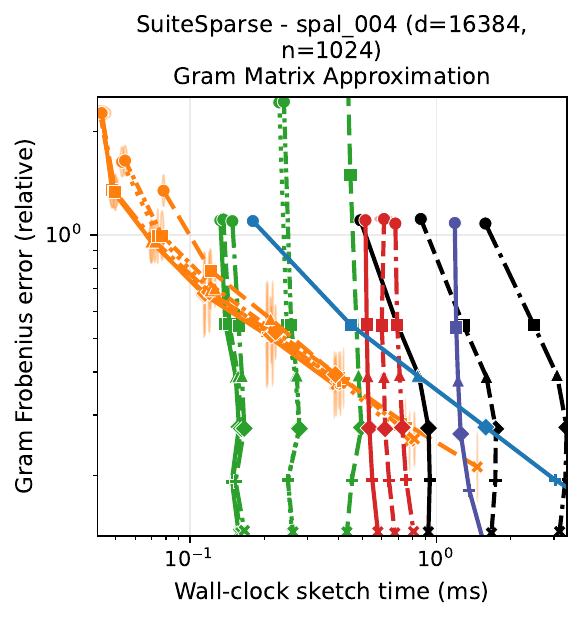}
  \caption{\textbf{Gram-matrix approximation ablations.} SuiteSparse - spal\_004 (d=16384, n=1024). GPU: NVIDIA RTX A6000.}
  \label{fig:app_a6000_gram_mittelmann_spal_004_16384x1024}
\end{figure}

\begin{figure}[H]
  \centering
  \includegraphics[width=0.95\linewidth]{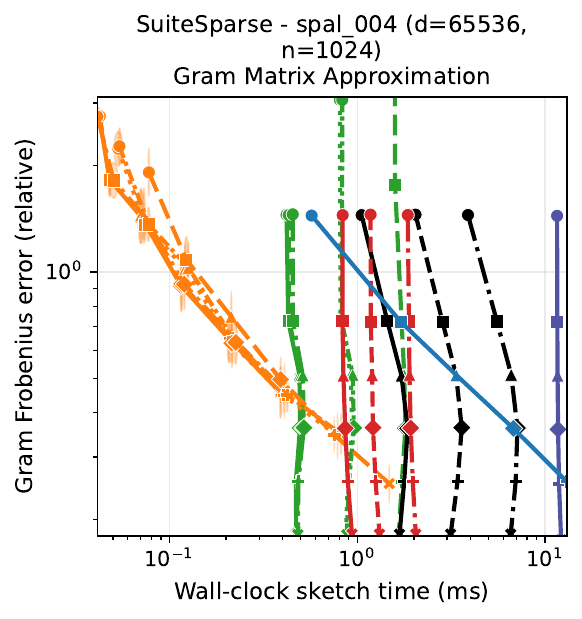}
  \caption{\textbf{Gram-matrix approximation ablations.} SuiteSparse - spal\_004 (d=65536, n=1024). GPU: NVIDIA RTX A6000.}
  \label{fig:app_a6000_gram_mittelmann_spal_004_65536x1024}
\end{figure}

\begin{figure}[H]
  \centering
  \includegraphics[width=0.95\linewidth]{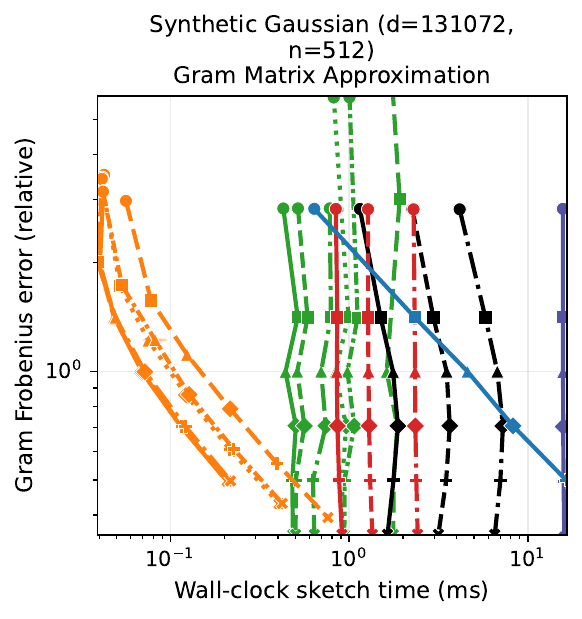}
  \caption{\textbf{Gram-matrix approximation ablations.} Synthetic Gaussian (d=131072, n=512). GPU: NVIDIA RTX A6000.}
  \label{fig:app_a6000_gram_synthetic_synthetic_ls_tall_128k_512_131072x512}
\end{figure}

\begin{figure}[H]
  \centering
  \includegraphics[width=0.95\linewidth]{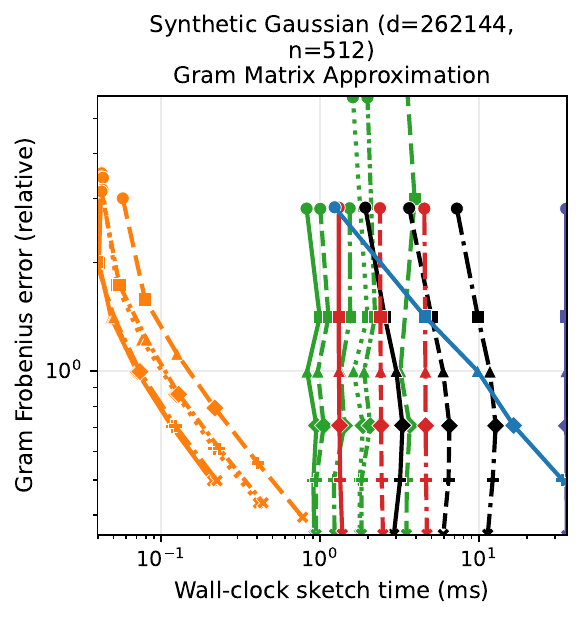}
  \caption{\textbf{Gram-matrix approximation ablations.} Synthetic Gaussian (d=262144, n=512). GPU: NVIDIA RTX A6000.}
  \label{fig:app_a6000_gram_synthetic_synthetic_ls_tall_256k_512_262144x512}
\end{figure}

\begin{figure}[H]
  \centering
  \includegraphics[width=0.95\linewidth]{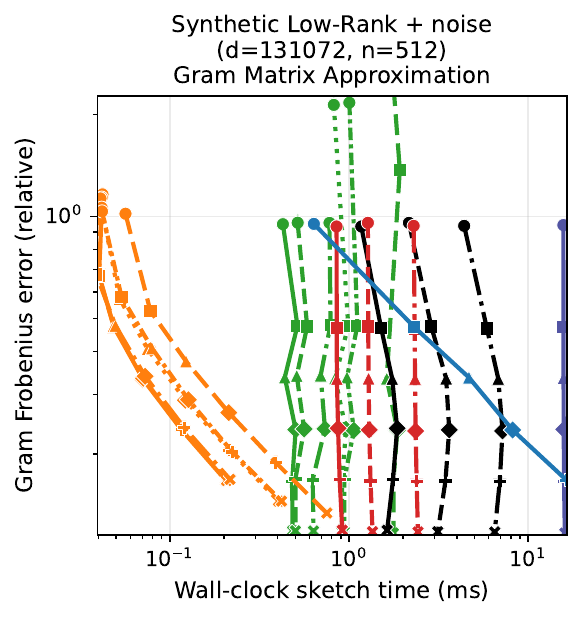}
  \caption{\textbf{Gram-matrix approximation ablations.} Synthetic Low-Rank + noise (d=131072, n=512). GPU: NVIDIA RTX A6000.}
  \label{fig:app_a6000_gram_synthetic_synthetic_lr_tall_128k_512_131072x512}
\end{figure}

\begin{figure}[H]
  \centering
  \includegraphics[width=0.95\linewidth]{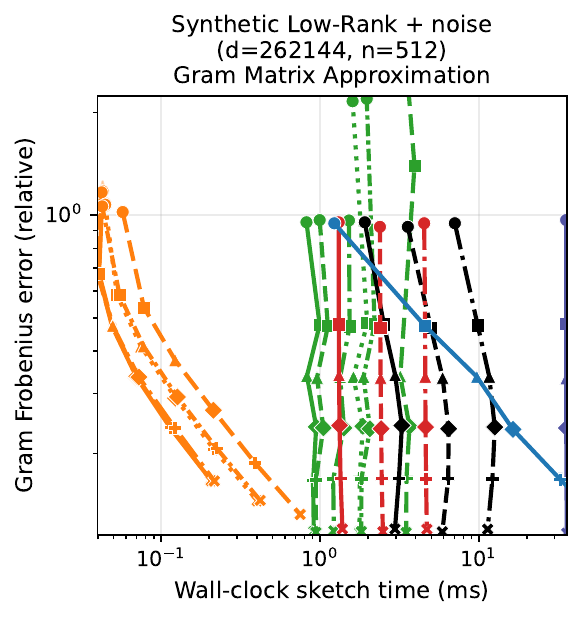}
  \caption{\textbf{Gram-matrix approximation ablations.} Synthetic Low-Rank + noise (d=262144, n=512). GPU: NVIDIA RTX A6000.}
  \label{fig:app_a6000_gram_synthetic_synthetic_lr_tall_256k_512_262144x512}
\end{figure}

\begin{figure}[H]
  \centering
  \includegraphics[width=0.95\linewidth]{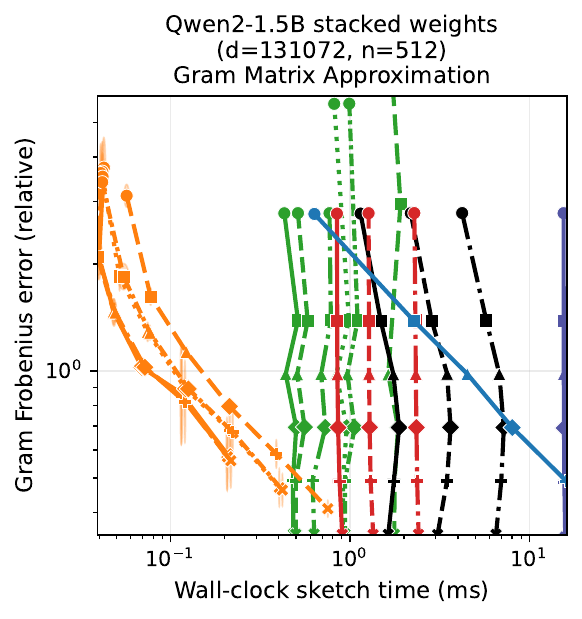}
  \caption{\textbf{Gram-matrix approximation ablations.} Qwen2-1.5B stacked weights (d=131072, n=512). GPU: NVIDIA RTX A6000.}
  \label{fig:app_a6000_gram_qwen_qwen2_1_5b_qwen2_1p5b_weights_full_128k_512_131072x512}
\end{figure}

\begin{figure}[H]
  \centering
  \includegraphics[width=0.95\linewidth]{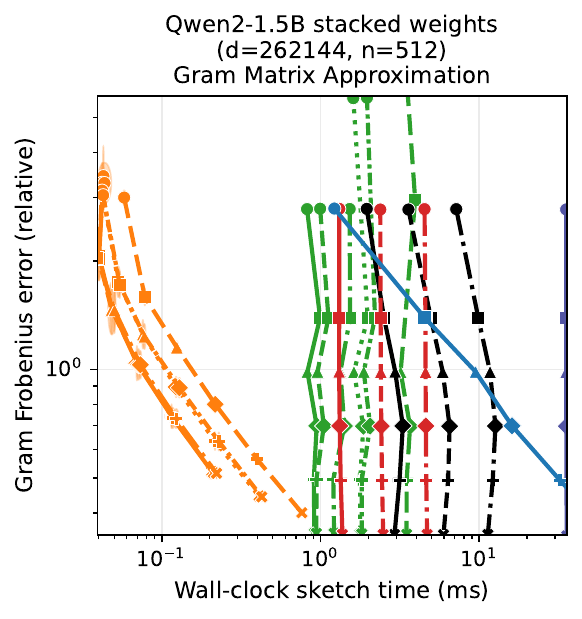}
  \caption{\textbf{Gram-matrix approximation ablations.} Qwen2-1.5B stacked weights (d=262144, n=512). GPU: NVIDIA RTX A6000.}
  \label{fig:app_a6000_gram_qwen_qwen2_1_5b_qwen2_1p5b_weights_full_256k_512_262144x512}
\end{figure}

\begin{figure}[H]
  \centering
  \includegraphics[width=0.95\linewidth]{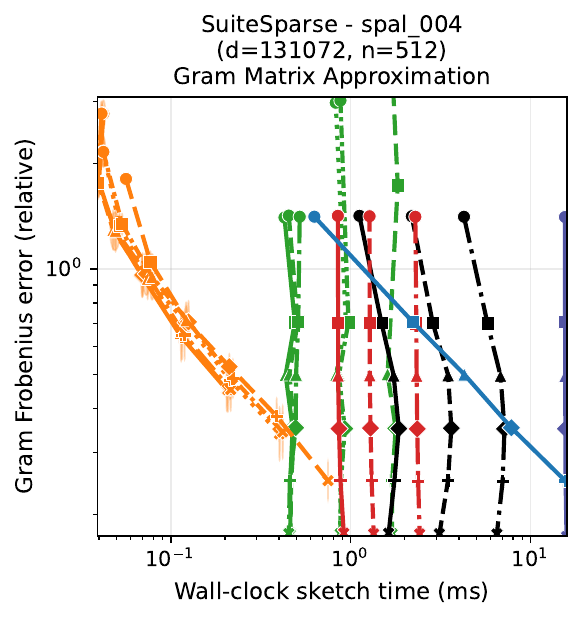}
  \caption{\textbf{Gram-matrix approximation ablations.} SuiteSparse - spal\_004 (d=131072, n=512). GPU: NVIDIA RTX A6000.}
  \label{fig:app_a6000_gram_mittelmann_spal_004_131072x512}
\end{figure}

\begin{figure}[H]
  \centering
  \includegraphics[width=0.95\linewidth]{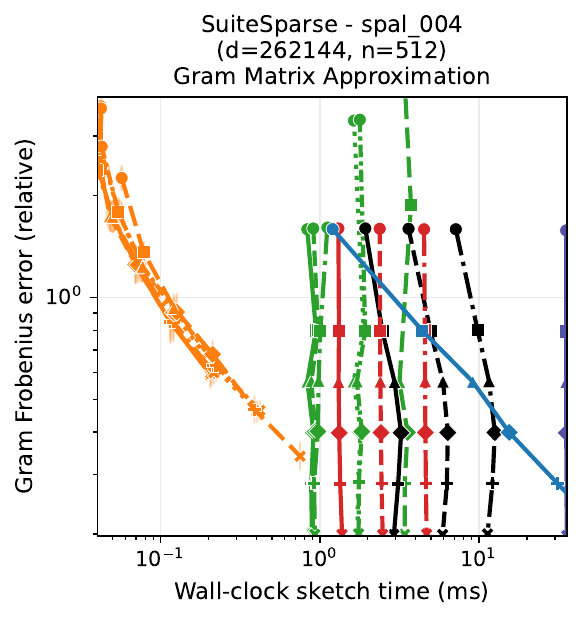}
  \caption{\textbf{Gram-matrix approximation ablations.} SuiteSparse - spal\_004 (d=262144, n=512). GPU: NVIDIA RTX A6000.}
  \label{fig:app_a6000_gram_mittelmann_spal_004_262144x512}
\end{figure}

\clearpage

\subsection{OSE spectral-error ablations}
\begin{figure}[H]
  \centering
  \includegraphics[width=0.95\linewidth]{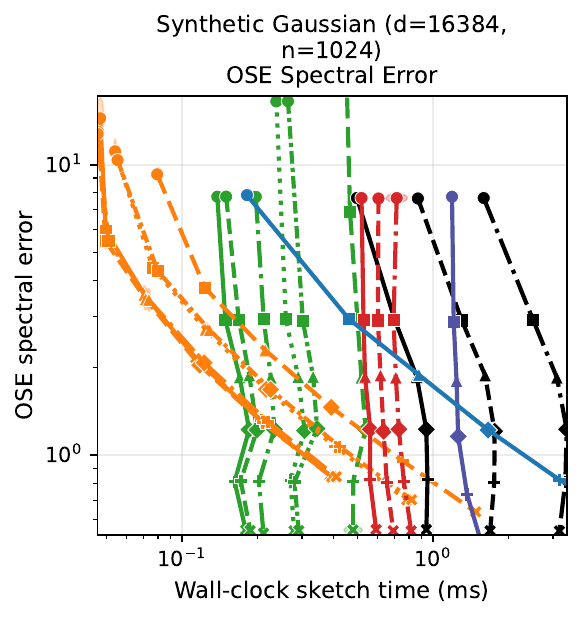}
  \caption{\textbf{OSE spectral-error ablations.} Synthetic Gaussian (d=16384, n=1024). GPU: NVIDIA RTX A6000.}
  \label{fig:app_a6000_ose_synthetic_synthetic_ls_tall_16k_1k_16384x1024}
\end{figure}

\begin{figure}[H]
  \centering
  \includegraphics[width=0.95\linewidth]{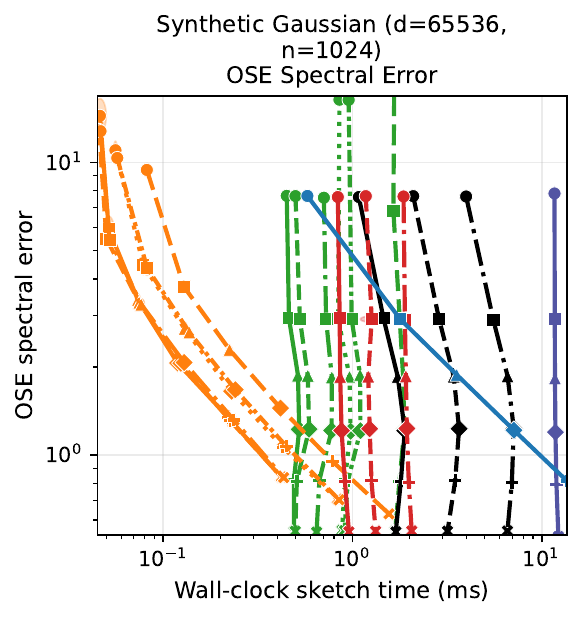}
  \caption{\textbf{OSE spectral-error ablations.} Synthetic Gaussian (d=65536, n=1024). GPU: NVIDIA RTX A6000.}
  \label{fig:app_a6000_ose_synthetic_synthetic_ls_tall_64k_1k_65536x1024}
\end{figure}

\begin{figure}[H]
  \centering
  \includegraphics[width=0.95\linewidth]{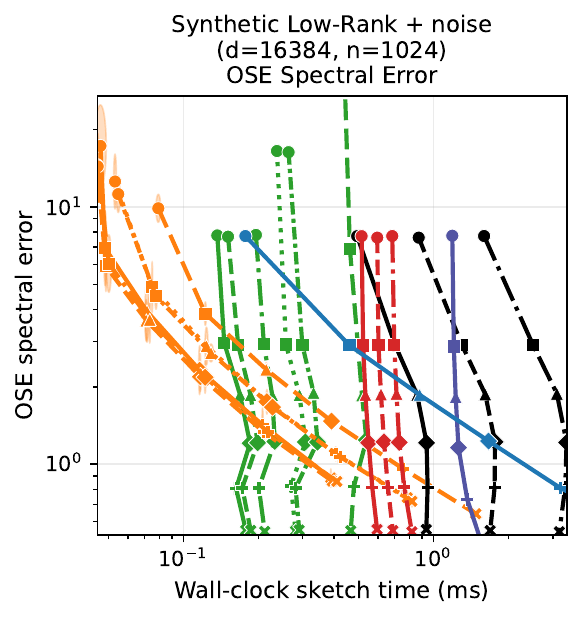}
  \caption{\textbf{OSE spectral-error ablations.} Synthetic Low-Rank + noise (d=16384, n=1024). GPU: NVIDIA RTX A6000.}
  \label{fig:app_a6000_ose_synthetic_synthetic_lr_tall_16k_1k_16384x1024}
\end{figure}

\begin{figure}[H]
  \centering
  \includegraphics[width=0.95\linewidth]{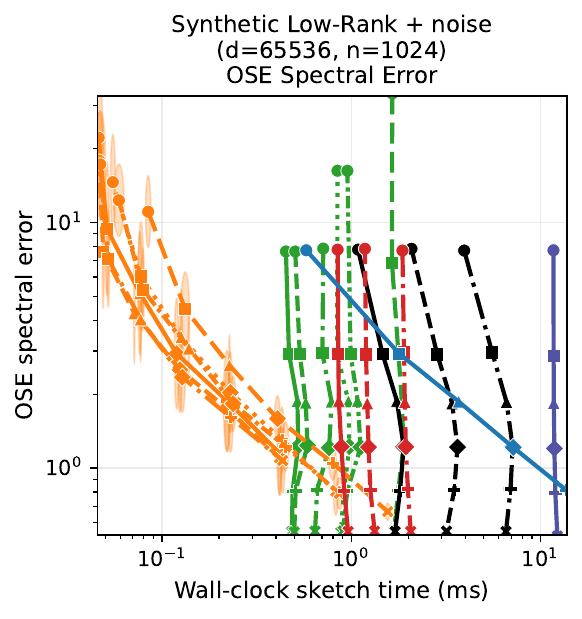}
  \caption{\textbf{OSE spectral-error ablations.} Synthetic Low-Rank + noise (d=65536, n=1024). GPU: NVIDIA RTX A6000.}
  \label{fig:app_a6000_ose_synthetic_synthetic_lr_tall_64k_1k_65536x1024}
\end{figure}

\begin{figure}[H]
  \centering
  \includegraphics[width=0.95\linewidth]{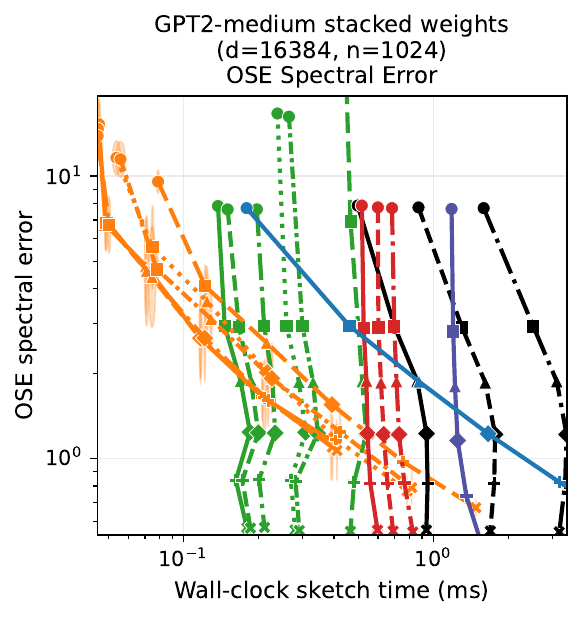}
  \caption{\textbf{OSE spectral-error ablations.} GPT2-medium stacked weights (d=16384, n=1024). GPU: NVIDIA RTX A6000.}
  \label{fig:app_a6000_ose_gpt2_medium_gpt2_medium_weights_concat_16k_1k_16384x1024}
\end{figure}

\begin{figure}[H]
  \centering
  \includegraphics[width=0.95\linewidth]{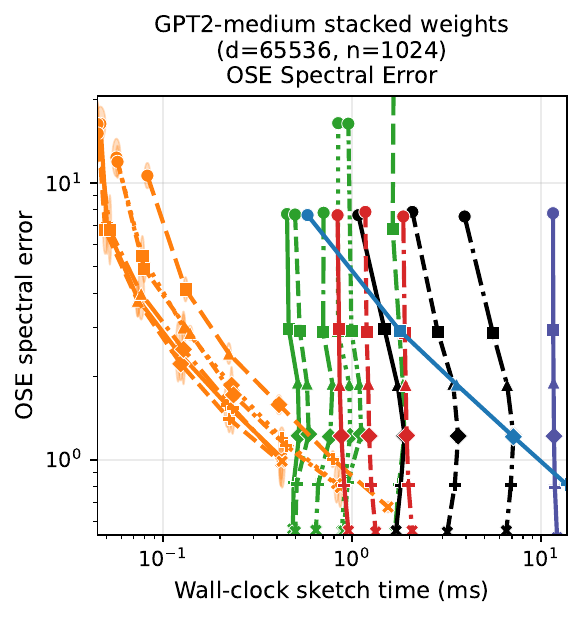}
  \caption{\textbf{OSE spectral-error ablations.} GPT2-medium stacked weights (d=65536, n=1024). GPU: NVIDIA RTX A6000.}
  \label{fig:app_a6000_ose_gpt2_medium_gpt2_medium_weights_concat_full_64k_1k_65536x1024}
\end{figure}

\begin{figure}[H]
  \centering
  \includegraphics[width=0.95\linewidth]{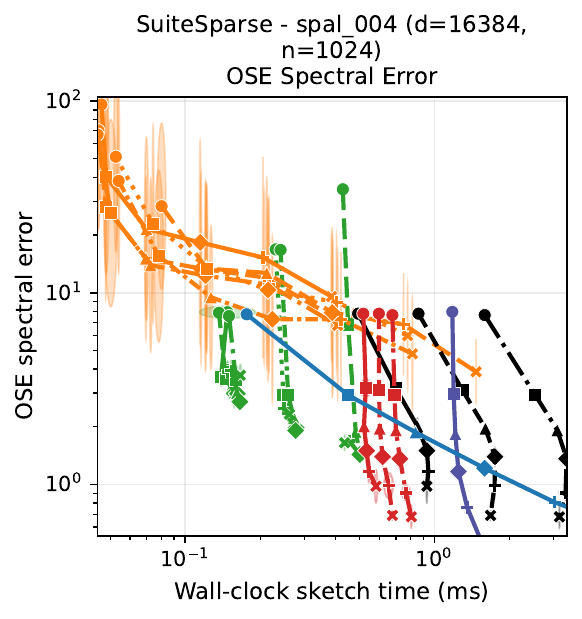}
  \caption{\textbf{OSE spectral-error ablations.} SuiteSparse - spal\_004 (d=16384, n=1024). GPU: NVIDIA RTX A6000.}
  \label{fig:app_a6000_ose_mittelmann_spal_004_16384x1024}
\end{figure}

\begin{figure}[H]
  \centering
  \includegraphics[width=0.95\linewidth]{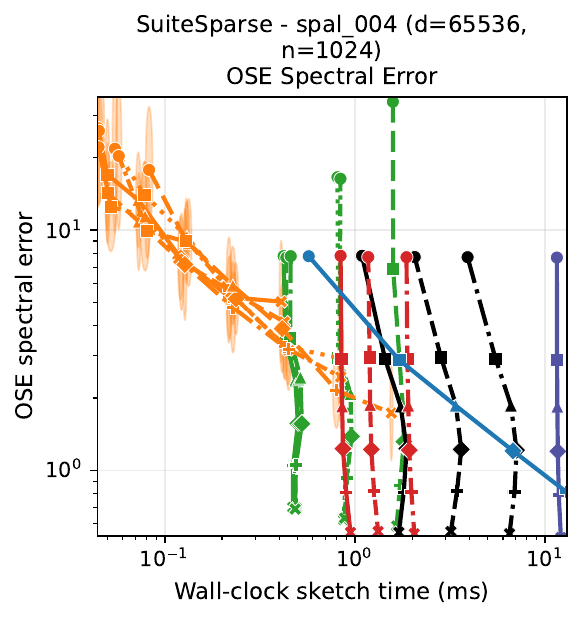}
  \caption{\textbf{OSE spectral-error ablations.} SuiteSparse - spal\_004 (d=65536, n=1024). GPU: NVIDIA RTX A6000.}
  \label{fig:app_a6000_ose_mittelmann_spal_004_65536x1024}
\end{figure}

\begin{figure}[H]
  \centering
  \includegraphics[width=0.95\linewidth]{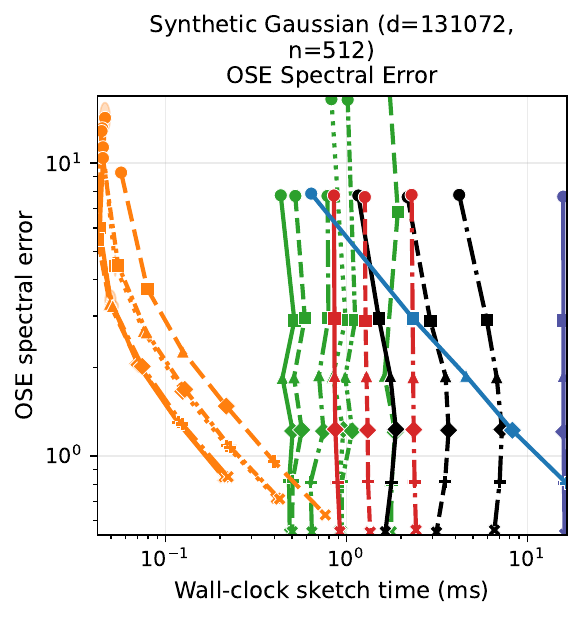}
  \caption{\textbf{OSE spectral-error ablations.} Synthetic Gaussian (d=131072, n=512). GPU: NVIDIA RTX A6000.}
  \label{fig:app_a6000_ose_synthetic_synthetic_ls_tall_128k_512_131072x512}
\end{figure}

\begin{figure}[H]
  \centering
  \includegraphics[width=0.95\linewidth]{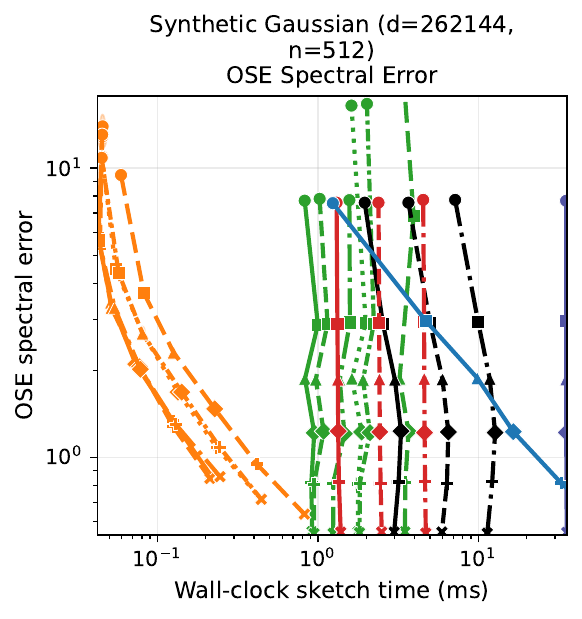}
  \caption{\textbf{OSE spectral-error ablations.} Synthetic Gaussian (d=262144, n=512). GPU: NVIDIA RTX A6000.}
  \label{fig:app_a6000_ose_synthetic_synthetic_ls_tall_256k_512_262144x512}
\end{figure}

\begin{figure}[H]
  \centering
  \includegraphics[width=0.95\linewidth]{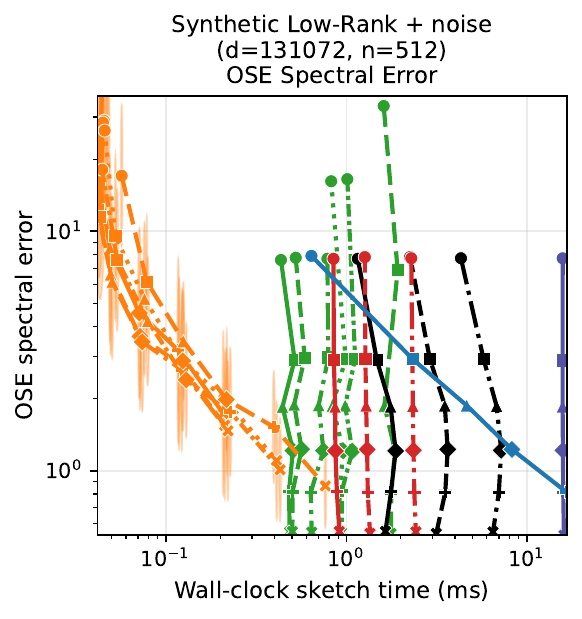}
  \caption{\textbf{OSE spectral-error ablations.} Synthetic Low-Rank + noise (d=131072, n=512). GPU: NVIDIA RTX A6000.}
  \label{fig:app_a6000_ose_synthetic_synthetic_lr_tall_128k_512_131072x512}
\end{figure}

\begin{figure}[H]
  \centering
  \includegraphics[width=0.95\linewidth]{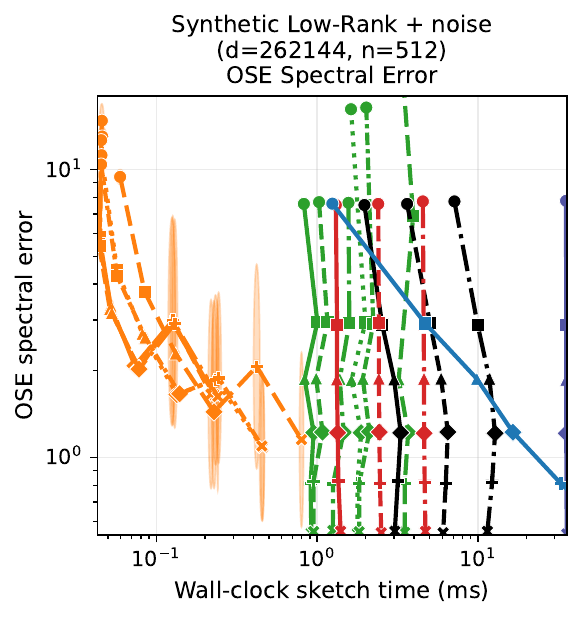}
  \caption{\textbf{OSE spectral-error ablations.} Synthetic Low-Rank + noise (d=262144, n=512). GPU: NVIDIA RTX A6000.}
  \label{fig:app_a6000_ose_synthetic_synthetic_lr_tall_256k_512_262144x512}
\end{figure}

\begin{figure}[H]
  \centering
  \includegraphics[width=0.95\linewidth]{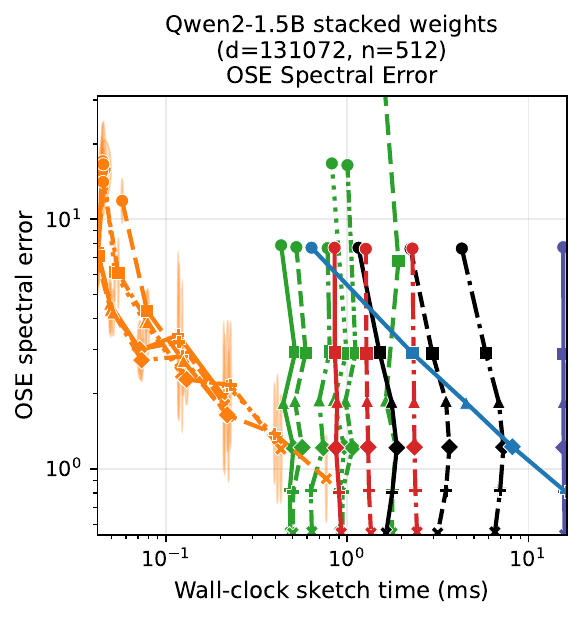}
  \caption{\textbf{OSE spectral-error ablations.} Qwen2-1.5B stacked weights (d=131072, n=512). GPU: NVIDIA RTX A6000.}
  \label{fig:app_a6000_ose_qwen_qwen2_1_5b_qwen2_1p5b_weights_full_128k_512_131072x512}
\end{figure}

\begin{figure}[H]
  \centering
  \includegraphics[width=0.95\linewidth]{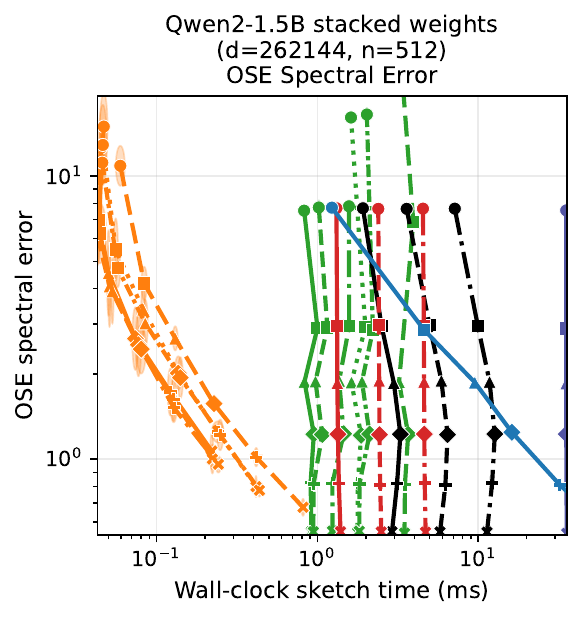}
  \caption{\textbf{OSE spectral-error ablations.} Qwen2-1.5B stacked weights (d=262144, n=512). GPU: NVIDIA RTX A6000.}
  \label{fig:app_a6000_ose_qwen_qwen2_1_5b_qwen2_1p5b_weights_full_256k_512_262144x512}
\end{figure}

\begin{figure}[H]
  \centering
  \includegraphics[width=0.95\linewidth]{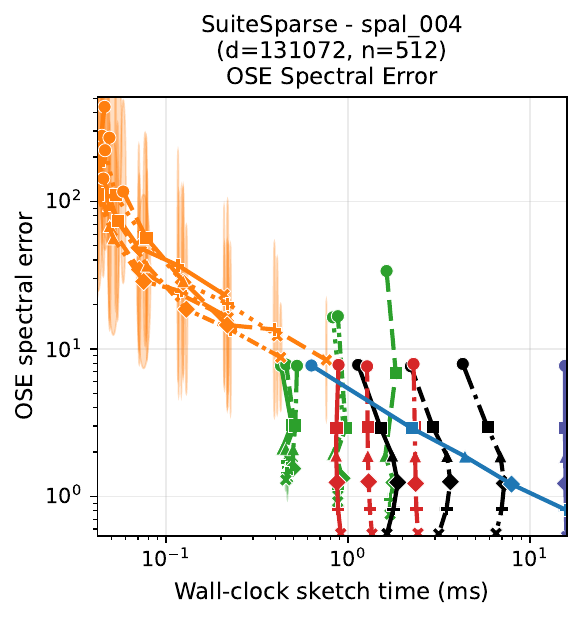}
  \caption{\textbf{OSE spectral-error ablations.} SuiteSparse - spal\_004 (d=131072, n=512). GPU: NVIDIA RTX A6000.}
  \label{fig:app_a6000_ose_mittelmann_spal_004_131072x512}
\end{figure}

\begin{figure}[H]
  \centering
  \includegraphics[width=0.95\linewidth]{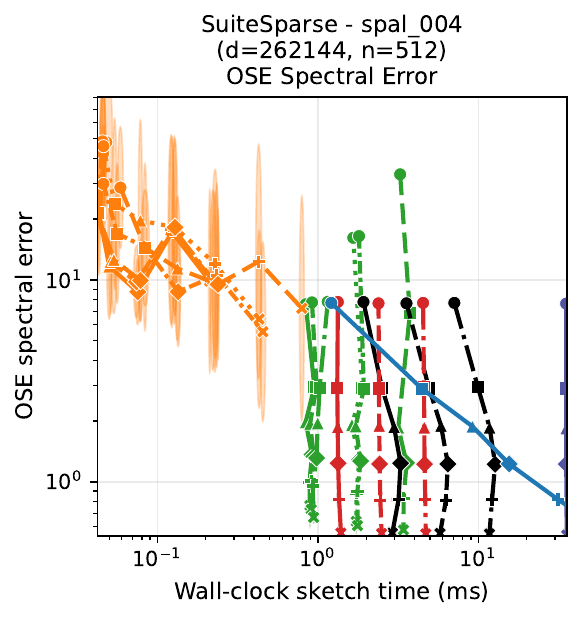}
  \caption{\textbf{OSE spectral-error ablations.} SuiteSparse - spal\_004 (d=262144, n=512). GPU: NVIDIA RTX A6000.}
  \label{fig:app_a6000_ose_mittelmann_spal_004_262144x512}
\end{figure}

\clearpage

\subsection{Sketch-and-ridge regression ablations}
\begin{figure}[H]
  \centering
  \includegraphics[width=0.95\linewidth]{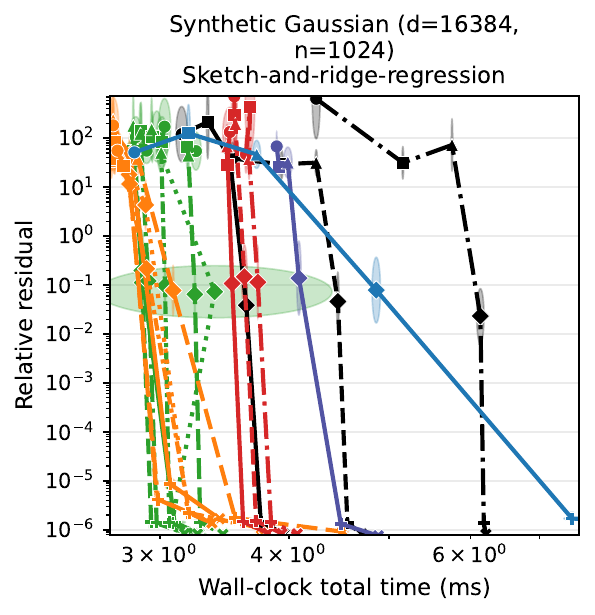}
  \caption{\textbf{Sketch-and-ridge regression ablations.} Synthetic Gaussian (d=16384, n=1024). GPU: NVIDIA RTX A6000.}
  \label{fig:app_a6000_ridge_synthetic_synthetic_ls_tall_16k_1k_16384x1024}
\end{figure}

\begin{figure}[H]
  \centering
  \includegraphics[width=0.95\linewidth]{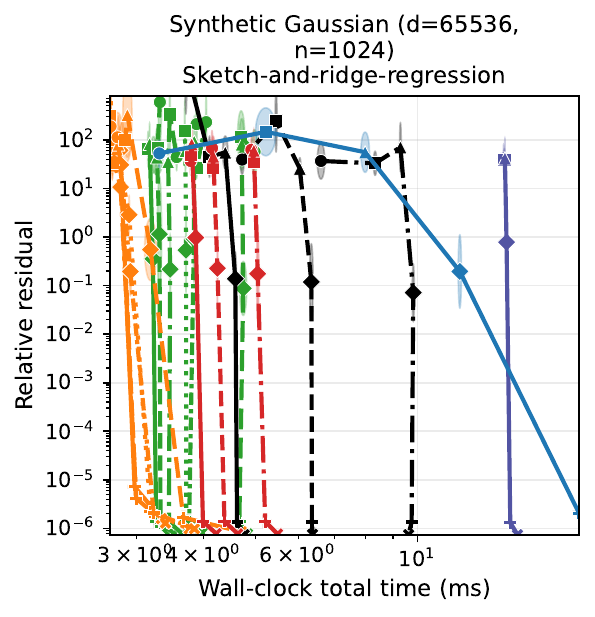}
  \caption{\textbf{Sketch-and-ridge regression ablations.} Synthetic Gaussian (d=65536, n=1024). GPU: NVIDIA RTX A6000.}
  \label{fig:app_a6000_ridge_synthetic_synthetic_ls_tall_64k_1k_65536x1024}
\end{figure}

\begin{figure}[H]
  \centering
  \includegraphics[width=0.95\linewidth]{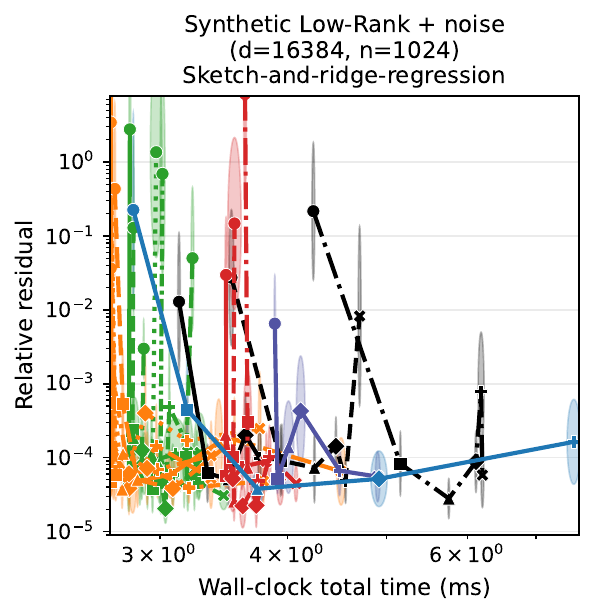}
  \caption{\textbf{Sketch-and-ridge regression ablations.} Synthetic Low-Rank + noise (d=16384, n=1024). GPU: NVIDIA RTX A6000.}
  \label{fig:app_a6000_ridge_synthetic_synthetic_lr_tall_16k_1k_16384x1024}
\end{figure}

\begin{figure}[H]
  \centering
  \includegraphics[width=0.95\linewidth]{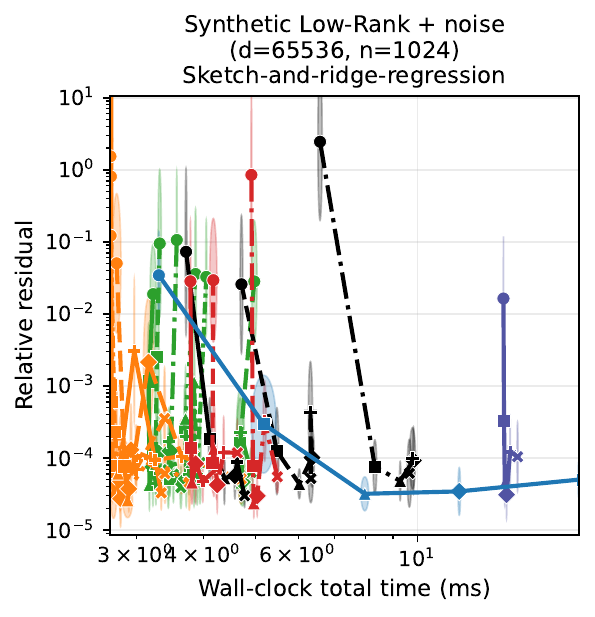}
  \caption{\textbf{Sketch-and-ridge regression ablations.} Synthetic Low-Rank + noise (d=65536, n=1024). GPU: NVIDIA RTX A6000.}
  \label{fig:app_a6000_ridge_synthetic_synthetic_lr_tall_64k_1k_65536x1024}
\end{figure}

\begin{figure}[H]
  \centering
  \includegraphics[width=0.95\linewidth]{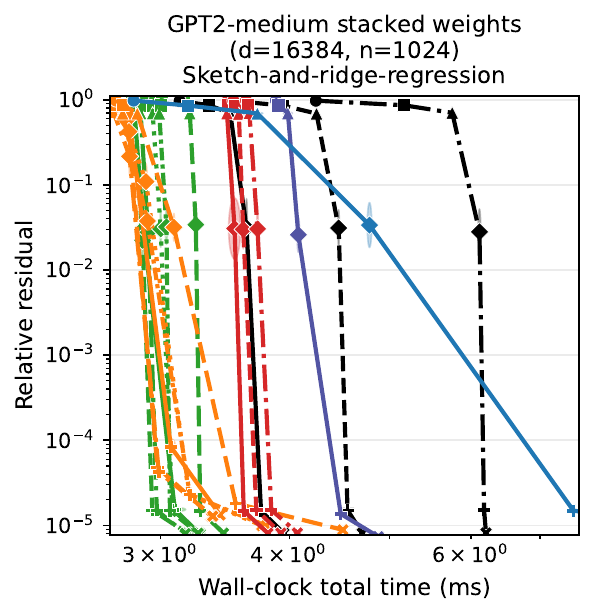}
  \caption{\textbf{Sketch-and-ridge regression ablations.} GPT2-medium stacked weights (d=16384, n=1024). GPU: NVIDIA RTX A6000.}
  \label{fig:app_a6000_ridge_gpt2_medium_gpt2_medium_weights_concat_16k_1k_16384x1024}
\end{figure}

\begin{figure}[H]
  \centering
  \includegraphics[width=0.95\linewidth]{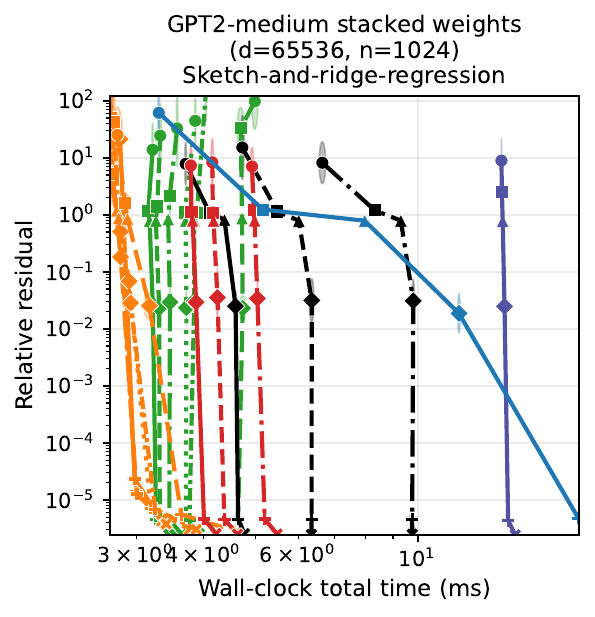}
  \caption{\textbf{Sketch-and-ridge regression ablations.} GPT2-medium stacked weights (d=65536, n=1024). GPU: NVIDIA RTX A6000.}
  \label{fig:app_a6000_ridge_gpt2_medium_gpt2_medium_weights_concat_full_64k_1k_65536x1024}
\end{figure}

\begin{figure}[H]
  \centering
  \includegraphics[width=0.95\linewidth]{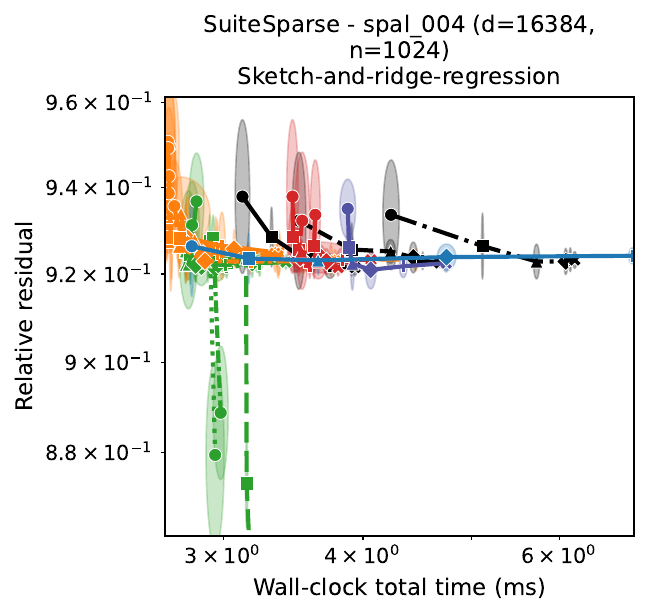}
  \caption{\textbf{Sketch-and-ridge regression ablations.} SuiteSparse - spal\_004 (d=16384, n=1024). GPU: NVIDIA RTX A6000.}
  \label{fig:app_a6000_ridge_mittelmann_spal_004_16384x1024}
\end{figure}

\begin{figure}[H]
  \centering
  \includegraphics[width=0.95\linewidth]{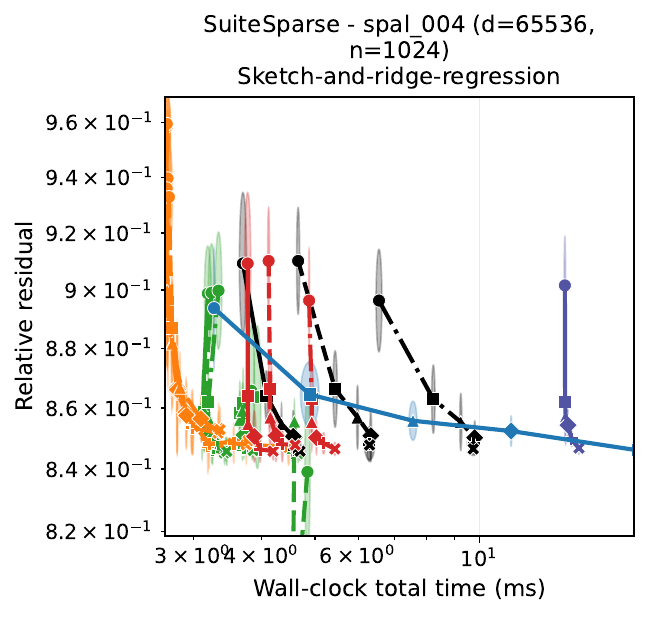}
  \caption{\textbf{Sketch-and-ridge regression ablations.} SuiteSparse - spal\_004 (d=65536, n=1024). GPU: NVIDIA RTX A6000.}
  \label{fig:app_a6000_ridge_mittelmann_spal_004_65536x1024}
\end{figure}

\begin{figure}[H]
  \centering
  \includegraphics[width=0.95\linewidth]{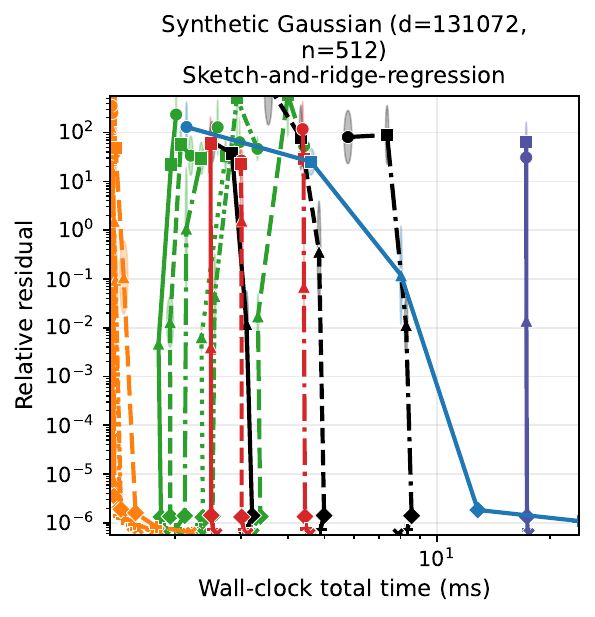}
  \caption{\textbf{Sketch-and-ridge regression ablations.} Synthetic Gaussian (d=131072, n=512). GPU: NVIDIA RTX A6000.}
  \label{fig:app_a6000_ridge_synthetic_synthetic_ls_tall_128k_512_131072x512}
\end{figure}

\begin{figure}[H]
  \centering
  \includegraphics[width=0.95\linewidth]{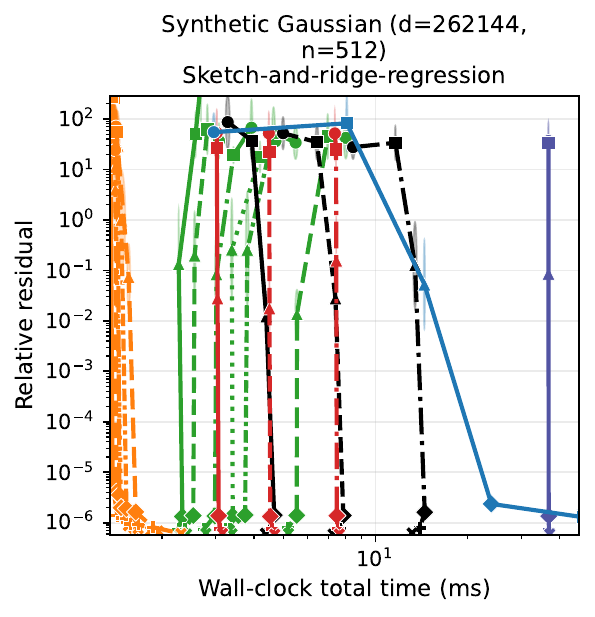}
  \caption{\textbf{Sketch-and-ridge regression ablations.} Synthetic Gaussian (d=262144, n=512). GPU: NVIDIA RTX A6000.}
  \label{fig:app_a6000_ridge_synthetic_synthetic_ls_tall_256k_512_262144x512}
\end{figure}

\begin{figure}[H]
  \centering
  \includegraphics[width=0.95\linewidth]{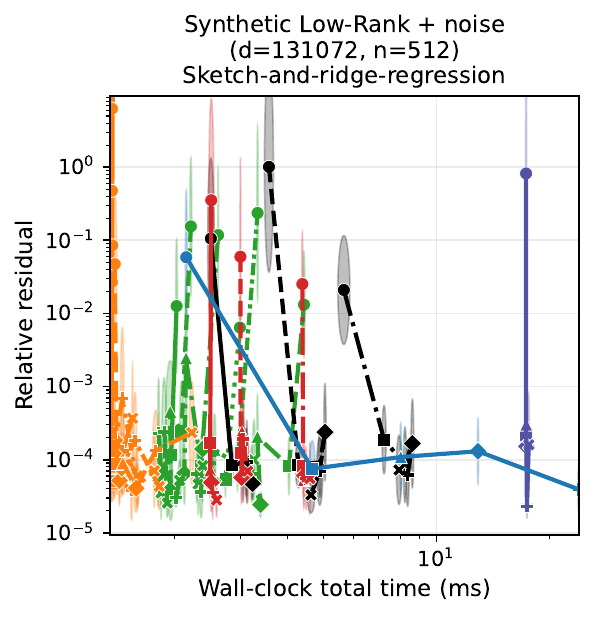}
  \caption{\textbf{Sketch-and-ridge regression ablations.} Synthetic Low-Rank + noise (d=131072, n=512). GPU: NVIDIA RTX A6000.}
  \label{fig:app_a6000_ridge_synthetic_synthetic_lr_tall_128k_512_131072x512}
\end{figure}

\begin{figure}[H]
  \centering
  \includegraphics[width=0.95\linewidth]{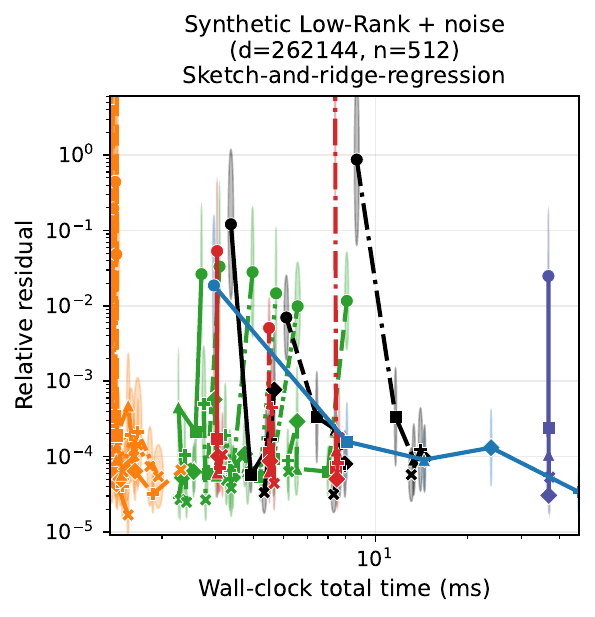}
  \caption{\textbf{Sketch-and-ridge regression ablations.} Synthetic Low-Rank + noise (d=262144, n=512). GPU: NVIDIA RTX A6000.}
  \label{fig:app_a6000_ridge_synthetic_synthetic_lr_tall_256k_512_262144x512}
\end{figure}

\begin{figure}[H]
  \centering
  \includegraphics[width=0.95\linewidth]{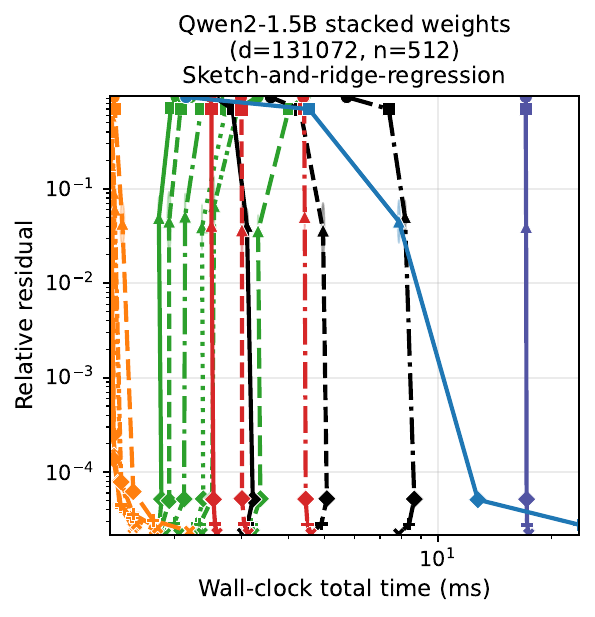}
  \caption{\textbf{Sketch-and-ridge regression ablations.} Qwen2-1.5B stacked weights (d=131072, n=512). GPU: NVIDIA RTX A6000.}
  \label{fig:app_a6000_ridge_qwen_qwen2_1_5b_qwen2_1p5b_weights_full_128k_512_131072x512}
\end{figure}

\begin{figure}[H]
  \centering
  \includegraphics[width=0.95\linewidth]{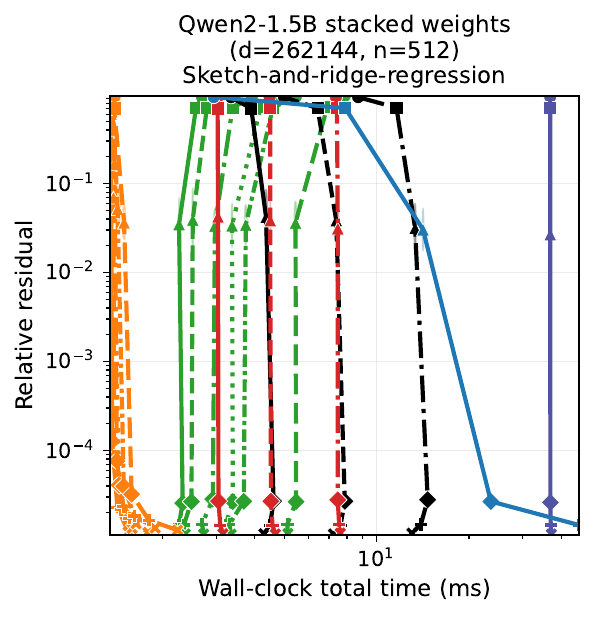}
  \caption{\textbf{Sketch-and-ridge regression ablations.} Qwen2-1.5B stacked weights (d=262144, n=512). GPU: NVIDIA RTX A6000.}
  \label{fig:app_a6000_ridge_qwen_qwen2_1_5b_qwen2_1p5b_weights_full_256k_512_262144x512}
\end{figure}

\begin{figure}[H]
  \centering
  \includegraphics[width=0.95\linewidth]{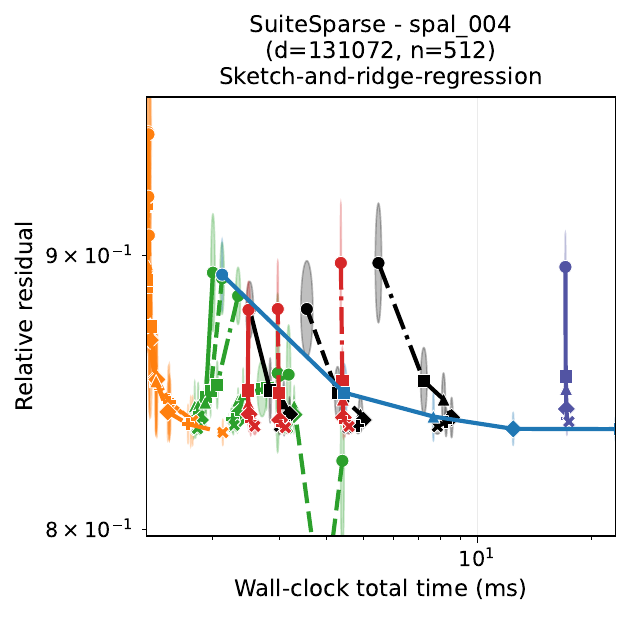}
  \caption{\textbf{Sketch-and-ridge regression ablations.} SuiteSparse - spal\_004 (d=131072, n=512). GPU: NVIDIA RTX A6000.}
  \label{fig:app_a6000_ridge_mittelmann_spal_004_131072x512}
\end{figure}

\begin{figure}[H]
  \centering
  \includegraphics[width=0.95\linewidth]{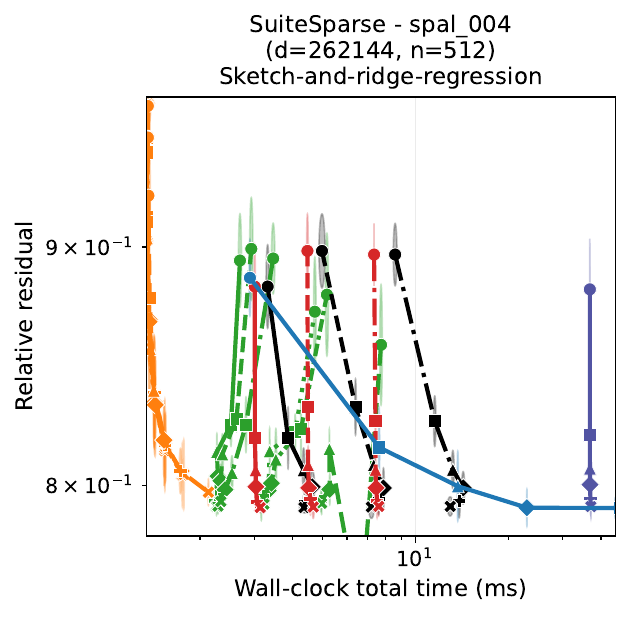}
  \caption{\textbf{Sketch-and-ridge regression ablations.} SuiteSparse - spal\_004 (d=262144, n=512). GPU: NVIDIA RTX A6000.}
  \label{fig:app_a6000_ridge_mittelmann_spal_004_262144x512}
\end{figure}

\clearpage

\subsection{Sketch-and-solve ablations}
\begin{figure}[H]
  \centering
  \includegraphics[width=0.95\linewidth]{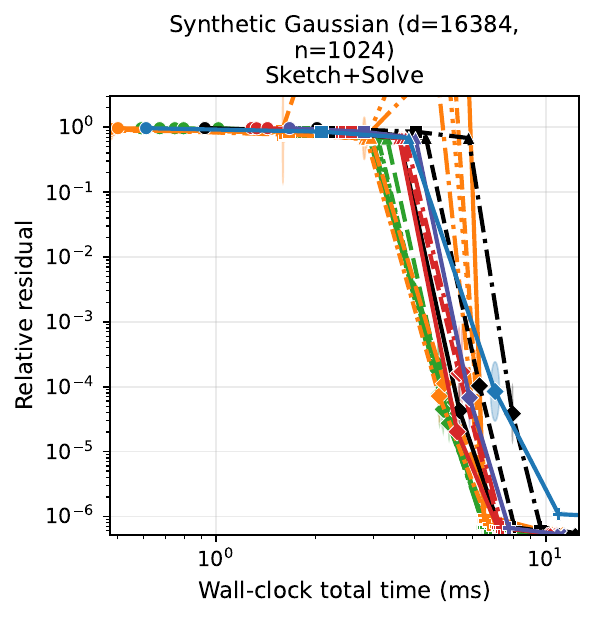}
  \caption{\textbf{Sketch-and-solve ablations.} Synthetic Gaussian (d=16384, n=1024). GPU: NVIDIA RTX A6000.}
  \label{fig:app_a6000_sketch_solve_synthetic_synthetic_ls_tall_16k_1k_16384x1024}
\end{figure}

\begin{figure}[H]
  \centering
  \includegraphics[width=0.95\linewidth]{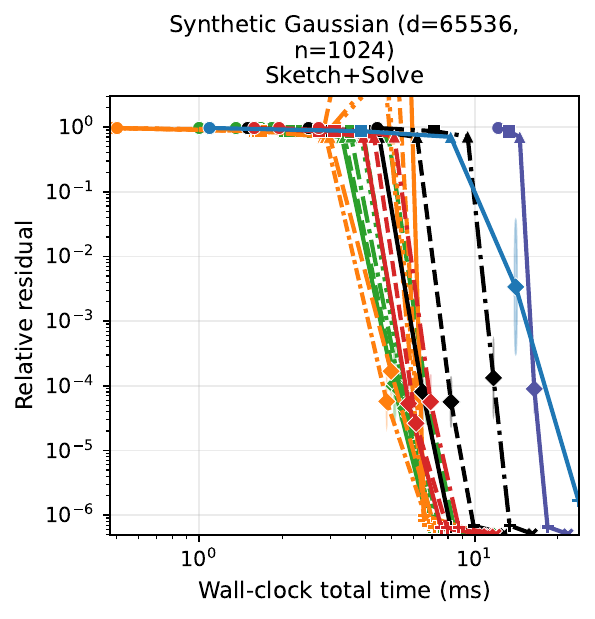}
  \caption{\textbf{Sketch-and-solve ablations.} Synthetic Gaussian (d=65536, n=1024). GPU: NVIDIA RTX A6000.}
  \label{fig:app_a6000_sketch_solve_synthetic_synthetic_ls_tall_64k_1k_65536x1024}
\end{figure}

\begin{figure}[H]
  \centering
  \includegraphics[width=0.95\linewidth]{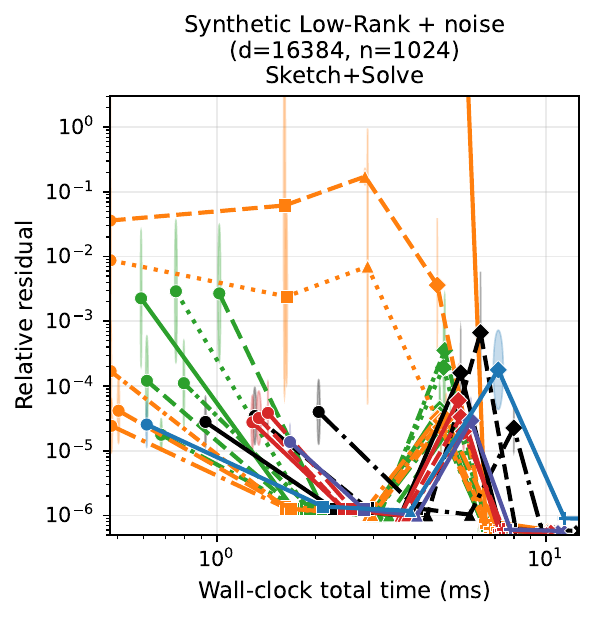}
  \caption{\textbf{Sketch-and-solve ablations.} Synthetic Low-Rank + noise (d=16384, n=1024). GPU: NVIDIA RTX A6000.}
  \label{fig:app_a6000_sketch_solve_synthetic_synthetic_lr_tall_16k_1k_16384x1024}
\end{figure}

\begin{figure}[H]
  \centering
  \includegraphics[width=0.95\linewidth]{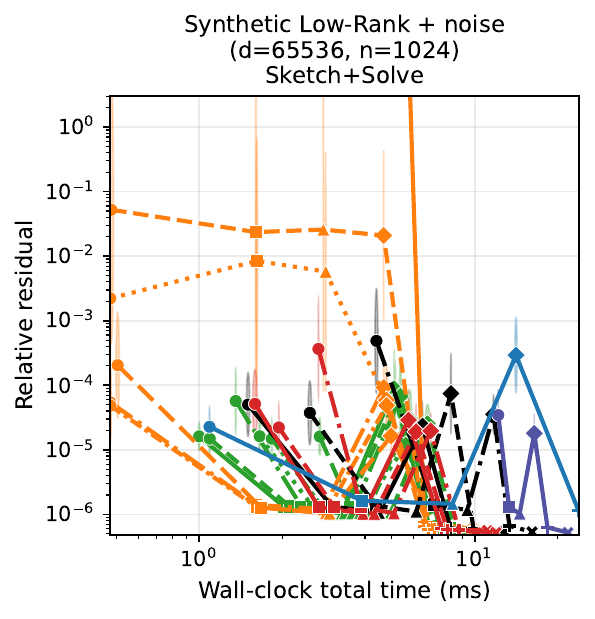}
  \caption{\textbf{Sketch-and-solve ablations.} Synthetic Low-Rank + noise (d=65536, n=1024). GPU: NVIDIA RTX A6000.}
  \label{fig:app_a6000_sketch_solve_synthetic_synthetic_lr_tall_64k_1k_65536x1024}
\end{figure}

\begin{figure}[H]
  \centering
  \includegraphics[width=0.95\linewidth]{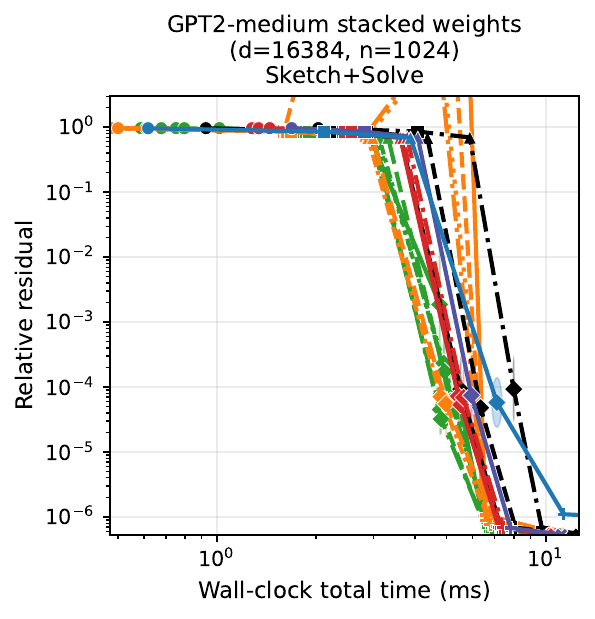}
  \caption{\textbf{Sketch-and-solve ablations.} GPT2-medium stacked weights (d=16384, n=1024). GPU: NVIDIA RTX A6000.}
  \label{fig:app_a6000_sketch_solve_gpt2_medium_gpt2_medium_weights_concat_16k_1k_16384x1024}
\end{figure}

\begin{figure}[H]
  \centering
  \includegraphics[width=0.95\linewidth]{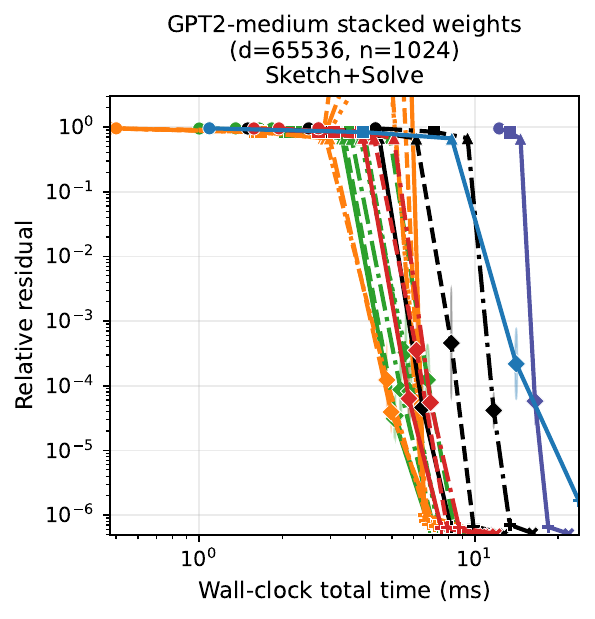}
  \caption{\textbf{Sketch-and-solve ablations.} GPT2-medium stacked weights (d=65536, n=1024). GPU: NVIDIA RTX A6000.}
  \label{fig:app_a6000_sketch_solve_gpt2_medium_gpt2_medium_weights_concat_full_64k_1k_65536x1024}
\end{figure}

\begin{figure}[H]
  \centering
  \includegraphics[width=0.95\linewidth]{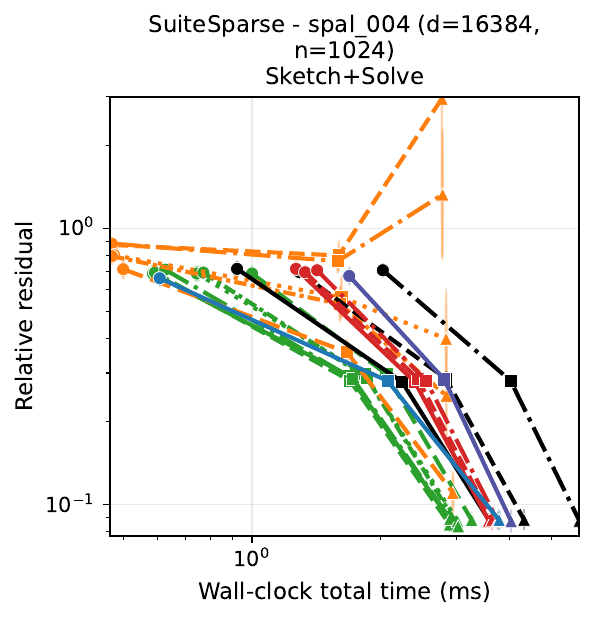}
  \caption{\textbf{Sketch-and-solve ablations.} SuiteSparse - spal\_004 (d=16384, n=1024). GPU: NVIDIA RTX A6000.}
  \label{fig:app_a6000_sketch_solve_mittelmann_spal_004_16384x1024}
\end{figure}

\begin{figure}[H]
  \centering
  \includegraphics[width=0.95\linewidth]{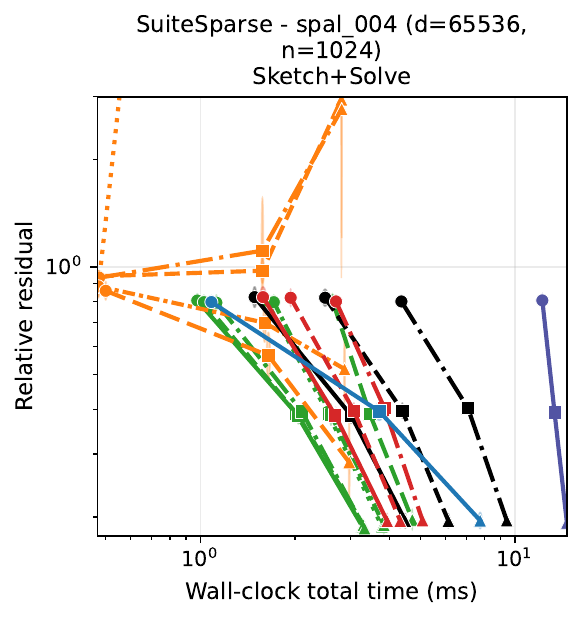}
  \caption{\textbf{Sketch-and-solve ablations.} SuiteSparse - spal\_004 (d=65536, n=1024). GPU: NVIDIA RTX A6000.}
  \label{fig:app_a6000_sketch_solve_mittelmann_spal_004_65536x1024}
\end{figure}

\begin{figure}[H]
  \centering
  \includegraphics[width=0.95\linewidth]{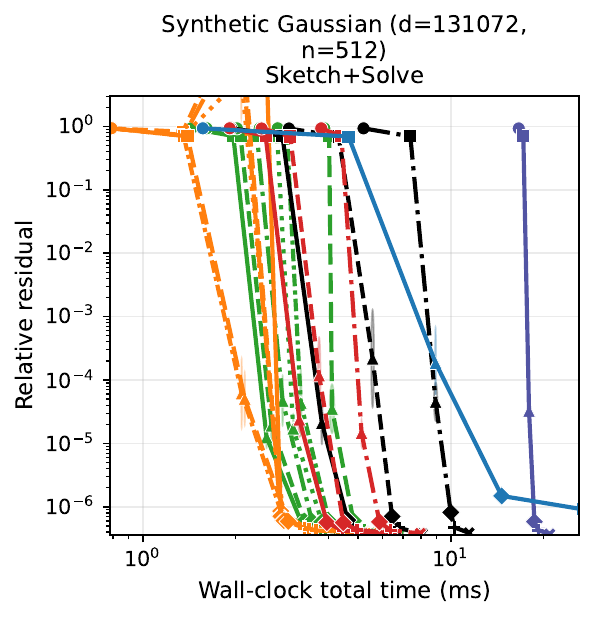}
  \caption{\textbf{Sketch-and-solve ablations.} Synthetic Gaussian (d=131072, n=512). GPU: NVIDIA RTX A6000.}
  \label{fig:app_a6000_sketch_solve_synthetic_synthetic_ls_tall_128k_512_131072x512}
\end{figure}

\begin{figure}[H]
  \centering
  \includegraphics[width=0.95\linewidth]{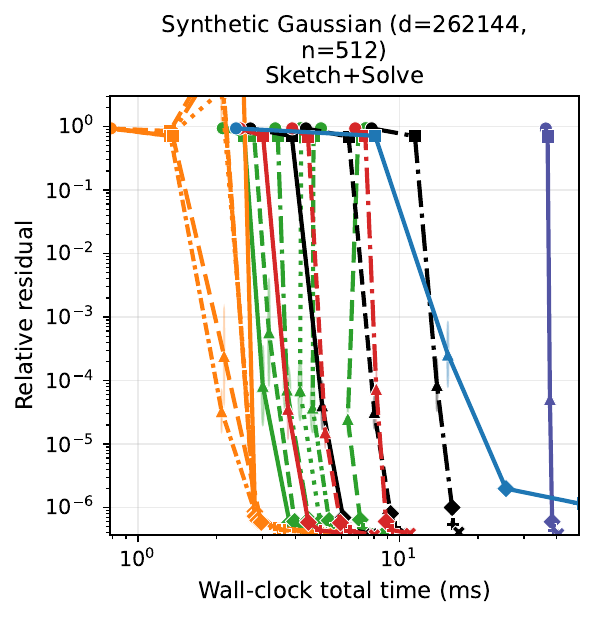}
  \caption{\textbf{Sketch-and-solve ablations.} Synthetic Gaussian (d=262144, n=512). GPU: NVIDIA RTX A6000.}
  \label{fig:app_a6000_sketch_solve_synthetic_synthetic_ls_tall_256k_512_262144x512}
\end{figure}

\begin{figure}[H]
  \centering
  \includegraphics[width=0.95\linewidth]{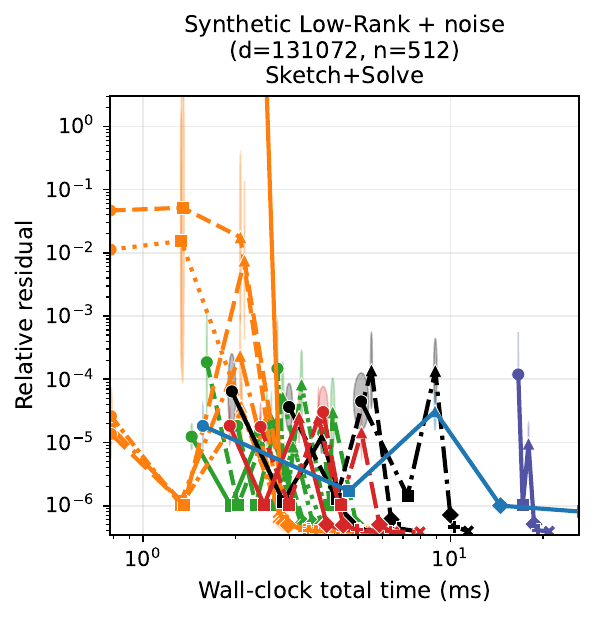}
  \caption{\textbf{Sketch-and-solve ablations.} Synthetic Low-Rank + noise (d=131072, n=512). GPU: NVIDIA RTX A6000.}
  \label{fig:app_a6000_sketch_solve_synthetic_synthetic_lr_tall_128k_512_131072x512}
\end{figure}

\begin{figure}[H]
  \centering
  \includegraphics[width=0.95\linewidth]{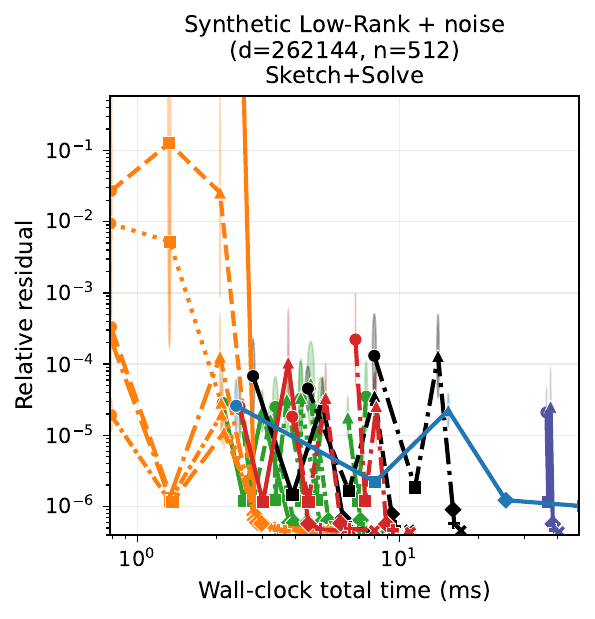}
  \caption{\textbf{Sketch-and-solve ablations.} Synthetic Low-Rank + noise (d=262144, n=512). GPU: NVIDIA RTX A6000.}
  \label{fig:app_a6000_sketch_solve_synthetic_synthetic_lr_tall_256k_512_262144x512}
\end{figure}

\begin{figure}[H]
  \centering
  \includegraphics[width=0.95\linewidth]{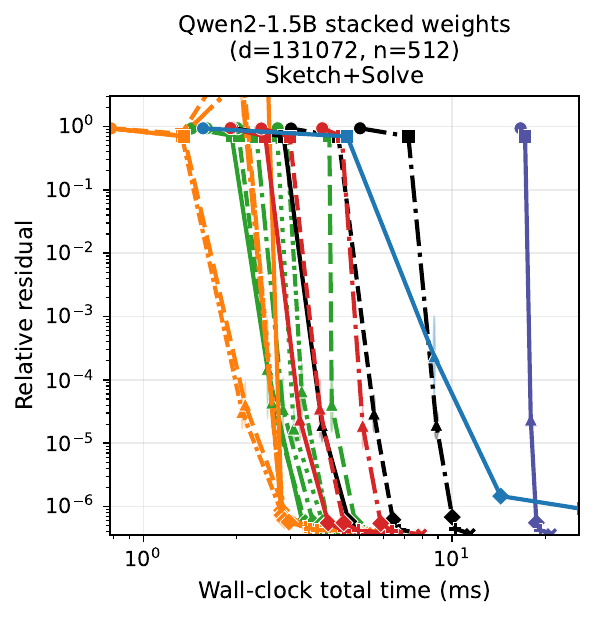}
  \caption{\textbf{Sketch-and-solve ablations.} Qwen2-1.5B stacked weights (d=131072, n=512). GPU: NVIDIA RTX A6000.}
  \label{fig:app_a6000_sketch_solve_qwen_qwen2_1_5b_qwen2_1p5b_weights_full_128k_512_131072x512}
\end{figure}

\begin{figure}[H]
  \centering
  \includegraphics[width=0.95\linewidth]{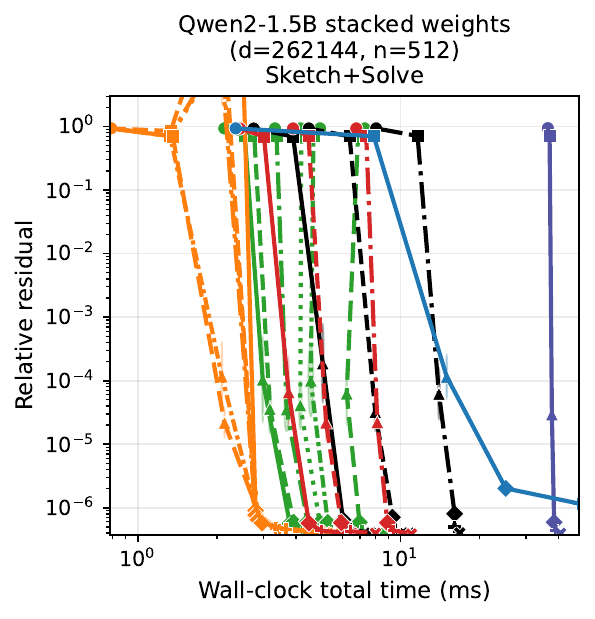}
  \caption{\textbf{Sketch-and-solve ablations.} Qwen2-1.5B stacked weights (d=262144, n=512). GPU: NVIDIA RTX A6000.}
  \label{fig:app_a6000_sketch_solve_qwen_qwen2_1_5b_qwen2_1p5b_weights_full_256k_512_262144x512}
\end{figure}

\begin{figure}[H]
  \centering
  \includegraphics[width=0.95\linewidth]{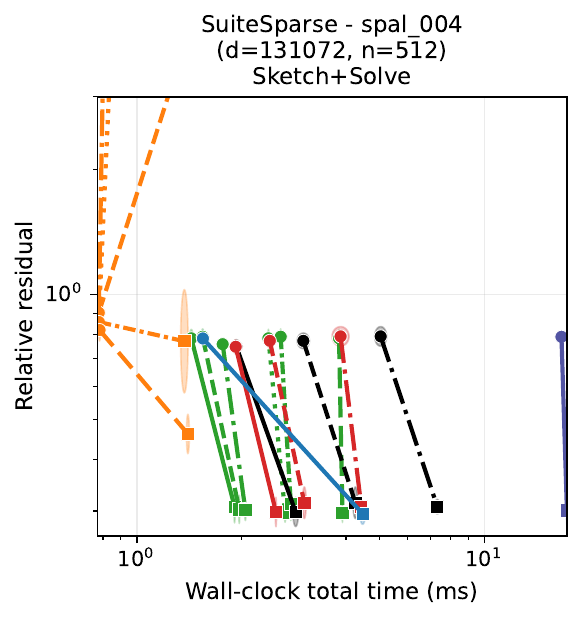}
  \caption{\textbf{Sketch-and-solve ablations.} SuiteSparse - spal\_004 (d=131072, n=512). GPU: NVIDIA RTX A6000.}
  \label{fig:app_a6000_sketch_solve_mittelmann_spal_004_131072x512}
\end{figure}

\begin{figure}[H]
  \centering
  \includegraphics[width=0.95\linewidth]{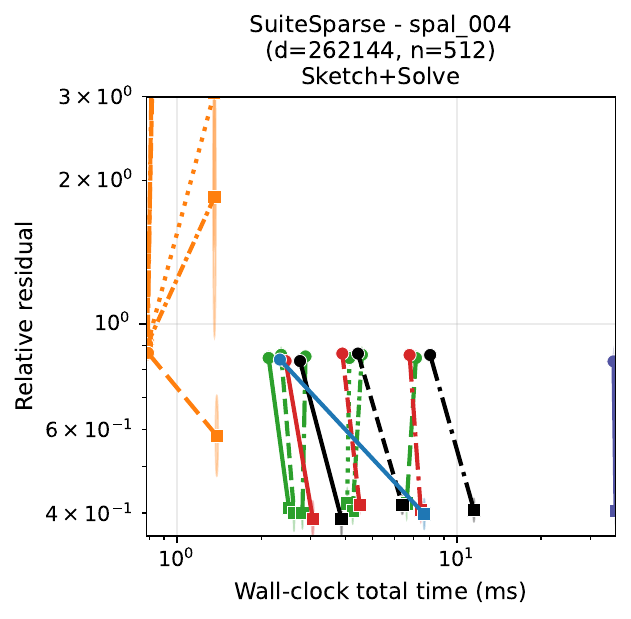}
  \caption{\textbf{Sketch-and-solve ablations.} SuiteSparse - spal\_004 (d=262144, n=512). GPU: NVIDIA RTX A6000.}
  \label{fig:app_a6000_sketch_solve_mittelmann_spal_004_262144x512}
\end{figure}

\clearpage